\documentclass[10pt]{article} % For LaTeX2e
\usepackage{tmlr}
\usepackage{amsmath,amsfonts,bm}

\def\eqref#1{equation~\ref{#1}}
\def\1{\bm{1}}

\DeclareMathAlphabet{\mathsfit}{\encodingdefault}{\sfdefault}{m}{sl}
\SetMathAlphabet{\mathsfit}{bold}{\encodingdefault}{\sfdefault}{bx}{n}

\usepackage{amsthm}  % loaded safely by tmlr

\newtheorem{theorem}{Theorem}
\newtheorem{lemma}[theorem]{Lemma}

\newtheorem{corollary}[theorem]{Corollary}
\theoremstyle{definition}
\newtheorem{definition}[theorem]{Definition}
\theoremstyle{remark}
\newtheorem{remark}[theorem]{Remark}
\theoremstyle{assumption}
\newtheorem{assumption}[theorem]{Assumption}

\usepackage{algorithm}
\usepackage{algorithmicx}
\usepackage{algpseudocode}
\usepackage{float}

\usepackage{amssymb}
\usepackage{amsmath}

\title{Regret-Guaranteed Safe Switching: LQR Setting
with Unknown Dynamics}

\author{
      \name Jafar Abbaszadeh Chekan\thanks{Corresponding author.} 
      \email jafar2@illinois.edu \\
      \addr Coordinated Science Lab, UIUC\\
      \AND
      \name S. Rasoul Etesami 
      \email etesami1@illinois.edu \\
      \addr Coordinated Science Lab, UIUC\\
      \AND
      \name Cedric Langbort 
      \email langbort@illinois.edu\\
      \addr Coordinated Science Lab, UIUC \\
}

\begin{document}

\maketitle

\begin{abstract}
We consider learning-based control in a switched Linear Quadratic Regulator (LQR) setting, where the parameters associated with each mode are \emph{a priori} unknown. The next mode to be activated is revealed online only at the time of switching. The objective is to determine both the switching times and the control gains for each mode such that (1) the norm of the system state remains bounded according to a prescribed criterion, and (2) the accumulated cost is minimized. To formalize the state-norm requirement, we introduce the notion of $(\alpha,\beta)$-controllability for given parameters $\alpha$ and $\beta$. A naive strategy would be to switch rapidly according to the revealed modes, applying an arbitrary stabilizing policy and remaining in each mode for at most one step. However, such a strategy may violate the $(\alpha,\beta)$-controllability condition. Consequently, a more careful design of both the switching times and the feedback controllers is necessary to satisfy the two objectives simultaneously. We first study the problem in a known model setting and show that, under the switching mechanism described above and under the assumption that each mode is visited infinitely often, the strategy that minimizes the average expected cost consists of applying, in each mode, the feedback gain obtained from the solution of the discrete algebraic Riccati equation (DARE), while selecting dwell times that sufficiently satisfy the controllability condition.  We refer to this strategy as the \emph{benchmark} policy. Next, we propose an algorithm for the unknown-model setting that minimizes the \emph{regret}, defined as the difference between the cumulative cost incurred by the online algorithm and that of the offline benchmark. By accurately estimating dwell-time errors, our method achieves an expected regret of $\mathcal{O}(|\mathcal{M}|^{1/4} n_s^{3/4} + n_m)$ up to logarithmic factors, where $n_s$ denotes the number of switches, $|\mathcal{M}|$ is the number of modes, and $n_m$ is the number of \emph{malignant} switches, defined as those that tend to induce more costly switches in the future. Our approach leverages the primal and dual semidefinite programming (SDP) formulations of the linear quadratic (LQ) control problem, relaxed through confidence ellipsoids constructed around the unknown system parameters. 
\end{abstract}

\section{Introduction}\label{Introduction}

Significant progress has been made in recent years in the development of learning-based and data-driven control techniques for both linear and nonlinear systems with unknown parameters~\cite{boffi2021regret, kakade2020information, chen2021black, dean2020sample, cohen2018online, mania2019certainty, cohen2019learning}. However, one class of systems for which learning-based control remains relatively underdeveloped is \emph{switched systems}—systems composed of multiple continuous-time modes, with transitions governed by a discrete switching protocol~\cite{zhu2015optimal, zhao2008stability, lin2009stability, liberzon1999stability}. The challenge in such systems arises from the need to handle both continuous dynamics and discrete mode transitions, making the learning problem considerably more complex.

An important subclass of switched systems operates under an \emph{externally forced switching} (EFS) protocol, in which the switching times, the sequence of modes, or both are determined by an external agent~\cite{zhu2015optimal}. In this work, we focus on a setting in which only the \emph{next mode} is revealed at the time of switching, and the system must determine \emph{when} to execute the switch as well as \emph{how} to control the system in the current mode. The former decision is particularly critical, since rapid switching is known to induce instability and lead to \emph{state explosion}, even when the individual closed-loop subsystems are stable. Specifically, we study this problem in the context of the switched linear quadratic regulator (LQR) setting, where prior knowledge of the system dynamics mode parameters is unavailable, while the mode cost parameters are known. Our objective is to develop an online control strategy that tracks the externally prescribed mode sequence while simultaneously: (1) stabilizing the system within each mode, (2) ensuring that the norm of the system state remains bounded—an objective we refer to as the \emph{safety of switching}, and (3) minimizing the cumulative control cost in a \emph{regret} sense. The safety requirement prevents frequent or impulsive switching, which may appear optimal from a cost perspective but can destabilize the system.

As a motivating application, one can consider a compelling problem in healthcare involving the design of an efficient treatment strategy for a disease with multiple available therapies (see, e.g., \cite{liang2025randomization, joshi2025ai} for some related intervention applications in healthcare). The patient’s physiological response can exhibit distinct dynamic behaviors depending on the treatment administered, which is generally unknown. Consequently, the system can be modeled as having different dynamics under different treatments. Each treatment aims to stabilize the patient’s health state while incurring a cost that reflects both its effectiveness and its associated resources or financial burden. Since the optimal sequence of treatments is not known a priori, the physician selects each subsequent treatment based on the patient’s observed response to the ongoing therapy. However, determining when to switch to a new treatment is critical, as medications may have side effects, and switching too early without allowing sufficient response time may lead to undesirable or unstable physiological conditions (e.g., elevated blood pressure). Therefore, the treatment strategy must ensure that such transitions are performed safely to avoid adverse events. In practice, the decision parameters involve determining both the appropriate dosage of the current medication and the timing for switching to the next treatment, as prescribed by the physician. It is also noteworthy that, in this setting, the cost function is known, as it is determined by the physician; however, the underlying dynamics corresponding to different treatments are unknown, which constitutes the main challenge in designing such a treatment strategy.

{We refer to the setting considered in this paper as ``unknown modes–unknown sequence,'' denoted by $\bar{M}\bar{S}$, in which the switching sequence is not known a priori, the system dynamics are unknown, and only noisy state measurements are available. To motivate our approach, we first introduce a related case that serves as a benchmark for our proposed strategy. We refer to this case as ``known modes–unknown sequence,'' denoted by $M\bar{S}$. This setting follows the same mode information disclosure protocol as our target setting and provides a meaningful foundation for defining the notion of regret. We show that even in the $M\bar{S}$ case, achieving performance guarantees requires more than just access to the next mode. This insight leads us to a practical baseline approach using a Discrete Algebric Riccati Equation (DARE) based control law with a mode-dependent dwell-time policy. We then extend this approach to the $\bar{M}\bar{S}$ setting by introducing a semidefinite programming (SDP)-based learning algorithm that jointly learns the feedback gain and minimum dwell time using only state observations. Our method achieves low regret relative to the $M\bar{S}$ benchmark and enforces switching safety entirely online.}

%\textcolor{red}{The paragraph above is a modification of the following paragraph, reflecting our agreement to exclude the $\bar{M}\bar{S}$ case.
%}

%To address these challenges, we introduce the ``unknown modes–unknown sequence'' setting, denoted $\bar{M}\bar{S}$, where the system dynamics are unknown and only noisy state measurements are available. We first analyze two related cases: (i) the fully known setting ($MS$), where both the modes and switching sequence are known, and (ii) a partially known setting ($M\bar{S}$), where modes are known but the sequence is not. The $M\bar{S}$ setting, which follows the same information disclosure protocol as our target setting, provides a meaningful benchmark for our algorithmic development. We show that even in the $M\bar{S}$ case, achieving performance guarantees requires more than just access to the next mode. This insight leads us to a practical baseline approach using a DARE-based control law with a mode-dependent dwell-time policy. We then extend this approach to the $\bar{M}\bar{S}$ setting by introducing a semidefinite programming (SDP)-based learning algorithm that jointly learns the feedback gain and minimum dwell time using only state observations. Our method achieves low regret relative to the $M\bar{S}$ benchmark and enforces switching safety entirely online.

\subsection{Related Work}

The problem of state explosion under rapid switching is well known in the switched systems literature~\cite{liberzon2003switching, hespanha1999stability, zhao2011stability}. A standard approach to mitigate this risk is to enforce a \emph{minimum dwell time}, i.e., a minimum duration that the system must remain in a mode before switching~\cite{lin2009stability, liberzon1999stability}. However, computing an appropriate dwell time typically requires full knowledge of the system dynamics. In the absence of such knowledge, dwell-time design must be carried out dynamically and based only on data. This requirement introduces a new layer of complexity to control design for switched systems with unknown modes and switching sequences.

Our setting departs from prior work in several ways: (i) the nature and extent of prior knowledge about the system modes and switching signals, (ii) whether the timing of mode switches is treated as a controllable design variable, and (iii) the performance criteria—specifically, whether the goal is merely stabilization or additional guarantees such as regret minimization. For example, the work of~\cite{kenanian2019data} addresses probabilistic stability of switched systems using random observations, without requiring knowledge of the mode dynamics or switching rule. In~\cite{rotulo2022online}, the authors propose a data-driven approach for unknown discrete-time linear systems with arbitrary switching. Their method adapts to changes in actuation modes using data alone, without explicit system identification. Similarly,~\cite{dai2018moments, dai2022convex} propose stabilizing controllers for linear switched systems using data obtained from persistently exciting experiments. These methods guarantee stability for arbitrary switching but do not provide regret performance guarantee.

In contrast, ~\cite{du2022data} address a Markov Jump System (MJS) in an LQ setting, jointly learning mode dynamics and the transition matrix while minimizing regret. Related efforts in MJS learning and control also appear in~\cite{sayedana2023relative, sayedana2024strong}. Closer to our setting,~\cite{li2023online} consider regret minimization while switching between different policies. Their algorithm selects policies from a finite set using bandit-based techniques and applies a conservative dwell-time policy to ensure stability. This conservative design limits the achievable performance.

In our prior work~\cite{chekan2024learn}, we addressed switching in over-actuated systems using a projection-based learning method but did not handle regret guarantees. The current work builds on our recently introduced SDP-based learning algorithm, ARSLO~\cite{chekan2024any}, which enables online control with provable regret in unknown LQR settings.

\subsection{Contributions}

Our main contributions are as follows:

\begin{itemize}
    \item We propose a computationally efficient algorithm to jointly design feedback controllers and estimate mode-dependent dwell times in switched systems with unknown dynamics.
    
    \item We provide upper bounds on the error between the estimated dwell time and the (unknown) true dwell time, showing that our estimates are statistically reliable.
    
    \item We prove that our proposed algorithm achieves an expected regret of $\mathcal{O}(|\mathcal{M}|^{1/4} n_s^{3/4} + n_m)$ relative to the benchmark case where all mode parameters are known, where $n_s$ denotes the number of switches, $|\mathcal{M}|$ is the number of modes, and $n_m$ is the number of malignant switches.
\end{itemize}

At a technical level, our approach is rooted in model-based reinforcement learning techniques that incorporate system identification via confidence sets. In particular, we leverage our previously proposed ARSLO algorithm~\cite{chekan2024any}, extending it to the switched system setting with unknown modes and switching sequences.

\subsection{Organization and Notations}

The remainder of the paper is organized as follows. Section~\ref{sec:probStat} presents the problem statement. In Section~\ref{eq:knownsetting}, we discuss the solution for the case where all system parameters are known. In Section~\ref{sec:prelim}, we present preliminary results on confidence set construction (system identification) and a relaxation of the standard LQR-SDP formulation for joint control and dwell-time design, which are then used to develop our main online algorithm. Section~\ref{sec:solutionP} presents the algorithm and its theoretical foundation, including data-driven dwell-time design. Section~\ref{sec:theoreticalg} provides the theoretical guarantees, including a regret bound analysis of the proposed algorithm. Finally, Section~\ref{sec:conclusion} concludes the paper and outlines directions for future research. Omitted proofs and additional results are provided in Appendix~\ref{partD}.

\noindent
{\bf Notations:} The trace norm of a matrix $M$ is defined as $\|M\|_* = \operatorname{Tr}(\sqrt{M^\top M})$. $\|M\|_F=\sqrt{\operatorname{tr} (M^\top M)}$ is the Frobenius norm and $\|M\|$ is the spectral (operator) norm. Additionally, $\underline{\lambda}(M)$ and $\overline{\lambda}(M)$ denote the minimum and maximum eigenvalues of $M$, respectively. For a set $\mathcal{S}$, $|\mathcal{S}|$ denotes its cardinality. We use $A \bullet B$ to represent the element-wise dot product of matrices, defined as $A \bullet B = \operatorname{Tr}(A^\top B)$. Finally, we write \(f \lesssim g\) to denote that \(f(x) \leq C\, g(x)\) for some universal constant \(C > 0\).

\section{Problem Formulation} \label{sec:probStat}

Consider a time-invariant switched LQR system defined by:
\begin{align}
x_{t+1} &= A^{\sigma(t)}_{*} x_t + B^{\sigma(t)}_{*} u_t + \omega_{t+1}, \label{eq:dyn_atttt} \\
c^{\sigma(t)}_{t} &= x_t^\top Q^{\sigma(t)} x_t + u_t^\top R^{\sigma(t)} u_t, \label{eq:obs}
\end{align}
where the switching signal $\sigma: \{0\} \cup \mathbb{N} \rightarrow \mathcal{M}$ is a right-continuous, piecewise-constant function that specifies which mode is active at each time step $t$, where $\mathcal{M}$ is a finite index set representing all possible modes. The state and control input satisfy $x_t\in\mathbb{R}^{d_x^{\sigma(t)}}$ and $u_t\in\mathbb{R}^{d_u^{\sigma(t)}}$, respectively. For notational simplicity, we assume throughout that all modes share a common state dimension $d_x$, while the input dimension $d_u^{\sigma(t)}$ may depend on the active mode; the extension to mode-dependent state dimensions is straightforward.
 For each mode $i \in \mathcal{M}$, the matrices $A^i_*$ and $B^i_*$ represent the true (but initially unknown) system dynamics. The matrices $Q^i$ and $R^i$ are known, with $Q^i$ positive semi-definite and $R^i$ positive definite, representing the state and control cost weights for mode $i$, respectively. Moreover, $\omega_{t+1}$ denotes the process noise, which is assumed to satisfy the following standard assumption \cite{abbasi2011regret, cohen2019learning, lale2022reinforcement}: 
\begin{assumption}
\label{Assumption 1}
There exists a filtration $\mathcal{F}_{t}$ such that

$(1.1)$ $\omega_{t+1}$ is a martingale difference, i.e., $\mathbb{E}[\omega_{t+1}|\;\mathcal{F}_{t}]=0.$

$(1.2)$ $\mathbb{E}[\omega_{t+1}\omega_{t+1}^\top|\;\mathcal{F}_{t}]={\sigma}_{\omega}^2I_{n}=:W$ for some ${\sigma}_{\omega}^2>0.$

% $(1.3)$ $\omega_{t+1}$ are component-wise sub-Gaussian, i.e., there exists $\sigma_{\omega}\!>\!0$ such that for any $\gamma \in \mathbb{R}$ and $j=1,\ldots,n,$
% \begin{align*}
% \mathbb{E}[e^{\gamma(\omega_{t+1})_{j}}|\;\mathcal{F}_{t}]\leq e^{\gamma^2\sigma_{\omega}^2/2}.
% \end{align*}
\end{assumption}

Let $\mathcal{I} = \{i_0, i_1, \ldots, i_n\}$ denote the sequence of modes through which the system evolves. This sequence is \emph{not} known in advance and is revealed sequentially; that is, only the next mode to switch to, denoted by $i_{k+1} \in \mathcal{M}$, becomes known after the current switch to mode $i_k \in \mathcal{M}$ has occurred. In particular, the termination of the sequence is uncertain and is announced only after the final mode $i_n$ is revealed. In conclusion, after each switch, either the next mode or the termination of the sequence is announced. We refer to the time interval between two consecutive switches as an \emph{epoch}. The main objective in this work is to design an algorithm that ensures actuation according to a sequence of switches $\mathcal{I}$ while minimizing the cumulative cost and maintaining control over the expected growth of the state norm.

\begin{definition}
For a sequence of modes $\mathcal{I} = \{i_0, \ldots, i_n\}$ through which the system evolves, we let $t_{k+1}$ be the time instance at which the system switches from mode $i_k$ to mode $i_{k+1}$. Moreover, we let $\tau_{k,k+1}\!=t_{k+1}-t_k$ and $t^{-}_{k+1}=t_{k+1}-1$. Since the switching signals are right-continuous, we have $\sigma(t^-_{k+1})=i_k$ and $\sigma(t_{k+1})=i_{k+1}$.     
\end{definition}

\begin{definition} \label{def:underControl}
Let $0 < \bar{\alpha} < 1$ and $\bar{\beta} > 0$. A sequence of switching modes $\mathcal{I} = \{i_0, i_1, \ldots, i_n\}$ is called \emph{$(\bar{\alpha}, \bar{\beta})$-controllable} if, for any $k$, the expected state norm growth satisfies\footnote{Here, $\bar{\alpha}$ is a user-defined parameter, while $\bar{\beta}$ depends on the specific policy class used for control design and on the cost function associated with each mode (see, Lemma \ref{lem:betaDef}).}
\begin{align}
&\mathbb{E}[x^{\top}_{t_{k+1}} x_{t_{k+1}}| \mathcal{F}_{t^-_{k}}] \leq \bar{\alpha}\; \mathbb{E}[x_{t_k}^\top x_{t_k}| \mathcal{F}_{t_k^-}] + \bar{\beta}\;\sigma_{\omega}^2 .\label{eq:stategrowthpp}
\end{align}
\end{definition}

For any $i\in \mathcal{M}$, let us introduce the notation ${\Theta_*^{i}} = (A_*^i, B_*^i)^\top$. Given an active mode $\sigma(t) = i$ at time $t$, we can equivalently express~(\ref{eq:dyn_atttt}) as
\begin{align}
	x_{t+1} = {\Theta_*^{i}}^\top z_t + \omega_{t+1}, 
	\quad z_t = \begin{pmatrix} x_t \\ u_t \end{pmatrix}, 
	\label{eq:dynam_by_theta}
\end{align}
This compact representation will be used frequently throughout the paper to describe the system dynamics.

\begin{definition}[Strong stability] \label{def:sequentially}
Consider the linear time-invariant system in~(\ref{eq:dyn_atttt}).  
The closed-loop matrix $A_* + B_* K$ is said to be \emph{$(\kappa, \gamma)$–strongly stable} for some $\kappa > 0$ and $0 < \gamma < 1$ if there exist matrices $H \succ I$ and $L$ such that
\[
A_* + B_* K = H L H^{-1},
\]
and the following conditions hold:
\begin{enumerate}
    \item $\|L\| \leq 1 - \gamma$ and $\|K\| \leq \kappa$;
    \item $\|H\|\|H^{-1}\|\leq \kappa$.
\end{enumerate}
Furthermore, we say control gains $K$ is $(\kappa, \gamma)-$strongly stabilizing for a plant $(A_*, B_*)$ if $A_*+B_*K$ is $(\kappa, \gamma)-$strongly stable.
\end{definition}
 For control design purposes, at each mode $i\in \mathcal{M}$, we restrict the control policy to belong to the compact set $\mathcal{S}(\Theta_*^i)$ specified by constants $\kappa_c^i > 0$ and $0 < \gamma_c^i < 1$, which define stabilizing policies $K_i$ for the system parameterized by $\Theta_*^i$ as follows:
\begin{align}\label{eq:policyClass}
\mathcal{S}(\Theta_*^i) = \big\{ K_i \in \mathbb{R}^{d_u^i \times d_x^i} \mid A_*^i + B_*^i K_i \quad \textit{is}\quad (\kappa_c^i,\gamma_c^i)-\textit{strongly stable}  \big\}. 
\end{align}
In other words, when the system is in mode $\sigma(t) = i$ at time $t$, we assume the control input is given by the state feedback law $u_t = K_i x_t$, where $K_i \in \mathcal{S}(\Theta_*^i)$. Given the policy class defined in~(\ref{eq:policyClass}), Lemma \ref{lem:betaDef} provides an explicit characterization of the parameter $\bar{\beta}$ introduced in Definition~\ref{def:underControl}.

To lay the foundation for developing an algorithm that minimizes the cumulative cost of the switching dynamics while maintaining control over the expected growth of the state norm under the general Unknown-Modes and Unknown-Sequence ($\bar{M}\bar{S}$) setting, it is crucial to first thoroughly examine the problem in the simpler Known-Modes and Unknown-Sequence ($M\bar{S}$) setting. This analysis not only deepens our understanding of the problem but also establishes a benchmark for defining regret, whose upper bound serves as a metric to evaluate the performance of our proposed online algorithm. 

We define the \emph{regret} of our proposed algorithm for the $\bar{M}\bar{S}$ setting with respect to a benchmark algorithm for the $M\bar{S}$ setting. More precisely, let the cumulative cost of an online algorithm under the $\bar{M}\bar{S}$ setting be denoted by $\mathcal{C}_{\mathcal{I}}^{\bar{M}\bar{S}}$, and let $\mathcal{C}_{\mathcal{I}}^{M\bar{S}}$ denote the cumulative cost of a benchmark algorithm (to be defined shortly) under the $M\bar{S}$ setting. Then, the \emph{regret} of the online algorithm in the $\bar{M}\bar{S}$ setting is defined as
\begin{align}
R(\mathcal{I}) = \mathcal{C}_{\mathcal{I}}^{\bar{M}\bar{S}}- \mathcal{C}_{\mathcal{I}}^{M\bar{S}}.
\label{eq:regdefn}
\end{align}
In the following sections, we formally present the strategies of the online algorithm and the benchmark, together with their underlying rationale and technical development. The overarching goal of this paper is to design an online algorithm that minimizes the regret defined in~(\ref{eq:regdefn}), while adapting to the switching sequence revealed sequentially and ensuring that the resulting sequence remains $(\bar{\alpha}, \bar{\beta})$-controllable in the sense of Definition~\ref{def:underControl}.

\begin{remark}
To minimize accumulated cost, fast switching may seem rational. However, even when the mode characteristics are known, it is well established that rapid switching between modes can lead to an explosion in the system state norm, jeopardizing global stability—even if each mode is stable individually \cite{zhu2015optimal, zhao2008stability, lin2009stability, liberzon1999stability}. Accordingly, fast switching can violate state norm growth control as defined in Definition~\ref{def:underControl}. This issue can be mitigated by ensuring that the system remains in each mode for a computable minimum duration. Hence, the key challenge in addressing this problem lies in the simultaneous design of control policies to be applied during different epochs and determining the appropriate switching times. 
\end{remark}

% \begin{align*}
%     C^{i}_{\mathcal{I}}(t) = \sum_{k=0}^t c\big(\sigma_{i}(k), u_k(i)\big), \quad i \in \{m\bar{s}, \; \bar{m}\bar{s}\},
% \end{align*}
% where the instantaneous cost depends on the control input $u_k(i)$ and the switching signal $\sigma_i(k)$, which follows the revealed mode sequence:
% \begin{align*}
%     c\big(\sigma_{i}(k), u_k(i)\big) = x^{\top}_k(i) Q^{\sigma_i(k)} x_k(i) + u^{\top}_k(i) R^{\sigma_i(k)} u_k(i).
% \end{align*}
% The state $x_k(i)$ for the corresponding setting evolves according to
% \begin{align*}
%     x_{k+1}(i) = {\Theta_*^{\sigma_i(k)}}^\top z_k(i) + \omega_{k+1}, 
%     \quad z_k(i) = \begin{pmatrix} x_k(i) \\ u_k(i) \end{pmatrix}.
% \end{align*}
% We note that $\sigma_i(k)$ for both settings, $m\bar{s}$ and $\bar{m}\bar{s}$, follows the revealed switching sequences; the difference lies in the timing of the switches, which may vary between the two strategies.

% Note that the notation specifying the setting is introduced here primarily to simplify the representation of the regret definition. In the remainder of the paper, we will adhere to the previously established notation, with the specific setting under analysis being clear from the context.

\section{Known-Modes and Unknown-Sequence ($M\bar{S}$) Setup} \label{eq:knownsetting}

In this section, we develop a control strategy for the case in which the system mode dynamics parameters are known, while the switching sequence follows the revealed-sequence mechanism described earlier. We refer to this benchmark configuration as the $M\bar{S}$ setting.

Before presenting the corresponding strategy, we first analyze the problem under the assumption that the switching sequence $\mathcal{I}=\{i_0,i_1,\ldots,i_n\}$ is known \emph{a priori}. The decision variables for this case are defined as follows: the set $\mathcal{T}_0 = \{t_0, t_1, \ldots, t_n\}$, representing the switching times, and the set of control inputs $\mathcal{K}_0 = \{K_t^{\sigma(t)}\}_{t=0}^{T}$, representing the linear feedback control gain applied at each time step. The problem can then be formulated as follows:

\textbf{Problem $0$}
\vspace{-0.2cm}
\begin{align}
&(\mathcal{T}^*_0, \mathcal{K}^*_0\operatorname*)\in{\arg\min}_{\mathcal{K}_0, \mathcal{T}_0}  \sum _{t=0}^{T} c_{t}^{\sigma(t)}\\
  &  x_{t+1} =A^{\sigma(t)}_{*}x_{t} + B^{\sigma(t) }_{*}u_{t}+\omega_{t+1} \label{eq:DynModeg}\\
  & u_t=K_t^{\sigma(t)}x_t \label{eq:feedbackGain}\\
 &  \sigma(t)=i_k\; \text{for}\; t_k\leq t\leq t^-_{k+1}, \quad \sigma (T)=i_n \label{eq:indexSwitchId}\\
 & \mathbb{E}[x_{t_{k+1}}^\top  x_{t_{k+1}}| \mathcal{F}_{t^-_{k+1}}] \leq \bar{\alpha}\; \mathbb{E}[x_{t_k}^\top x_{t_k}| \mathcal{F}_{t_{k}^-}] + \bar{\beta}\; \sigma_{\omega}^2 \label{eq:StateBoundedness}\\
 &\tau_{k,k+1}:=t_{k+1}-t_k\geq 1. \label{eq:condSta}
\end{align}

The constraint (\ref{eq:DynModeg}) corresponds to the continuous component of the dynamics, describing the system's evolution within each mode. The constraint (\ref{eq:feedbackGain}) expresses that the control input is a static state feedback of the system.
The constraint (\ref{eq:indexSwitchId}) is a well-posedness condition, ensuring that the discrete part of the dynamics follows the revealed sequence and thus determines which mode is active over a given interval. The constraint (\ref{eq:StateBoundedness}) is imposed to guarantee that the sequence $\mathcal{I}$ is $(\bar{\alpha},\bar{\beta})$-controllable (Definition \ref{def:underControl}). This serves as a safety constraint on mode switching, regulating the growth of the state norm that may result from rapid switching. Finally, the constraint (\ref{eq:condSta}) ensures that the system remains in each mode for at least one time step.

\begin{remark} \label{rem:sumswi}
Let $\sigma(t)$ be defined as in~(\ref{eq:indexSwitchId}) for the switching mode sequence $\mathcal{I}$. Then, the objective cost in Problem 0 can be expressed equivalently in terms of switching times $t_k$ as
\begin{align}\nonumber
\sum_{t=0}^{T} c_t^{\sigma(t)}
=
\sum_{k=0}^{n} \sum_{j=t_k}^{t_{k+1}-1} c_j^{i_k}.
\end{align}
\end{remark}

We note that solving Problem~0, which assumes full knowledge of the switching sequence $\mathcal{I}$, remains computationally challenging for several reasons: (i) The constraint~(\ref{eq:StateBoundedness}), i.e., the boundedness of the expected state norm as defined in Definition~\ref{def:underControl}, does not explicitly depend on the design variables $\mathcal{K}_0$ and $\mathcal{T}_0$, thereby introducing an additional layer of complexity to Problem~0. (ii) The set of stabilizing policies $\mathcal{S}(\Theta_*^{i})$ is generally nonconvex, as demonstrated through a counterexample in~\cite{fazel2018global}. (iii) The problem is computationally hard due to the combinatorial nature of the switching decisions, since switches may occur in many possible combinations, rendering combinatorial optimization in control settings generally intractable. In particular,~\cite{gurvits2009np} studies the problem of determining whether polynomial-time algorithms exist for checking the stability of all convex combinations of matrices, highlighting the NP-hardness of this class of problems. 

Aside from its computational complexity, Problem~0 is a very strong benchmark for defining regret, since it assumes full access to the switching sequence $\mathcal{I}$, making it an unrealistically ideal baseline for comparison with algorithms that possess only limited information about $\mathcal{I}$. Under the switching mechanism considered here, after each switch either the next mode index or the termination of the sequence is revealed. Consequently, the problem should be solved in an epoch-by-epoch manner, since the past is fixed while the future remains entirely adversarial. As we will show later, explicitly incorporating this switching mechanism into the problem formulation helps circumvent challenge~(iii).

In this work, we assume that the control policy remains fixed within each epoch. The justification for this assumption will be discussed later. One immediate advantage of this assumption concerns the $(\bar{\alpha}, \bar{\beta})$-controllability requirement. In particular, we provide below a sufficient condition for this requirement that translates it into a lower bound on the dwell time, i.e., the minimum amount of time that must elapse before switching to the next mode. This lower bound depends on the fixed policy applied during the epoch operating in mode $i_k$ and the subsequent mode $i_{k+1}$ to which the system switches. Importantly, this condition is useful for mitigating complexity~(i), since it depends explicitly on the decision variables $t_{k+1}$ and the feedback policies. The following theorem summarizes this result.

\begin{theorem} \label{thm:alfabet_K_S}
    Let $t_k$ and $t_{k+1}$ denote the switching times corresponding to two consecutive modes $i_k$ and $i_{k+1}$ in the switching sequence $\mathcal{I}$. Furthermore, let the policies applied during the corresponding epochs be $K^{i_k}$ and $K^{i_{k+1}}$, respectively. Then, a sufficient condition for~(\ref{eq:StateBoundedness}) is given as follows:
      \begin{align}
        &\tau_{k,k+1}=\bigg\{1, -\frac{\ln \bar{\rho}(K^{i_k})+\ln \bar{\mathcal{X}}(K^{i_{k+1}})-\ln \mathcal{\bar{\alpha}}}{\ln \big(1-\bar{\eta} \big(K^{i_k})\big)} \bigg\} \label{eq:dwellFix0}
    \end{align}
  where 
 \begin{align}
	&\nonumber \bar{\eta} \big(K^{i_k}) := \frac{\underline{\lambda}\big(H_{K^{i_k}}\big)}{\overline{\lambda}\big(P_{K^{i_k}}\big)}, \quad \bar{\rho}(K^{i_k}):=\frac{\overline{\lambda}\big(P_{K^{i_{k}}}\big)}{\underline{\lambda}\big(P_{K^{i_k}})},\\
& \bar{\mathcal{X}}(K^{i_{k+1}}):= \frac{\overline{\lambda}\big(P_{K^{i_{k+1}}}\big)}{\underline{\lambda}\big(P_{K^{i_{k+1}}})}, \quad H_{K^{i_k}}=Q^{i_k}+K^{} R^{i_k} K^{i_k},	\label{eq:operatorsDef}
\end{align}  
where $P_{K^j}$ is a unique solution of 
 \begin{align}
     P_{K^j}=Q^j+{K^j}^\top R^jK^j+(A^j_*+B^j_*K^j)^\top P_{K^j}(A^j_*+B^j_*K^j). \label{eq:lyapunovEq}
 \end{align}
\end{theorem}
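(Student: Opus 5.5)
We will prove the bound through a Lyapunov-function argument. The value matrix $P_{K^{i_k}}$ from (\ref{eq:lyapunovEq}) serves as the Lyapunov function inside epoch $k$. The ratios $\bar\rho$ and $\bar{\mathcal X}$ then convert between $\|x\|^2$ and the $P$-weighted norms at the two endpoints of the epoch. The dwell-time expression in (\ref{eq:dwellFix0}) should be read as the maximum of $1$ and the logarithmic term, rounded up to an integer.

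\textbf{Step 1 (one-step contraction).} Fix the epoch $t_k\le t< t_{k+1}$ and write $A_c=A^{i_k}_*+B^{i_k}_*K^{i_k}$ and $P=P_{K^{i_k}}$. Let $V_t=x_t^\top P x_t$. Using Assumption~\ref{Assumption 1} and (\ref{eq:lyapunovEq}),
\[
\mathbb E[V_{t+1}\mid\mathcal F_t]=x_t^\top A_c^\top P A_c x_t+\sigma_\omega^2\operatorname{Tr}(P)=V_t-x_t^\top H_{K^{i_k}}x_t+\sigma_\omega^2\operatorname{Tr}(P).
\]
Next we use $x_t^\top H x_t\ge \underline\lambda(H)\|x_t\|^2\ge \bar\eta(K^{i_k})\,V_t$. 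This gives $\mathbb E[V_{t+1}\mid\mathcal F_t]\le (1-\bar\eta)V_t+\sigma_\omega^2\operatorname{Tr}(P)$. We must also check that $\bar\eta<1$. This holds because $P\succeq H$ by (\ref{eq:lyapunovEq}), so $\overline\lambda(P)\ge\overline\lambda(H)\ge\underline\lambda(H)$. If $\bar\eta=1$ the bound is trivial.

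\textbf{Step 2 (iterate over the epoch).} Apply the tower property over $\tau=\tau_{k,k+1}$ steps. This gives
\[
\mathbb E[V_{t_{k+1}}\mid\mathcal F_{t_k^-}]\le(1-\bar\eta)^\tau\,\mathbb E[V_{t_k}\mid \mathcal F_{t_k^-}]+\sigma_\omega^2\operatorname{Tr}(P)/\bar\eta.
\]
Converting back to Euclidean norms, $\underline\lambda(P)\,\|x_{t_{k+1}}\|^2\le V_{t_{k+1}}$ and $V_{t_k}\le\overline\lambda(P)\,\|x_{t_k}\|^2$. Hence
\[
\mathbb E[\|x_{t_{k+1}}\|^2]\le \bar\rho(K^{i_k})(1-\bar\eta)^\tau\,\mathbb E[\|x_{t_k}\|^2]+\sigma_\omega^2\,\operatorname{Tr}(P)/(\bar\eta\,\underline\lambda(P)).
\]

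\textbf{Step 3 (accounting for the next mode).} The factor $\bar{\mathcal X}(K^{i_{k+1}})$ appears because the safety requirement is meant to hold uniformly when the bound is passed into the next epoch, where the Lyapunov function changes to $P_{K^{i_{k+1}}}$. Starting the next epoch's argument with $V'=x^\top P_{K^{i_{k+1}}}x$ costs a condition-number factor $\bar{\mathcal X}(K^{i_{k+1}})$. The conservative sufficient condition is therefore
\[
\bar\rho(K^{i_k})\,\bar{\mathcal X}(K^{i_{k+1}})\,(1-\bar\eta)^{\tau}\le\bar\alpha .
\]
Since $\bar{\mathcal X}\ge1$, this implies the condition obtained in Step 2 without the factor. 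Taking logarithms, and noting that $\ln(1-\bar\eta)<0$ flips the inequality, yields exactly the lower bound on $\tau$ in (\ref{eq:dwellFix0}). Combined with constraint (\ref{eq:condSta}), this explains the maximum with $1$.

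\textbf{Step 4 (noise constant).} The additive term is $\sigma_\omega^2\operatorname{Tr}(P)/(\bar\eta\,\underline\lambda(P))$, possibly multiplied by $\bar{\mathcal X}$. Over the compact policy class (\ref{eq:policyClass}) it is bounded uniformly in terms of $\kappa_c^i$, $\gamma_c^i$, $Q^i$ and $R^i$, and this uniform bound is taken as $\bar\beta$, as in Lemma~\ref{lem:betaDef}.

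\textbf{Main obstacle.} The key difficulty is the conditioning mismatch in Definition~\ref{def:underControl}. The left side of (\ref{eq:StateBoundedness}) conditions on $\mathcal F_{t^-_{k+1}}$, or on $\mathcal F_{t_k^-}$ in the definition, while the right side conditions on $\mathcal F_{t_k^-}$. I would handle this by taking the expectation conditional on $\mathcal F_{t_k^-}$ throughout, or by reading the statement with the consistent filtration. I would also show that the $\bar{\mathcal X}$ factor only strengthens the sufficient condition.
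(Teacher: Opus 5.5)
Your proof is correct and follows essentially the same route as the paper. It uses the Lyapunov function $x_t^\top P_{K^{i_k}}x_t$ with one-step contraction factor $1-\bar\eta(K^{i_k})$, iterates over the epoch via the tower property, and converts back with Rayleigh--Ritz; the paper produces the factor $\bar{\mathcal X}(K^{i_{k+1}})$ by exactly the round trip through $x^\top P_{K^{i_{k+1}}}x$ at the switching instant that you describe, so it only makes the condition more conservative. Your noise term $\sigma_\omega^2\operatorname{Tr}(P_{K^{i_k}})$ is actually more accurate than the paper's $\overline{\lambda}(P_{K^{i_k}})\sigma_\omega^2$, which can be off by up to a factor $d_x$, but this affects only the constant $\bar\beta$, not the dwell-time condition.
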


Note that~(\ref{eq:lyapunovEq}) is commonly referred to as the Lyapunov equation for the closed-loop system, or equivalently, the policy evaluation Riccati equation. Moreover, for any stabilizing policy $K^j$, the corresponding solution $P_{K^j}$ is known to be a positive semidefinite (PSD) matrix.

\begin{lemma} \label{lem:betaDef}
Let for any modes $i\in \mathcal{M}$, the policy $k^i\in \mathcal{S}(\Theta_*^i)$. Then $\bar{\beta}$ in (\ref{eq:stategrowthpp}) is specified as follows
    \begin{align}
        \bar{\beta}:= \frac{\alpha^{*}_1}{\alpha^{*}_0} (1+\kappa_c^{*^2})\;\frac{\kappa_c^{*^6}}{\gamma_c^{*}}, \label{eq:betaDef}
    \end{align}
    where $\alpha^*_0$ and $\alpha^*_1$ are such that $\alpha^*_0 I\preceq Q^i, R^i \preceq \alpha^*_1 I$ for all $i\in \mathcal{M}$, is an $\bar{\alpha}$-dependent parameter,  $\kappa^*_c=\max_{i\in |\mathcal{M}|}\kappa^i_c$, and $\gamma^*_c=\max_{i\in |\mathcal{M}|}\gamma^i_c$. 
\end{lemma}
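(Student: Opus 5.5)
The plan is to track the Lyapunov function $V_t = x_t^\top P_{K^{i_k}} x_t$ across one epoch. The additive noise contributions accumulated there will be bounded by a geometric series. The quantity multiplying $\sigma_\omega^2$ is then bounded uniformly over the policy class $\mathcal{S}(\Theta_*^i)$, using the strong stability constants $(\kappa_c^i,\gamma_c^i)$.

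\textbf{Step 1: one-step contraction within the epoch.} Inside the epoch of mode $i_k$ we have $u_t = K^{i_k}x_t$. Write $L_k = A_*^{i_k}+B_*^{i_k}K^{i_k}$ and $P = P_{K^{i_k}}$. Using (\ref{eq:lyapunovEq}) and Assumption~\ref{Assumption 1}, we get
\begin{align*}
\mathbb{E}[x_{t+1}^\top P x_{t+1}\mid\mathcal{F}_t] = x_t^\top P x_t - x_t^\top H_{K^{i_k}} x_t + \sigma_\omega^2 \operatorname{Tr}(P).
\end{align*}
Since $x_t^\top H x_t \ge \bar\eta(K^{i_k})\, x_t^\top P x_t$, this gives $\mathbb{E}[V_{t+1}\mid\mathcal{F}_t] \le (1-\bar\eta)V_t + \sigma_\omega^2\operatorname{Tr}(P)$. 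Iterating over the $\tau = \tau_{k,k+1}$ steps and converting back to $x^\top x$ through the extreme eigenvalues of $P$ yields
\begin{align*}
\mathbb{E}[\|x_{t_{k+1}}\|^2\mid\mathcal{F}_{t_k^-}] \le \bar\rho(K^{i_k})(1-\bar\eta)^{\tau}\,\mathbb{E}[\|x_{t_k}\|^2\mid\mathcal{F}_{t_k^-}] + \frac{\operatorname{Tr}(P)}{\underline\lambda(P)\,\bar\eta}\,\sigma_\omega^2 .
\end{align*}
In this bound the factor $\bar{\mathcal X}(K^{i_{k+1}})$ from Theorem~\ref{thm:alfabet_K_S} is dropped. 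By the dwell-time choice of Theorem~\ref{thm:alfabet_K_S}, the first coefficient is at most $\bar\alpha$. If the conditioning is intended as $\mathcal{F}_{t_k^-}$ versus $\mathcal{F}_{t_{k+1}^-}$ in (\ref{eq:stategrowthpp}), we use the tower property to reconcile the two. What remains is to bound the noise coefficient $c_k := \operatorname{Tr}(P)/(\underline\lambda(P)\bar\eta)$ uniformly.

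\textbf{Step 2: uniform bounds from strong stability.} Take $K\in\mathcal{S}(\Theta_*^i)$ with $L_k = HLH^{-1}$. Then $\|L_k^j\| \le \kappa (1-\gamma)^j$, and since $\|K\|\le\kappa$ we have $\|H_K\| \le \alpha_1^*(1+\kappa^2)$. Because $P = \sum_j (L_k^j)^\top H_K L_k^j$, we obtain
\begin{align*}
\overline\lambda(P) \le \alpha_1^*(1+\kappa^2)\,\kappa^2 \sum_j (1-\gamma)^{2j} \le \alpha_1^*(1+\kappa^2)\frac{\kappa^2}{\gamma}.
\end{align*}
For the lower bounds, $P \succeq H_K \succeq Q^i \succeq \alpha_0^* I$, so $\underline\lambda(P) \ge \alpha_0^*$; likewise $\underline\lambda(H_K)\ge\alpha_0^*$. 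Hence $1/\bar\eta \le \overline\lambda(P)/\alpha_0^*$. Combining these,
\begin{align*}
c_k \le \frac{d_x\,\overline\lambda(P)^2}{{\alpha_0^*}^2}.
\end{align*}

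\textbf{Step 3 and main obstacle.} The difficulty is matching the exact form (\ref{eq:betaDef}): a single factor $\alpha_1^*/\alpha_0^*$, the factor $(1+\kappa^2)$, and the power $\kappa^6/\gamma$. The naive product of bounds gives squared factors of $\overline\lambda(P)$, so I would avoid squaring. Instead of measuring the stationary noise contribution through $P$, I would bound the accumulated noise directly in the Euclidean norm. That contribution is
\begin{align*}
\sigma_\omega^2\sum_{j<\tau}\|L_k^j\|_F^2 \lesssim \sigma_\omega^2\,\frac{\kappa^2}{\gamma}.
\end{align*}
The policy-dependent conversion factors that appear when the Lyapunov inequality is used only on the state part would then supply $(\alpha_1^*/\alpha_0^*)(1+\kappa^2)\kappa^4$. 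I would take maxima over modes, using $\kappa_c^*$ and the worst $\gamma$ (I note that the statement's $\max$ on $\gamma$ should arguably be a $\min$). The remaining work is bookkeeping of powers of $\kappa$, with dimension factors absorbed as in the paper's convention. I expect this bookkeeping, rather than any conceptual step, to be the delicate part.
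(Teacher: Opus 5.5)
Your Steps 1--2 are correct, but they do not prove the lemma, and Step 3 is only described, not carried out. Your own bounds combine to $c_k\le d_x\overline{\lambda}(P)^2/{\alpha_0^*}^2\le d_x(\alpha_1^*/\alpha_0^*)^2(1+\kappa^2)^2\kappa^4/\gamma^2$. This exceeds the target $\bar\beta$ by the factor $d_x\alpha_1^*(1+\kappa^2)/(\alpha_0^*\kappa^2\gamma)>1$, so nothing in the proposal yields (\ref{eq:stategrowthpp}) with the stated constant. Step 3 also misidentifies what is needed. No ``conversion factors'' have to produce $(\alpha_1^*/\alpha_0^*)(1+\kappa^2)\kappa^4$: the lemma only asks for an upper bound, so any smaller constant suffices.

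Carried out, your Step 3 idea closes the gap by itself. Unroll the epoch as $x_{t_{k+1}}=L_k^{\tau}x_{t_k}+\sum_{j=0}^{\tau-1}L_k^{j}\omega_{t_{k+1}-j}$. The martingale-difference and conditional-covariance assumptions kill all cross terms, so
\begin{align*}
\mathbb{E}\big[\|x_{t_{k+1}}\|^2\mid\mathcal{F}_{t_k}\big]=\|L_k^{\tau}x_{t_k}\|^2+\sigma_\omega^2\sum_{j=0}^{\tau-1}\|L_k^j\|_F^2 .
\end{align*}
For the first term, use the noiseless contraction $L_k^\top P L_k=P-H_{K^{i_k}}\preceq(1-\bar\eta)P$. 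It gives $\|L_k^{\tau}x\|^2\le\bar\rho(1-\bar\eta)^{\tau}\|x\|^2\le\bar\alpha\|x\|^2$ under the dwell time of Theorem~\ref{thm:alfabet_K_S}, since $\bar{\mathcal X}\ge 1$. The second term is at most $d_x\kappa^2\sigma_\omega^2\sum_j(1-\gamma)^{2j}\le d_x\sigma_\omega^2\kappa^2/\gamma$. Because $\kappa\ge\|H\|\|H^{-1}\|\ge 1$ and $\alpha_1^*\ge\alpha_0^*$, this is at most $d_x\bar\beta\,\sigma_\omega^2$, with $\gamma_c^*$ a minimum over modes as you note. The tower property then handles the conditioning.

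The paper stays on your Step 1--2 route. It bounds the noise coefficient $\bar{\mathcal X}\bar\rho/\bar\eta$ of (\ref{eq:ineqcrudbet}) factor by factor, obtaining $1/\bar\eta\le\frac{\alpha_1^*}{\alpha_0^*}(1+\kappa^2)\frac{\kappa^2}{\gamma}$ via Lemma~\ref{lem:komaki} exactly as in your Step 2. It gets $\kappa^6$ by asserting $\bar\rho(K),\bar{\mathcal X}(K)\le\kappa_c^2$. That assertion is exactly what you lacked to avoid the squared factors. However, it does not follow from Definition~\ref{def:sequentially} with an arbitrary similarity $H$: $A_*=0$, $K=0$, $Q=\operatorname{diag}(1,100)$ is strongly stable with $\kappa=1$, yet $\bar\rho(K)=100$. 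It holds only when $H$ can be taken proportional to $P_K^{1/2}$. So your reluctance was justified, and the unrolling argument is the sounder proof.

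Finally, the dimension factor cannot simply be absorbed. The paper loses it by bounding $\mathbb{E}[\omega^\top P\omega\mid\mathcal{F}_t]$ by $\overline{\lambda}(P)\sigma_\omega^2$ instead of $\sigma_\omega^2\operatorname{Tr}(P)$. Take two identical modes with $A_*^i=\epsilon I$, $K^i=0$, $Q^i=R^i=I$, and $\epsilon$ small. Then $\tau=1$ and $\bar\beta\approx 2$, but with $x_0=0$ the first epoch gives $\mathbb{E}\|x_{t_1}\|^2=d_x\sigma_\omega^2$. So $\bar\beta$ must grow with $d_x$.
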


% \textbf{Problem 1}
% \vspace{-0.2cm}
% \begin{align}
% &\operatorname*{min}_{t_{k+1},\; \{K_j^{\sigma(t)}\}_{j=t_k}^{t_{k+1}}}\;  \sum _{j=t_k}^{t_{k+1}-1} c_{j}^{\sigma(t)}\\
%   \nonumber  &\textbf{s.t.}\\
%   &  x_{t+1} =A^{\sigma(t)}_{*}x_{t} + B^{\sigma(t) }_{*}u_{t}+\omega_{t+1} \label{eq:DynModeg1}\\
%   & u_t=K_t^{\sigma(t)}x_t \label{eq:feedbackGain1}\\
%  &  \sigma(t)=i_k\; \text{for}\; t_k\leq t\leq t^-_{k+1}, \quad \sigma (t_{k+1})=i_{k+1} \label{eq:indexSwitchId1}\\
%  & \mathbb{E}[x_{t_{k+1}}^\top  x_{t_{k+1}}| \mathcal{F}_{t^-_{k+1}}] \leq \bar{\alpha}\; \mathbb{E}[x_{t_k}^\top x_{t_k}| \mathcal{F}_{t_{k}^-}] + \bar{\beta}\; \sigma_{\omega}^2 \label{eq:StateBoundedness1}\\
%  &\tau_{k,k+1}:=t_{k+1}-t_k\geq 1. \label{eq:condSta1}
% \end{align}

% The setting of Problem 1, owing to the specific mechanism of the switch index sequence revelation, does not suffer from combinatorial complexity. However, it still inherits the challenges (i) and (ii), namely the nonconvexity of the set of stabilizing policies $\mathcal{S}(\Theta_*^{i_k})$ and the lack of explicit dependence of the $(\bar{\alpha}, \bar{\beta})$-controllability condition on the decision variables.

The following program formulates the optimization problem for an epoch starting at time $t_k$, in which the decision variables are the switching time to the newly revealed mode $i_{k+1}$, denoted by $t_{k+1}$, and the feedback policies applied over the interval $t_k,\ldots,t_{k+1}-1$, denoted by $K^{j} \in \mathcal{S}(\Theta_*^{i_k}), j=t_k,\ldots,t_{k+1}-1$. In addition, the program incorporates the minimum dwell-time condition as a surrogate for the $(\bar{\alpha}, \bar{\beta})$-controllability.

\textbf{Problem 1}
\vspace{-0.2cm}
\begin{align}
&\operatorname*{min}_{t_{k+1},\; \{K^{i_k}, K^{i_{k+1}}\}}\;  \sum _{j=t_k}^{t_{k+1}-1} c_{j}^{\sigma(t)}\\
  &  x_{t+1} =A^{\sigma(t)}_{*}x_{t} + B^{\sigma(t) }_{*}u_{t}+\omega_{t+1} \label{eq:DynModeg2}\\
  & u_t=K^{\sigma(t)}x_t \label{eq:feedbackGain2}\\
 &  \sigma(t)=i_k\; \text{for}\; t_k\leq t\leq t^-_{k+1}, \quad \sigma (t_{k+1})=i_{k+1} \label{eq:indexSwitchId2}\\
 &t_{k+1}\geq t_k+\max \big\{1, \frac{\ln \bar{\rho}(K^{i_k}, K^{i_{k+1}})+\ln \bar{\mathcal{X}}(K^{i_k}, K^{i_{k+1}})-\ln \mathcal{\bar{\alpha}}}{\ln \big(1-\bar{\eta} \big(K^{i_k})\big)}  \big\}. \label{eq:Cont+condSta2}
\end{align}

Problem~1 no longer suffers from complication~(iii) discussed above, owing to the epoch-by-epoch optimization and planning induced by the specific switching mechanism. It also avoids complication~(i), thanks to the minimum dwell-time condition established in Theorem~\ref{thm:alfabet_K_S}. Despite this progress, an important issue remains. The formulation indicates that the optimal choice of the policy $K^{i_k}$ applied during the epoch starting in mode $i_k$ depends on the choice of the subsequent policy $K^{i_{k+1}}$. In turn, the latter policy depends on the policies associated with potentially forthcoming modes whose indices have not yet been revealed. This interdependence constitutes a fundamental source of complexity inherent to the problem and appears unavoidable under the given switch-index revealment mechanism. One possible way to mitigate this complication is to employ a fixed policy within each epoch that does not depend on the next mode and, in particular, does not rely on the policy associated with that mode. A natural and appealing candidate in this direction is the solution of the discrete-time algebraic Riccati equation (DARE). Such a choice may be interpreted as a greedy strategy, which is reasonable in the present setting. Moreover, this approach eliminates the need to solve optimization problems over the nonconvex stabilizing sets $\mathcal{S}(\Theta_*^{i})$.

Let $t_k^*$ denote the time at which system~(\ref{eq:dyn_atttt}) begins operating in mode $i_k$, and let $i_{k+1}$ denote the index of the next mode revealed by the switching mechanism. Based on the preceding reasoning, the control input applied during the corresponding epoch is given by $u_t = K(\Theta_*^{i_k})x_t$, where
\begin{align}
P(\Theta_*^{j}) &= Q^j + {A_*^j}^\top P(\Theta_*^{j}) A_*^j - {A_*^j}^\top P(\Theta_*^{j})\; B_*^{j}\; (R^j + {B_*^j}^\top P(\Theta_*^{j})\; B_*^j)^{-1} {B_*^j}^\top P(\Theta_*^{j}) A_*^j, \\
K(\Theta_*^{j}) &= -(R^j + {B^{j}_*}^\top P(\Theta_*^{j})\; B_*^j)^{-1} {B_*^j}^\top P(\Theta_*^{j}) A_*^j,
\end{align}
and the switching time decision, $t_{k+1}^*$, is computed as follows:
     \begin{align}
    t^*_{k+1}&=t^*_{k}+\tau^*_{k,\,{k+1}}\\
   \tau^*_{k,\,{k+1}}&= \max\bigg\{1, -\frac{\ln {\rho}_*^{{i_k}}+\ln \mathcal{X}_*^{i_{k+1}}-\ln \bar{\alpha}}{\ln \big(1-{\eta }_*^{i_k}\big)}\bigg\},\label{eq:tau_a_known} 
     \end{align}
     where
     \begin{align}
	\nonumber &\eta_*^{i_k} := \frac{\underline{\lambda}\big(H(\Theta_*^{i_k})\big)}{\overline{\lambda}\big(P({\Theta}^{i_k}_*)\big)}, \quad \rho_*^{i_k}:=\frac{\overline{\lambda}\big(P({\Theta}^{i_{k}}_*)\big)}{\underline{\lambda}\big(P({\Theta}^{i_k}_*)\big)}\\
\nonumber & 
\mathcal{X}_*^{i_{k+1}}:=\frac{\overline{\lambda}\big(P({\Theta}^{i_{k+1}}_*)\big)}{\underline{\lambda}\big(P({\Theta}^{i_{k+1}}_*)\big)},\; \textit{and} \quad H(\Theta_*^{i_k})=Q^{i_k}+ K^\top ({\Theta}^{i_k}_{*})R^{i_k} K({\Theta}^{i_k}_{*}).
\label{eq:HDefknown}
	\end{align}
The accumulated cost incurred under the proposed $M\bar{S}$ strategy during the $k$th epoch can be written as
\begin{align}
C_{i_k,i_{k+1}}
=
\sum_{j=t_k^*}^{t_{k+1}^*-1}
x_j^\top
\Big(
Q^{i_k}
+
K^\top(\Theta_*^{i_k})
R^{i_k}
K(\Theta_*^{i_k})
\Big)
x_j,
\end{align}
which clearly shows that the accumulated cost of an epoch depends on its duration $\tau_{k,k+1}^*$, which in turn depends not only on the current mode but also on the mode to which the system switches next. As a result, different switching realizations may lead to significantly different epoch lengths, thereby inducing additional cost and contributing to regret.

\section{Unknown-Modes and Unknown-Sequence ($\bar{M}\bar{S}$) Setup}
\label{sec:prelim}
In this section, we present an algorithm for solving the problem in the absence of knowledge of the system dynamics model. Before introducing the algorithm, we briefly review preliminary material that will be useful for its design. Although this section presents preliminaries for the standard LQR setting, we adapt the material to our problem by introducing additional notation to incorporate the mode index in the context of switched systems.

%The algorithm interacts with an oracle adapted from \cite{chekan2024any}. 

\subsection{Preliminaries for LQR Systems}\label{sec:pre}

\begin{assumption} \label{Ass_2}
We assume the existence of known constants $\alpha_0, \alpha_1, \vartheta, \nu > 0$ such that
\begin{align*}
\alpha_0 I \preceq Q, R \preceq \alpha_1 I,
\qquad
\|\Theta_*\| \leq \vartheta,
\qquad
\|J_*\| \leq \nu.
\end{align*}
Moreover, we assume the availability of an initial $(\kappa_0,\gamma_0)-$strongly stabilizing policy $K_0$.
\end{assumption}

Referring to Appendix $B_2$ of \cite{chekan2024any}, any stabilizing policy \(K\) is indeed \((\kappa, \gamma)\)-strongly stabilizing for some values of \(\kappa\) and \(\gamma\).

In the context of switched systems, the above definition is extended to all modes, where subscripts are used to denote the mode index (e.g., $\kappa_i$, $\gamma_i$).

\subsubsection*{Confidence Bound Construction}
\label{sec:ident}

Here, we outline the construction of a confidence ellipsoid around the unknown true parameters $\Theta_*$, while leaving detailed derivations to the supplementary material. Let $X_t$ and $Z_t$ denote matrices whose rows are $x_1^\top, \ldots, x_t^\top$ and $z_0^\top, \ldots, z_{t-1}^\top$, respectively. Using regularized least squares with a trace-based regularization term $\operatorname{Tr}\!\big((\Theta - \Theta_0)^\top (\Theta - \Theta_0)\big)$ for some prior $\Theta_0$, and a possibly time-varying regularization parameter $\lambda_t$, the estimator of $\Theta_*$ is given by
\begin{align}
    \hat{\Theta}_{t} = V_t^{-1}\!\left(Z_t^\top X_t + \lambda \Theta_0\right),
    \label{eq:LSE_Sol123}
\end{align}
where the covariance matrix is defined as $V_t = \lambda I + \sum_{s=0}^{t-1} z_s z_s^\top$.

For a prescribed $\epsilon^2$ satisfying $\operatorname{Tr}\!\big((\Theta_* - \Theta_0)^\top (\Theta_* - \Theta_0)\big) \le \epsilon^2$, the confidence ellipsoid ${\mathcal{C}}_t(\delta)$ is constructed as
\begin{align}
    {\mathcal{C}}_t(\delta)
    &:= \Big\{\Theta \in \mathbb{R}^{(n+m)\times n} \,\Big|\, 
    \operatorname{Tr}\!\big((\Theta - \hat{\Theta}_t)^\top V_t (\Theta - \hat{\Theta}_t)\big)
    \le r_t \Big\}, \label{eq:confSet1_tighterghfff2}\\
    r_t &= \Bigg( 
    \sigma_{\omega} \sqrt{2d_x \log \frac{n \det(V_t)}{\delta \det(\lambda I)}}
    + \sqrt{\lambda}\,\epsilon
    \Bigg)^{\!2}. 
    \label{radius_centralEl_realTime200}
\end{align}

Then, one can show that the ellipsoid (\ref{eq:confSet1_tighterghfff2}) contains $\Theta_*$ with probability at least $1 - \delta$ (see \cite{chekan2024any}). 

To adapt this framework to the context of our problem, we define $n_i(t)$ as the number of time steps during which the system has operated in mode $i$ up to time $t$. Furthermore, we extend the notation introduced above by appending a subscript $i$ to indicate the corresponding mode (e.g., $V^i_{n_i(t)}$, $\mathcal{C}^i_{n_i(t)}(\delta)$, etc.).

\subsubsection*{Learning-Based State Feedback Control Design}

\cite{cohen2019learning} employed primal and dual SDP formulations of the LQR problem and relaxed them using the confidence ellipsoid described above to propose an algorithm that designs the control policy solely from state measurements. However, their approach lacks an anytime regret guarantee and relies on an inefficient warm-up phase, rendering it unsuitable for our setting. Building on this SDP formulation, \cite{chekan2024any} later relaxed these restrictive assumptions and introduced an anytime variant that employs time-varying regularization together with carefully tuned input perturbations. In this section, we adapt their control design framework to our setting.

Given the parameter estimate in the form of confidence ellipsoid $\mathcal{C}_{n_i(t)}(\delta)$ (or simply $\mathcal{C}_{n_i(t)}$) defined by (\ref{eq:confSet1_tighterghfff2}), the relaxed primal SDP is formulated as follows:
\begin{align}\label{eq:RelaxedSDP}
\min & \qquad{\begin{pmatrix}
	Q^i & 0 \\
	0 & R^i
	\end{pmatrix}\bullet \Sigma}\cr
\textrm{s.t.} &\qquad \Sigma_{xx}\succeq {\hat{\Theta}^{i^\top}_{n_i(t)}} \Sigma\hat{\Theta}^i_{n_i(t)}+W-\mu_{n_i(t)}\big(\Sigma\bullet {{V}^{-1}_{n_i(t)}}\big)I,\cr
&\qquad\Sigma:= \begin{pmatrix}
\Sigma_{xx} & \Sigma_{xu} \\
\Sigma_{ux} & \Sigma_{uu}
\end{pmatrix}\succ 0,
\end{align}
where $\Sigma$ denotes the steady-state covariance matrix of the state-action pair $(x,u)$, partitioned to the blocks $\Sigma_{xx}, \Sigma_{uu}, \Sigma_{ux}, $, and $\Sigma_{xu}$. Moreover, $\mu_{n_i(t)}= r_{n_i(t)}+\sqrt{r_{n_i(t)}}\vartheta_i \|V_{n_i(t)}\|^{1/2}$, where the rationale for this choice is discussed in~\cite{chekan2024any}. Denoting the optimal solution of program~(\ref{eq:RelaxedSDP}) by $\Sigma(\mathcal{C}_{n_i(t)})$, the control signal obtained from the relaxed primal SDP~(\ref{eq:RelaxedSDP}), namely $u_t = K(\mathcal{C}_{n_i(t)})x_t$, is deterministic and linear in the state, where 
\begin{align}
K(\mathcal{C}_{n_i(t)})=\Sigma^i_{ux}(\mathcal{C}_{n_i(t)}){{\Sigma^i_{xx}}^{-1}(\mathcal{C}_{n_i(t)})}. \label{eq:obtPol}
\end{align}

The relaxed primal problem~(\ref{eq:RelaxedSDP}) is primarily used for control design. However, for stability analysis, minimum mode-dependent dwell-time design, and regret analysis, we require its dual program, given as follows:
\begin{align}
\begin{array}{rrclcl}
\max & \multicolumn{1}{l}{P \bullet W} \\
\textrm{s.t.}
&
\begin{pmatrix}
Q^i - P & 0 \\
0 & R^i
\end{pmatrix}
+
\hat{\Theta}^{i}_{n_i(t)}
P
\hat{\Theta}^{i^\top}_{n_i(t)}
\succeq
\mu^i_{n_i(t)}
\|P\|_*
V^{-1}_{n_i(t)},
\\
&
P \succeq 0.
\end{array}
\label{eq:RedSDP_DUAL}
\end{align}

We denote the optimal solution of the relaxed dual problem~(\ref{eq:RedSDP_DUAL}) by $P(\mathcal{C}_{n_i(t)})$. We refer the reader to~\cite{cohen2019learning, chekan2024any} for a detailed discussion of this relaxation.

\begin{algorithm}[H]
\caption{Safety Aware Switching (SAS) Algorithm}
\label{alg:OSL3}
\begin{algorithmic}[1]

\State \textbf{Inputs:} Current mode index $i_k$, switching time $t_k^a$, 
observed next mode index $i_{k+1}$, $\Theta_0^{i_k}$, $\Theta_0^{i_{k+1}}$, 
$\underline{\epsilon}(\bar{\kappa}^{i_k})$, 
$\underline{\epsilon}(\bar{\kappa}^{i_{k+1}})$

\State Use the confidence ellipsoids $\mathcal{C}_{n_{i_k}(t_k^a)}, 
\mathcal{C}_{n_{i_{k+1}}(t_k^a)} \in \Pi_{t_k^a}$

\State Solve the relaxed primal SDP (\ref{eq:RelaxedSDP}) for 
$\Sigma(\mathcal{C}_{n_{i_k}(t_k^a)})$, compute 
$K(\mathcal{C}_{n_{i_k}(t_k^a)})$ via (\ref{eq:obtPol}), 
and solve the dual problem to obtain 
$P(\mathcal{C}_{n_{i_k}(t_k^a)})$ via (\ref{eq:RedSDP_DUAL})

\State Solve the Lyapunov equation for the fallback policy $K_0^{i_k}$ 
to obtain $P_{K_0^{i_k}}$

\State Define the implemented controller as
\[
(\tilde{K}_{i_k}, \tilde{P}_{i_k})=
\begin{cases}
\big(K(\mathcal{C}_{n_{i_k}(t_k^a)}), P(\mathcal{C}_{n_{i_k}(t_k^a)})\big), & \text{if } (n_{i_k}(t_k^a)+\bar{c}_{i_k})^{\frac{1}{2}-\zeta}\geq \log \frac{2d_xt_k^a}{\delta}, \quad $\Comment{Learning-based policy (L)}$\\
(K_0^{i_k}, P_{K_0^{i_k}}), & \text{otherwise} \quad $\Comment{Fallback policy (F)}$.
\end{cases}
\]

\For{$t=t_k^a,t_k^a+1,\ldots,t_k^a+\tau_{k,k+1}^a-1$}

    \State Apply the control input
    \[
    u_t=\tilde{K}_{i_k}x_t+\eta_t^{i_k},
    \]
    where
    \[
    \eta_t^{i_k}\sim
    \mathcal{N}(0,\Gamma_t^{i_k}),\quad \Gamma_t^{i_k}=\frac{2\bar{\kappa}_*^2\bar{p}^i\sigma_{\omega}^2}
    {(n_{i_k}(t)+\bar{c}_i)^{\zeta}}.
    \]

\State Determine the policy for the next mode:

\[\tilde{P}_{i_{k+1}}=
\begin{cases}
P(\mathcal{C}_{i_{k+1}(t_k^a)}) , & \text{if } (n_{i_{k+1}}(t_k^a)+\bar{c}_{i_{k+1}})^{\frac{1}{2}-\zeta}\geq \log \frac{2nt}{\delta}, \quad $\Comment{Learning-based policy (L)}$\\
P_{K_0^{i_{k+1}}}, & \text{otherwise} \quad $\Comment{Fallback policy (F)}$.
\end{cases}\]

\State   Compute $\tau^a_{k,k+1}$ according to 
(\ref{eq:dwellAlg})-(\ref{eq:dwell_cases}) 
for either of the four possible scenarios: 
(L,L), (L,F), (F,L), and (F,F)

\State Update $\tau_{k,k+1}^a$

    \State Save $x_t$ and $u_t$
\EndFor

\State Update the confidence set for mode $i_k$ and store it as 
$\mathcal{C}_{n_{i_k}(t_{k+1}^a)}$ 
\end{algorithmic}
\end{algorithm}

% \begin{algorithm}[H]
% \caption{Safety Aware Switching (SAS) Algorithm \label{alg:OSL3}}
% \begin{algorithmic}[1]
% \STATE \textbf{Inputs:} Current mode index $i_k$, its switched time $t_k^a$ and received mode index $i_{k+1}$, $\Theta_0^{i_k}$, $\Theta_0^{i_{k+1}}$, $\underline{\epsilon}(\bar{\kappa}^{i_k})$ and $\underline{\epsilon}(\bar{\kappa}^{i_{k+1}})$
% \STATE Call $\mathcal{C}_{n_{i_k}(t_k^a)}, \mathcal{C}_{n_{i_{k+1}}(t_k^a)}\in \Pi_{t_k^a}$
% \STATE Solve relaxed primal SDP (\ref{eq:RelaxedSDP}) for $\Sigma(\mathcal{C}_{n_{i_k}(t_k^a)})$ and compute $K(\mathcal{C}_{n_{i_k}(t_k^a)})$ via (\ref{eq:obtPol}) 
% \STATE Solve relaxed dual SDP for $\mathcal{C}_{n_{i_k}(t_k^a)}$ and  $\mathcal{C}_{n_{i_{k+1}}(t_k^a)}$ to obtain $P(\mathcal{C}_{n_{i_k}(t_k^a)})$ and $P(\mathcal{C}_{n_{i_{k+1}}(t_k^a)})$, and then compute $\tau_{k,k+1}^a$
% \FOR{$t=t_k^a, t_k^a+1,...,t_k^a+\tau_{k,k+1}^a$}
%     \STATE Play $u_t=K(\mathcal{C}_{n_{i_k}(t_k^a)})x_t+\eta^{i_k}_t$ where $\eta^{i_k}_t\sim\mathcal{N}\big(0, \Gamma^{i_k}_{t}\big)$, with $\Gamma^{i_k}_t$, is given by (\ref{eq:Gamarslop}), with the superscript $i_k$ denoting the mode index.
%     \STATE Save $x_t$ and $u_t$
% \ENDFOR

% \STATE Update the confidence bound for mode $i_k$ and save as $\mathcal{C}_{n_{i_k}(t_{k+1}^a)}$.
% \end{algorithmic}
% \end{algorithm}
\section {Algorithm Design} \label{sec:solutionP}

In this section, we introduce the proposed online algorithm, referred to as the 
\emph{Safety-Aware Switching (SAS) algorithm}, which jointly designs feedback controllers and minimum mode-dependent dwell times when the dynamics of all subsystems are unknown and only noisy state measurements are available. The key challenge is to ensure safety in the sense of Definition \ref{def:underControl} while exploiting collected data to improve the controller performance. To this end, SAS combines an optimism-based learning procedure with a certification mechanism that determines whether a learned controller can be safely deployed or whether the algorithm should revert to a known stabilizing fallback controller.

We first introduce notation used throughout the algorithm description. Let $t_{k+1}^a$ denote the switching time determined by the algorithm, at which the system transitions from mode $i_k$ to mode $i_{k+1}$. The corresponding epoch length, or minimum mode-dependent dwell time, is given by $\tau_{k,k+1}^a=t_{k+1}^a-t_k^a$, where by definition, $t_0^a=t_0=0$.

The proposed algorithm builds upon the Optimism in the Face of Uncertainty (OFU) framework introduced in~\cite{chekan2024any} for LQ control with anytime regret guarantees. The algorithm employs an SDP-based design procedure to construct learning-based feedback gains and determine their associated dwell times whenever the learned controller can be certified as safe. Specifically, when the available confidence set is sufficiently informative, the primal and dual solutions of the relaxed SDP~(\ref{eq:RelaxedSDP}) are used to synthesize 
the feedback gain and compute a sufficient dwell time guaranteeing bounded state evolution in the sense of Definition~\ref{def:underControl}. When certification is not available, the algorithm activates a known stabilizing fallback controller $K_0^{i_k}$ and computes the dwell time using the corresponding Lyapunov solution of the fallback controller. A detailed description of SAS is provided in Algorithm~\ref{alg:OSL3}.

Algorithm~\ref{alg:OSL3} assumes an initial estimate $\Theta_0^i$ for each subsystem satisfying $\|\Theta_0^i-\Theta_*^i\|_F\leq\underline{\epsilon}(\bar{\kappa}_i)$ where $\underline{\epsilon}(\bar{\kappa}_i)$ is defined in~(\ref{eq:epsstst0}). The algorithm follows a commit-then-update strategy: the controller remains fixed throughout each switching epoch, while all collected data are incorporated only after the epoch terminates. Define the collection of confidence ellipsoids available at time $t$ as
\begin{align}
\Pi_t=
\left\{
\mathcal{C}_{n_i(t)}(\delta)\mid i=1,\ldots,|\mathcal{M}|
\right\}.
\label{eq:confsetssets}
\end{align}

Suppose that at switching time $t_k^a$ the system enters mode $i_k$. The algorithm first checks whether the current confidence set is sufficient to certify the learned controller. Specifically, if
\begin{align*}
    (n_{i_k}(t_k^a)+\bar{c}_{i_k})^{\frac{1}{2}-\zeta}\geq \log \frac{2nt_k^a}{\delta}
\end{align*}

then the learning-based controller is certified and synthesized by solving the relaxed primal SDP~(\ref{eq:RelaxedSDP}) using the confidence ellipsoid  $\mathcal C_{n_{i_k}(t_k^a)}(\delta)\in\Pi_{t_k^a}$. Otherwise, the algorithm selects the known stabilizing controller $K_0^{i_k}$. Regardless of the selected policy, the algorithm injects an additive input perturbation $\eta_t^{i_k}\sim\mathcal{N}(0,\Gamma_t^{i_k})$, where $\Gamma_t^{i_k}$ is defined in (\ref{eq:Gamarslop}). The constants $\bar{c}_{i_k}$ are chosen according to the proof of Theorem~\ref{Stability_thm17} to ensure that the perturbation covariance $\Gamma_t^{i_k}$ remains of the same order as the process noise covariance $\sigma_{\omega}^2 I$.

After the next mode $i_{k+1}$ becomes available, SAS computes the minimum dwell time required before the next switching event. Since consecutive epochs may use either learning-based (L) or fallback (F) controllers, four possible switching patterns arise: (L,L), (L,F), (F,L), and (F,F). For all cases, the dwell time is computed as
\begin{align}
\tau_{k,k+1}^{a}=\max\left\{1,-\frac{\ln \tilde{\rho}_{i_k}(t_k^a)
+\ln \tilde{\mathcal{X}}_{i_{k+1}}(t_k^a)-\ln\bar{\alpha}}{\ln(1-\tilde{\eta}_{i_k}(t_k^a))}\right\}
\label{eq:dwellAlg}
\end{align}
where
\begin{align}
\tilde{\rho}_{i_k}(t_k^a)
&=
\frac{\overline{\lambda}(\tilde P_{i_k}(t_k^a))}
{\underline{\lambda}(\tilde P_{i_k}(t_k^a))},\\
\tilde{\mathcal X}_{i_{k+1}}(t_k^a)
&=
\frac{\overline{\lambda}(\tilde P_{i_{k+1}}(t_k^a))}
{\underline{\lambda}(\tilde P_{i_{k+1}}(t_k^a))},\\
\tilde{\eta}_{i_k}(t_k^a)
&=
\frac{\underline{\lambda}(\tilde H_{i_k}(t_k^a))}
{\overline{\lambda}(\tilde P_{i_k}(t_k^a))}.
\end{align}

The matrices used in the dwell-time computation are selected according to the 
controller types employed in the two consecutive epochs:
\begin{equation}
(\tilde P_{i_k},\tilde P_{i_{k+1}},\tilde H_{i_k})
=
\begin{cases}
(P(\mathcal C_{n_{i_k}}),P(\mathcal C_{n_{i_{k+1}}}),
\mathcal H(\mathcal C_{n_{i_k}})),
& (L,L),\\[1ex]
(P(\mathcal C_{n_{i_k}}),P_{K_0^{i_{k+1}}},
\mathcal H(\mathcal C_{n_{i_k}})),
& (L,F),\\[1ex]
(P_{K_0^{i_k}},P(\mathcal C_{n_{i_{k+1}}}),
H_{K_0^{i_k}}),
& (F,L),\\[1ex]
(P_{K_0^{i_k}},P_{K_0^{i_{k+1}}},
H_{K_0^{i_k}}),
& (F,F).
\end{cases}
\label{eq:dwell_cases}
\end{equation}
where $\mathcal H(\mathcal C_{n_{i_k}(t_k^a)})$ and $H_{K_0^{i_k}}$ are computed 
according to~(\ref{eq:HDef3p}) and~(\ref{eq:lyapunovEq}), respectively.

For the (L,L) case, the dwell time is computed using the primal and dual SDP solutions associated with the confidence sets, following 
Theorem~\ref{thm:minimum_average_dwell_revised}. For the (F,F) case, the dwell time is obtained using the known stabilizing controllers (fallback controllers) and their corresponding Lyapunov matrices according to Theorem~\ref{thm:alfabet_K_S}. 
The two mixed cases, (L,F) and (F,L), combine these two constructions, and their justification follows from analogous arguments to those in 
Theorems~\ref{thm:alfabet_K_S} and~\ref{thm:minimum_average_dwell_revised}.

After the switch, the newly collected samples are 
incorporated into the corresponding confidence ellipsoid, resulting in the updated collection $\Pi_{t_{k+1}^a}$ used at the next switching instant.

The certification threshold is selected such that whenever the certification condition is satisfied, the learning-based controller is guaranteed to be $(\bar\kappa_{i_k},\bar\gamma_{i_k})$-strongly stabilizing. Conversely, when
\begin{align}
n_{i_k}(t_k^a)+\bar c_{i_k}
\leq
\left(
\log\frac{2d_xt_k^a}{\delta}
\right)^{\frac{2}{1-2\zeta}},
\label{eq:switchtresh}
\end{align}
the algorithm employs the fallback controller. This certification mechanism 
ensures that fallback operation occurs for only a polylogarithmically bounded 
number of time steps, which is essential for establishing the regret guarantee.

In order for any learning-based policy \(K(\mathcal{C}_{n_i(t)})\) designed by the algorithm to be stabilizing, we need to appropriately tune the input perturbation noise power \(\Gamma_t^i\), the regularization parameter \(\lambda_i\) used in the construction of the confidence ellipsoid, and the initial estimate \(\Theta_0^i\), characterized by \(\epsilon (\bar{\kappa}_i)\). The following theorem provides the machinery for such a design and shows that the resulting designed policies are \((\bar{\kappa}_i,\bar{\gamma}_i)\)-strongly stabilizing, as defined in Definition~\ref{def:sequentially}. For notational simplicity, we omit the subscripts and superscripts \(i\), which denote the mode index associated with each epoch, in the statement of the theorem.

\begin{theorem}(Stability of data-driven control in an epoch) \label{Stability_thm17}
Fix $\zeta\in (0,\, 1/2)$. Set the regularization parameter $\lambda_i$ for mode $i\in \mathcal{M}$ in Algorithm \ref{alg:OSL3} to
\begin{align}
    \lambda_i &= \frac{\sigma_{\omega}^2 \bar{c}_i}{80}, \label{eq:lambda_valuefn01}\\ 
    \bar{c}_i&=(2\bar{\kappa}_i^2\bar{p}^i \bar{\vartheta}_{B_*^i} )^{\frac{1}{\zeta}}, \label{eq:cfARSLOplus}
\end{align}
where
\begin{align}
\bar{p}^i:=\frac{5120 \bar{\kappa}_i^2\vartheta_i\sqrt{\sigma_{\omega}^2d_x (d_x+d_u^i)}}{\sigma_{\omega}^2}(c_z+\sqrt{\lambda_i}). \label{eq:pylop},\\
  \nonumber  c_z=\frac{4 \tilde{g}\sqrt{1+\kappa_*^2}}{(1-\bar{\alpha})}\bigg(\|x_1\|+\sqrt{20(d_x+d^*_u) \sigma_{\omega}^2}\bigg),
\end{align}
\begin{align}
    \bar{\vartheta}_{B_*} := \max_{i}\{1, \vartheta_{B^i_*}\}, \quad \vartheta_{B^i_*} \ge \|B^i_*\|, \label{eq:kappabar_no_seq_Def}
\end{align}
$\max_id_u^i=d_u^*$ and with $\kappa_*:=\max\{\bar{\kappa}_*,\kappa_0^*\}$,
\begin{align*}
    \tilde{g}
    :=
    \max\left\{
        2\kappa_*^4,
        \frac{\kappa_*^2}{\gamma_0^*}
    \right\}.
\end{align*}

Define the additive exploration noise by
\begin{align}
    \eta^i_t \sim \mathcal{N}\big(0, \Gamma^i_t\big), \qquad
    \Gamma^i_t := 2\frac{\bar{p}^i \bar{\kappa}_i^2 \sigma_\omega^2}{(n_i(t)+\bar{c})^{\zeta}}.\label{eq:Gamarslop}
\end{align}

Assume the initial estimate satisfies
 $\Theta^i_0$ satisfied
\begin{align}
    \|\Theta^i_0 - \Theta^i_*\|_F \le \underline{\epsilon}(\bar{\kappa}_i), \qquad 
   \underline{\epsilon}(\bar{\kappa}_i) := \frac{\sigma_\omega \sqrt{2 d_x (d_x+d_u^i)}}{\sqrt{\lambda_i}}.\label{eq:epsstst0} 
\end{align}
Suppose that, at time $t$, Algorithm~\ref{alg:OSL3} starts a new epoch in mode $i$, and let $n_i(t)$ denote the total number of visits to mode $i$ up to time $t$. If
\begin{align}
    (n_i(t)+\bar{c}_i)^{\frac{1}{2}-\zeta}\geq \log \frac{2d_xt}{\delta} \label{eq:theresholdpol}
\end{align}
then every policy generated during that epoch by Algorithm \ref{alg:OSL3} is \((\bar{\kappa}_i, \bar{\gamma}_i)\)-strongly stabilizing, with probability at least $1-3\delta$, where $\delta\in(0,\frac13)$ and
\begin{align}
    \bar{\kappa}_i := \sqrt{\frac{2 \nu_i}{\alpha^i_0 \sigma_{\omega}^2}}, \qquad \bar{\gamma}_i := \frac{1}{2 \bar{\kappa}_i^2}.
\end{align}
\end{theorem}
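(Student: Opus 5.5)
The plan follows the ARSLO stability argument of \cite{chekan2024any} and that of \cite{cohen2019learning}, applied per mode. The per-mode data used to build $\mathcal{C}_{n_i(t)}$ are collected only in epochs spent in mode $i$, so the switched structure enters only through the bound on the state magnitude, which supplies the constant $c_z$.

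The argument has three steps. \textbf{First}, I show that with probability at least $1-\delta$ the true parameter lies in the ellipsoid, $\Theta_*^i\in\mathcal{C}_{n_i(t)}(\delta)$. This uses the self-normalized martingale bound behind (\ref{eq:confSet1_tighterghfff2})--(\ref{radius_centralEl_realTime200}), with $\epsilon=\underline{\epsilon}(\bar\kappa_i)$ from (\ref{eq:epsstst0}). \textbf{Second}, I lower bound the covariance by
\[
V^i_{n_i(t)}\succeq c\,\sigma_\omega^2 (n_i(t)+\bar c_i)^{1-\zeta} I
\]
with probability at least $1-\delta$. For this I use the injected perturbation $\eta_t^i$ with covariance $\Gamma_t^i$ from (\ref{eq:Gamarslop}), together with $\sigma_\omega^2 I$ excitation in the $x$-block. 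A matrix-martingale/Freedman concentration summed over visits to mode $i$ gives the rate; the regularizer $\lambda_i=\sigma_\omega^2\bar c_i/80$ covers the early regime, which is why $\bar c_i$ enters both. \textbf{Third}, I bound the states during the mode-$i$ visits by $\|z_s\|\le c_z\sqrt{\log(\cdot)}$ with probability at least $1-\delta$. Here the $(\bar\alpha,\bar\beta)$-controllability enforced by the dwell times (Theorem~\ref{thm:alfabet_K_S}) keeps $\mathbb{E}\|x_{t_k}\|^2$ bounded uniformly across epochs, so $\|x_1\|/(1-\bar\alpha)$ and $\tilde g$ appear. This is the switched-specific part of the argument. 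A union bound over these three events gives the stated probability $1-3\delta$.

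On this event I run the Cohen et al.\ stability argument. Since $\Theta_*^i\in\mathcal{C}$ and $\mu_{n_i(t)}$ is chosen as in the relaxed SDP, the true steady-state covariance of the optimal LQR policy is feasible for (\ref{eq:RelaxedSDP}). This gives $\operatorname{Tr}\Sigma(\mathcal{C})\le J_*/\alpha_0\le\nu/\alpha_0$, and hence $\Sigma_{xx}\preceq(\nu/\alpha_0)I$. The relaxation then yields the Lyapunov-type inequality
\[
\Sigma_{xx}\succeq (A_*+B_*K)\Sigma_{xx}(A_*+B_*K)^\top + W - 2\mu\,(\Sigma\bullet V^{-1})I .
\]
Its last term is controlled by $\mu\|V^{-1}\|\operatorname{Tr}\Sigma$. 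The threshold (\ref{eq:theresholdpol}) combined with the second step gives $\mu/\underline\lambda(V)\lesssim \bar p\,\log(\cdot)/(n+\bar c)^{1-\zeta}\cdot(\ldots)$. The constants $\bar p^i$ and $\bar c_i$ in (\ref{eq:pylop})--(\ref{eq:cfARSLOplus}) are chosen precisely so that this term is at most $W/2$. We then get $\Sigma_{xx}\succeq(A_*+B_*K)\Sigma_{xx}(A_*+B_*K)^\top+\tfrac12 W$, and, using $\Sigma\succ0$ and $\Sigma_{uu}\succeq K\Sigma_{xx}K^\top$, also $\|K\|\le\sqrt{\operatorname{Tr}\Sigma_{uu}/\underline\lambda(\Sigma_{xx})}$ with $\Sigma_{xx}\succeq W/2$.

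Set $H=\Sigma_{xx}^{1/2}$ (suitably scaled) and $L=H^{-1}(A_*+B_*K)H$. The inequality gives $\|L\|^2\le 1-\sigma_\omega^2/(2\|\Sigma_{xx}\|)\le 1-1/\bar\kappa_i^2$, so $\|L\|\le 1-1/(2\bar\kappa_i^2)=1-\bar\gamma_i$. The eigenvalue bounds give $\|H\|\|H^{-1}\|\le\sqrt{(\nu/\alpha_0)/(\sigma_\omega^2/2)}=\bar\kappa_i$, and similarly $\|K\|\le\bar\kappa_i$. The policy is fixed within the epoch (commit-then-update), so the statement holds for every policy in that epoch.

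The main obstacle is the second and third steps jointly. The covariance lower bound needs the states to be bounded, while the state bound needs earlier policies to be stabilizing, which is circular. I would try an induction over epochs: either the fallback $K_0^i$ is in force, with known $(\kappa_0,\gamma_0)$, or a previously certified policy is in force, and the dwell time preserves $\bar\alpha$-contraction. The subtle point is checking that the constants close this loop with exactly $\bar p^i$ as defined.
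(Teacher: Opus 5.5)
Your architecture matches the paper's proof, which defers most details to ARSLO. It uses three primitive events and a union bound for $1-3\delta$, a reduction to $\mu_{n_i(t)}\|V^{-1}_{n_i(t)}\|\le 1/(2\bar\kappa_i^2)$, a Cohen et al.-type strong-stability lemma, and a bootstrap. Your primal certificate ($\operatorname{Tr}\Sigma\le\nu/\alpha_0$, $H=\Sigma_{xx}^{1/2}$) gives exactly the same sufficient condition as the paper's dual inequality (\ref{eq:Lemma8_res22}) with $\|P(\mathcal{C})\|\le\nu/\sigma_\omega^2$, so that part is fine.

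\textbf{The genuine gap is your third step.} Bounding $\mathbb{E}\|x_{t_k}\|^2$ via $(\bar\alpha,\bar\beta)$-controllability yields, through Markov's inequality, only bounds polynomial in $1/\delta$ (and in $t$ once made uniform in time). It does not give $\|z_s\|^2\le c_z^2\log(t/\delta)$ with a $\delta$-free $c_z$. Since $\|V_{n_i(t)}\|^{1/2}$ enters $\mu_{n_i(t)}$ directly, a polynomial bound breaks the estimate $\mu\|V^{-1}\|\le 1/(2\bar\kappa_i^2)$ under the logarithmic threshold (\ref{eq:theresholdpol}).

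The paper (Lemma~\ref{lem:defCbarz}) argues pathwise instead. It writes $A_*+B_*K=P^{1/2}LP^{-1/2}$ with the dual $P(\mathcal{C})$ and $\|L\|^2\le 1-\eta(\mathcal{C})$. The dwell-time rule then makes the product of closed-loop matrices over an epoch have norm at most $\sqrt{\bar\alpha}$; partial products starting inside an epoch are bounded by $\bar\kappa_*^2(1-\eta)^{(t-s-1)/2}$. Hence $\|x_{t^a_{k+1}}\|\le\sqrt{\bar\alpha}\,\|x_{t^a_k}\|+\tilde g\max_s\|B_*\eta_s+\omega_s\|$. A Hanson--Wright bound on the noise maxima, unrolled geometrically, gives $c_z$; this is where $\|x_1\|$, $\tilde g$ and $1/(1-\bar\alpha)$ come from.

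Two consequences follow. First, this step must use the dual $P$. The dwell time is computed from $P(\mathcal{C})$ and $\mathcal{H}(\mathcal{C})$, so it certifies contraction only in the $P^{1/2}$ factorization, not in your $\Sigma_{xx}^{1/2}$ one. Second, the third primitive event should be the noise-maximum event (the paper's $\mathcal{G}^i_t$), not the state bound. That is how the circularity you flagged is broken: on the intersection of the confidence, minimum-eigenvalue and noise events, stability, the dwell-time contraction and the state bound follow jointly by induction in time.

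\textbf{A secondary issue is in your second step.} The lower bound must carry the exploration scale: $V_{n_i(t)}\succeq\frac{\sigma_\omega^2\bar p^i}{80}(n_i(t)+\bar c_i)^{1-\zeta}I$, from Lemma~\ref{lem:minlwerbndei} with $\lambda_i\ge\bar C_i$. At the threshold, the factor $\log(2d_xt/\delta)/(n_i(t)+\bar c_i)^{1/2-\zeta}$ is only guaranteed to be at most $1$. The target $1/(2\bar\kappa_i^2)$ is reached because $\mu\|V^{-1}\|\lesssim\vartheta_i\sqrt{d_x(d_x+d_u^i)}\,(c_z+\sqrt{\lambda_i})/(\sigma_\omega\bar p^i)$, with $\bar p^i$ in the denominator, which its definition exactly cancels. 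This also uses $r_{n_i(t)}\lesssim\sigma_\omega^2d_x(d_x+d_u^i)\log(2d_xt/\delta)$, obtained from $\lambda_i\ge c_z^2$ and $\sqrt{\lambda_i}\,\underline{\epsilon}(\bar\kappa_i)\le\sigma_\omega\sqrt{2d_x(d_x+d_u^i)}$. Your Step 2 drops this factor and your later display places $\bar p$ in the numerator, so it does not account for the choice of $\bar p^i$.

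Finally, your worry about closing the constants is legitimate. $\bar p^i$ depends on $\sqrt{\lambda_i}$, $\lambda_i$ on $\bar c_i$, and $\bar c_i$ on $(\bar p^i)^{1/\zeta}$. The paper's proof does not check that this system is solvable.
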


% Recalling Definition 1, a sequentially stabilizing policy keeps the states of system bounded in an epoch where the corresponding upper bound is in terms of the stabilizing policy parameters $\kappa_i$ and $\gamma_i$.

With the state-feedback gain \(K(\mathcal{C}_{n_{i_k}(t_k^a)})\) computed via (\ref{eq:obtPol}) and the index of the next mode \(i_{k+1}\) available, the algorithm also determines the duration of the corresponding epoch. This duration is the data-driven minimum mode-dependent dwell-time that guarantees  \((\bar{\alpha}, \bar{\beta})\)-controllability of the state norm. The following theorem characterizes this duration.

\begin{theorem} (Minimum Mode-Dependent Dwell-Time)\label{thm:minimum_average_dwell_revised}
Consider the switched system described by (\ref{eq:dyn_atttt}), which undergoes a mode transition to mode $i$ at time $t$ and subsequently identifies the next mode index $j$ for the upcoming switch. Let $K(\mathcal{C}_{n_i(t)})$ denote the control feedback designed for mode $i$, and let $P(\mathcal{C}_{n_i(t)})$ and $P(\mathcal{C}_{n_j(t)})$ represent the corresponding solutions of relaxed dual SDPs. Then, the growth in state norms during the transition from mode $i$ to $j$ is under control in the sense of Definition \ref{def:underControl}, provided that the duration of actuation in mode $i$ lasts for at least the minimum dwell-time of $\tau_{i,j}^{a}$ given by: \footnote{Note that in the proof of Theorem \ref{thm:minimum_average_dwell_revised} it is shown that $0< \eta (\mathcal{C}_{n_i(t)})< 1$.}
     \begin{align}
 &\tau_{k,k+1}^{a}:=\max \bigg\{1, -\frac{\ln \rho (\mathcal{C}_{n_{i_k}(t_k^a)})+\ln \mathcal{X} (\mathcal{C}_{n_{i_{k+1}}(t_k^a)})-\ln \bar{\alpha}}{\ln \big(1-\eta \big(\mathcal{C}_{n_{i_k}(t_k^a)}\big)\big)}\bigg\}\label{eq:tau_a_fast}, 
     \end{align}
     where
     \begin{align}
	\eta \big(\mathcal{C}_{n_{i_k}(t_k^a)}\big):= \frac{\underline{\lambda}\big(\mathcal{H}(\mathcal{C}_{n_{i_k}(t_k^a)})\big)}{\overline{\lambda}\big(P(\mathcal{C}_{n_{i_k}(t_k^a)})\big)},\label{wq:etaDef}
	\end{align}
\begin{align}
\nonumber  \rho (\mathcal{C}_{n_{i_k}(t_k^a)}):=\frac{\overline{\lambda}\big(P(\mathcal{C}_{n_{i_{k}}(t_k^a)})\big)}{\underline{\lambda}\big(P(\mathcal{C}_{n_{i_k}(t_k^a)})\big)},\\
\mathcal{X} (\mathcal{C}_{n_{i_{k+1}}(t_k^a)}):=\frac{\overline{\lambda}\big(P(\mathcal{C}_{n_{i_{k+1}}(t_k^a)})\big)}{\underline{\lambda}\big(P(\mathcal{C}_{n_{i_{k+1}}(t_k^a)})\big)},
\label{eq:rhoDef}
\end{align} 
and 
 	\begin{align}
	\mathcal{H}(&\mathcal{C}_{n_{i_k}(t_k^a)})\!=\!Q^{i_k}\!+\!K^\top (\mathcal{C}_{n_{i_k}(t_k^a)})R^{i_k} K(\mathcal{C}_{n_{i_k}(t_k^a)}) -2\mu^{i_k}_{n_{i_k}(t_k^a)}\|P(\mathcal{C}_{n_{i_k}(t_k^a)})\|_* \begin{pmatrix} I \\ K(\mathcal{C}_{n_{i_k}(t_k^a)}) \end{pmatrix}{V}^{-1}_{n_{i_k}(t_k^a)}\begin{pmatrix} I \\ K(\mathcal{C}_{n_{i_k}(t_k^a)}) \end{pmatrix}^\top\!\!\!.
 \label{eq:HDef3p}
	\end{align}
   with probability at least $1-6\delta$.
%  \begin{align}
%    \mathcal{G}(\mathcal{C}_{\tau_q}^j):=P(\hat{\Theta}^j_{\tau_{q}}, Q^j, R^j)+\chi_{\tau_q}^j\label{eq:GDef}
% \end{align}
% and
    %  \begin{align}
    %   \nonumber  \chi_{\tau_q}^i&:=\frac{2\kappa_i^2\mu_i^{\tau_q}}{\gamma}\|P(\hat{\Theta}^i_{\tau_q}, Q^i, R^i)\|_* \times\\
    %    &\bigg\|\begin{pmatrix} I \\ K(\hat{\Theta}^i_{\tau_q}, Q^i, R^i) \end{pmatrix}{V^i}^{-1}_{n_i(\tau_q)}\begin{pmatrix} I \\ K(\hat{\Theta}^i_{\tau_q}, Q^i, R^i) \end{pmatrix}^\top\bigg\|I \label{eq:chiddef_th}
    % \end{align}
\end{theorem}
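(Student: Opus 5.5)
The plan is to mimic the Lyapunov-decay argument behind Theorem~\ref{thm:alfabet_K_S}, with the true Lyapunov matrix $P_K$ replaced by the relaxed dual solution $P(\mathcal{C}_{n_{i_k}(t_k^a)})$. The missing ingredient is a one-step decrease inequality for $P(\mathcal{C})$ along the \emph{true} closed-loop dynamics of mode $i_k$. Throughout I condition on the good event that $\Theta^{i_k}_*\in\mathcal{C}_{n_{i_k}(t_k^a)}$ and $\Theta^{i_{k+1}}_*\in\mathcal{C}_{n_{i_{k+1}}(t_k^a)}$ (probability at least $1-2\delta$). I also condition on the event of Theorem~\ref{Stability_thm17}, applied to both modes, that the policies are strongly stabilizing (a further $3\delta$ after union bounds). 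Together these give the stated $1-6\delta$.

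\textbf{Step 1 (one-step decrease).} Write $P=P(\mathcal{C}_{n_{i_k}})$, $K=K(\mathcal{C}_{n_{i_k}})$, and $\Delta=\Theta_*-\hat\Theta$. Multiply the dual constraint in (\ref{eq:RedSDP_DUAL}) on the left by $(I\;K^\top)$ and on the right by its transpose. This gives
\[
P\preceq Q+K^\top RK+(\hat A+\hat BK)^\top P(\hat A+\hat BK)-\mu\|P\|_*\begin{pmatrix}I\\K\end{pmatrix}^{\!\top}V^{-1}\begin{pmatrix}I\\K\end{pmatrix}
\]
after rearrangement, where $\hat A,\hat B$ denote the blocks of $\hat\Theta$. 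Next I replace $\hat\Theta$ by $\Theta_*$ inside the quadratic term, following the Cohen et al.\ / ARSLO argument. The cross terms $\Delta^\top P\hat\Theta$ and $\Delta^\top P\Delta$ are bounded using $\operatorname{Tr}(\Delta^\top V\Delta)\le r$, $\|\Theta_*\|\le\vartheta$ and the definition $\mu=r+\sqrt r\vartheta\|V\|^{1/2}$. Each of these error terms is absorbed by one copy of $\mu\|P\|_*(I;K)^\top V^{-1}(I;K)$, and this accounting is why $\mathcal H$ in (\ref{eq:HDef3p}) carries the factor $2\mu$. The outcome is
\[
(A_*+B_*K)^\top P(A_*+B_*K)\preceq P-\mathcal H(\mathcal{C}_{n_{i_k}}).
\]
I must also show $\mathcal H\succ0$, so that $0<\eta<1$. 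This should follow from the certification threshold (\ref{eq:theresholdpol}), which forces $\|V^{-1}\|$ and hence $\mu\|V^{-1}\|$ to be small relative to $\alpha_0$. The bound $\eta<1$ then follows from $\mathcal H\preceq P$.

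\textbf{Step 2 (iterate within the epoch).} Set $\mathcal{V}(x)=x^\top Px$. Taking conditional expectations under Assumption~\ref{Assumption 1} gives
\[
\mathbb E[\mathcal V(x_{t+1})\mid\mathcal F_t]\le(1-\eta)\mathcal V(x_t)+P\bullet(W+B_*\Gamma_tB_*^\top),
\]
using $\mathcal H\succeq\underline\lambda(\mathcal H)I\succeq\eta P$. Unrolling this over $\tau$ steps and sandwiching with $\underline\lambda(P)$ and $\overline\lambda(P)$ yields
\[
\mathbb E\|x_{t_k+\tau}\|^2\le\rho(1-\eta)^\tau\,\mathbb E\|x_{t_k}\|^2+\text{const}\cdot\sigma_\omega^2 .
\]
The exploration covariance is of order $\sigma_\omega^2$ by the choice of $\bar c_i$, so the noise term has the form required in Definition~\ref{def:underControl}.

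\textbf{Step 3 (switch and dwell time).} At the switch, the bound is carried into the next epoch's Lyapunov geometry through $P(\mathcal{C}_{n_{i_{k+1}}})$, which costs the condition-number factor $\mathcal X$ exactly as in Theorem~\ref{thm:alfabet_K_S}. The resulting requirement is $\rho\,\mathcal X(1-\eta)^\tau\le\bar\alpha$. Solving for $\tau$ gives (\ref{eq:tau_a_fast}), where the $\max\{1,\cdot\}$ enforces (\ref{eq:condSta}). The additive noise constants are then collected into $\bar\beta$ using the $\bar\kappa,\bar\gamma$ bounds of Theorem~\ref{Stability_thm17}, in the same way as in Lemma~\ref{lem:betaDef}.

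The main obstacle is Step~1: making the perturbation bound replacing $\hat\Theta$ by $\Theta_*$ tight enough that exactly $2\mu\|P\|_*$ suffices, while simultaneously keeping $\mathcal H$ positive definite under the threshold.
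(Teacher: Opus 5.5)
Your Steps 2 and 3 (one-step Lyapunov drift, Rayleigh--Ritz sandwiching, paying the condition number $\mathcal X$ at the switch, solving $\rho\,\mathcal X(1-\eta)^\tau\le\bar\alpha$) match the paper's argument. Step 1, however, has a genuine gap: the inequality it produces points the wrong way.

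Sandwiching the dual \emph{feasibility} constraint of (\ref{eq:RedSDP_DUAL}) with $(I\;K^\top)$ gives, as you wrote,
\[
P\preceq Q+K^\top RK+(\hat A+\hat BK)^\top P(\hat A+\hat BK)-\mu\|P\|_*(I\;K^\top)V^{-1}(I\;K^\top)^\top .
\]
This is a \emph{lower} bound on $(\hat A+\hat BK)^\top P(\hat A+\hat BK)$. Replacing $\hat\Theta$ by $\Theta_*$ only adds error terms of size $\mu\|P\|_*(I\;K^\top)V^{-1}(I\;K^\top)^\top$. It cannot reverse the direction, so you never reach $(A_*+B_*K)^\top P(A_*+B_*K)\preceq P-\mathcal H$.

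Two sanity checks show that feasibility alone cannot suffice:
\begin{itemize}
\item Your derivation never uses how $K$ was constructed, so it would ``prove'' a Lyapunov decrease for any gain, including destabilizing ones.
\item $P=0$ is dual-feasible, and for it your claimed conclusion reads $0\preceq-(Q+K^\top RK)$, which is false.
\end{itemize}

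The missing ingredient is optimality, through complementary slackness between the relaxed primal (\ref{eq:RelaxedSDP}) and its dual. This is what the paper imports from \cite{chekan2024any} (eq.~(81) there, together with Lemma~7) to obtain (\ref{eq:Lemma8_re}). The argument runs as follows.
\begin{itemize}
\item Let $M\succeq0$ be the dual slack $\begin{pmatrix}Q-P&0\\0&R\end{pmatrix}+\hat\Theta P\hat\Theta^\top-\mu\|P\|_*V^{-1}$ and $\Sigma$ the optimal primal solution, so $M\bullet\Sigma=0$.
\item Because $K=\Sigma_{ux}\Sigma_{xx}^{-1}$, a Schur-complement argument gives $\Sigma\succeq\begin{pmatrix}I\\K\end{pmatrix}\Sigma_{xx}(I\;K^\top)$.
\item Hence $0=M\bullet\Sigma\ge\big((I\;K^\top)M(I\;K^\top)^\top\big)\bullet\Sigma_{xx}\ge0$, and $\Sigma_{xx}\succ0$ forces $(I\;K^\top)M(I\;K^\top)^\top=0$.
\end{itemize}
So your sandwiched constraint actually holds with \emph{equality}, in particular in the $\succeq$ direction. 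Only then does a one-sided perturbation bound of the form $\Theta_*P\Theta_*^\top\preceq\hat\Theta P\hat\Theta^\top+\mu\|P\|_*V^{-1}$ (valid when $\Theta_*\in\mathcal C$) yield
\[
P\succeq Q+K^\top RK+(A_*+B_*K)^\top P(A_*+B_*K)-2\mu\|P\|_*(I\;K^\top)V^{-1}(I\;K^\top)^\top .
\]
This is also where the factor $2\mu$ in $\mathcal H$ comes from: one copy is the relaxation term already present in the dual constraint, and the other is the model error. It is not two perturbation error terms, as your accounting suggests.

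With this inequality in hand, plus $\mathcal H\succ0$ under the certification threshold as you anticipated, the rest of your plan goes through as in the paper.
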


\section{Theoretical Guarantees} \label{sec:theoreticalg}

\subsection{$(\bar{\alpha}, \bar{\beta})$-Controllability and Dwell-Time Estimation Error Bounds}

The following lemma summarizes how Algorithm~\ref{alg:OSL3} guarantees \((\bar{\alpha}, \bar{\beta})\)-controllability of the state and provides an overall upper bound on the expected closed-loop state norm.

% Furthermore, when the mode $i$ is suppressing comparing to the mode $j$ at epoch starting at time $\tau_q$ (See Definition), then $\rho (\mathcal{C}^i_{\tau_q}(\delta), \mathcal{C}^j_{\tau_{q-1}}(\delta))<0$, meaning that there is no need to stay on the mode $i$ and the subsequent switch can happen quickly.

\begin{lemma}\label{thm:res_sequential_stability_resp}
Fix $\delta\in (0,\, \frac{1}{3|\mathcal{M}|})$. Algorithm \ref{alg:OSL3} guarantees
\begin{align}
    &\mathbb{E}[x^\top _{t^a_{k+1}}x_{t^a_{k+1}}| \mathcal{F}_{t^a_k}]\leq \bar{\alpha} \; \mathbb{E}[x^\top _{t^a_k}x_{t^a_k}| \mathcal{F}_{t^a_k}]+ \tilde{\beta}\sigma_{\omega}^2, \label{eq:moghayese}
\end{align}
with probability at least $1-6\delta$ where $\tilde{\beta}:=\bar{\kappa}_*^6$.
% Furthermore, we have
% \begin{align}
% \mathbb{E}[x^\top_{t_{k}^a}x_{t_{k}^a}]\leq \bar{\alpha}^{k} \|x_0\|^2 +\frac{\tilde{\beta}}{1-\bar{\alpha}}\sigma_{\omega}^2
% \end{align}
% \textcolor{blue}{with probability at least $1-3|\mathcal{M}|\delta$.}
\end{lemma}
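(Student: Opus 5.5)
The plan is to condition on the good event on which every confidence ellipsoid in $\Pi_{t_k^a}$ contains the true parameters. On that event, Theorem~\ref{Stability_thm17} says every certified learned gain is $(\bar\kappa_i,\bar\gamma_i)$-strongly stabilizing, and the dual solution $P(\mathcal{C}_{n_i})$ acts as a Lyapunov certificate. This event has probability at least $1-6\delta$, combining Theorems~\ref{Stability_thm17} and~\ref{thm:minimum_average_dwell_revised}. The restriction $\delta<1/(3|\mathcal{M}|)$ keeps all per-mode events valid simultaneously. On this event I split into the four controller patterns (L,L), (L,F), (F,L), (F,F) used in~(\ref{eq:dwell_cases}). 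All four follow one template, so I run it with a generic triple $(\tilde P_{i_k},\tilde P_{i_{k+1}},\tilde H_{i_k})$.

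The first step is a one-step Lyapunov decrease inside the epoch. For (F,\cdot), I use the Lyapunov equation~(\ref{eq:lyapunovEq}) for $K_0^{i_k}$. For (L,\cdot), I use the dual feasibility constraint of~(\ref{eq:RedSDP_DUAL}), multiplied on both sides by $(I;K)$ and combined with $\Theta_*\in\mathcal{C}$, to obtain
\[
(A_*+B_*\tilde K)^\top \tilde P (A_*+B_*\tilde K)\preceq \tilde P-\tilde H ,
\]
where $\tilde H=\mathcal{H}(\mathcal{C})$ absorbs the $2\mu\|P\|_*$ correction as in~(\ref{eq:HDef3p}). Taking conditional expectations with $u_t=\tilde K x_t+\eta_t$ gives
\[
\mathbb{E}[x_{t+1}^\top\tilde P x_{t+1}\mid\mathcal{F}_t]\le(1-\tilde\eta)\,x_t^\top\tilde Px_t+\operatorname{Tr}\big(\tilde P(W+B_*\Gamma_tB_*^\top)\big).
\]
Here I use $\tilde H\succeq\tilde\eta\tilde P$ with $\tilde\eta=\underline\lambda(\tilde H)/\overline\lambda(\tilde P)$.

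The second step is to iterate this bound over the $\tau^a_{k,k+1}$ steps. Converting between Euclidean and $\tilde P$-norms at the start and end of the epoch gives
\[
\mathbb{E}[x_{t^a_{k+1}}^\top x_{t^a_{k+1}}]\le \tilde\rho\,\tilde{\mathcal X}\,(1-\tilde\eta)^{\tau}\,\mathbb{E}[x_{t_k^a}^\top x_{t_k^a}]+\text{noise}.
\]
The factor $\tilde{\mathcal X}$ enters because the bound is formulated through the next mode's certificate, mirroring Theorem~\ref{thm:alfabet_K_S}. By the choice~(\ref{eq:dwellAlg}), $\tau\ge -(\ln\tilde\rho+\ln\tilde{\mathcal X}-\ln\bar\alpha)/\ln(1-\tilde\eta)$, so the leading coefficient is at most $\bar\alpha$. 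For the mixed cases I simply cite the corresponding halves of Theorems~\ref{thm:alfabet_K_S} and~\ref{thm:minimum_average_dwell_revised}.

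The third step, which I expect to be the main obstacle, is bounding the accumulated noise term uniformly by $\bar\kappa_*^6\sigma_\omega^2$, independently of $\tau$. The plan is to use the strong-stability decomposition $A+BK=HLH^{-1}$ rather than the eigenvalue ratios. This gives
\[
\sum_{s}\|(A+BK)^s\|^2\,\operatorname{Tr}(W+B\Gamma B^\top)\le \kappa^2\sum_s(1-\gamma)^{2s}\,(\cdots)\le\frac{\kappa^2}{\gamma}(\cdots).
\]
With $\bar\gamma=1/(2\bar\kappa^2)$ this is of order $\bar\kappa^4$. The factor $\Gamma_t\lesssim\bar\kappa^2\sigma_\omega^2$ then supplies the remaining $\bar\kappa^2$; this is exactly where the choice of $\bar c_i$ in~(\ref{eq:cfARSLOplus}) matters. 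For the fallback case I need $\kappa_0,\gamma_0$ to be dominated by $\bar\kappa_*$, or else absorb them into the constant. The delicate part is keeping track of the dimension and trace factors without losing more than a constant, and making sure the constant comes out as $\bar\kappa_*^6$ rather than something larger.
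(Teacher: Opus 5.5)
Your first two steps match the paper's proof: the Lyapunov drift from the dual constraint (or the fallback Lyapunov equation), iteration over the epoch, the norm conversion giving the coefficient $\tilde\rho\tilde{\mathcal X}(1-\tilde\eta)^{\tau}\le\bar\alpha$, and the same $1-6\delta$ bookkeeping. You differ in how you get the noise constant. The paper does not use strong stability or the exploration-noise design here. It keeps the noise coefficient $\tilde{\mathcal X}\tilde\rho/\tilde\eta$ that (\ref{eq:ineqcrudbet}) already produces and bounds each of the three ratios by $\bar\kappa_*^2$. For this it uses $\mathcal H(\mathcal C)\succeq\frac{\alpha_0}{2}I$ and $\frac{\alpha_0}{2}I\preceq P(\mathcal C)\preceq\frac{\nu}{\sigma_\omega^2}I$; see (\ref{eq:adivetabnd})--(\ref{eq:khoobeaali}). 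That product of three eigenvalue ratios is the entire source of $\bar\kappa_*^6$.

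Your strong-stability route is valid, but run it on the explicit expansion $x_{t^a_{k+1}}=M^{\tau}x_{t^a_k}+\sum_{s}M^{\tau-1-s}(B_*\eta_s+\omega_{s+1})$ with $M=A_*+B_*\tilde K$, not on the output of the Lyapunov recursion. Then:
\begin{itemize}
\item cross terms vanish by the martingale-difference property;
\item the first term is at most $\tilde\rho(1-\tilde\eta)^{\tau}\|x_{t^a_k}\|^2\le\bar\alpha\|x_{t^a_k}\|^2$, so $\tilde{\mathcal X}\ge 1$ is only slack;
\item the noise term is at most $\frac{\bar\kappa^2}{\bar\gamma}\operatorname{Tr}(W+B_*\Gamma_tB_*^\top)$.
\end{itemize}

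Your accounting of the sixth power is off, though. The constant $\bar c_i$ in (\ref{eq:cfARSLOplus}) is chosen precisely to enforce (\ref{eq:noiseNbig}), i.e.\ $B_*\Gamma_tB_*^\top\preceq\sigma_\omega^2 I$. So the exploration noise contributes no extra $\bar\kappa^2$, and your route gives about $4d_x\bar\kappa_*^4\sigma_\omega^2$. That is sharper in $\bar\kappa_*$ than the paper's constant, because it avoids the factor $\tilde{\mathcal X}\tilde\rho$ that the paper's norm conversion puts on the noise. It does carry a dimension factor. The paper's $\bar\kappa_*^6$ hides a comparable factor: it writes $\overline\lambda(P)\sigma_\omega^2$ for $\mathbb E[\omega^\top P\omega]=\sigma_\omega^2\operatorname{tr}P$ and ignores $B_*\eta_t$.

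In short, the paper's route reuses (\ref{eq:ineqcrudbet}) in one line and needs only eigenvalue bounds on $P$ and $\mathcal H$. Yours separates contraction from noise accumulation and makes the exploration noise explicit, at the cost of invoking the strong-stability certificates.

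Your caveat that fallback epochs need $(\kappa_0,\gamma_0)$ dominated by $\bar\kappa_*$ is a real point that the paper's proof of this lemma passes over. The paper only enlarges the constant to $\kappa_*=\max\{\bar\kappa_*,\kappa_0^*\}$ later, in Lemma~\ref{lem:defCbarz}.
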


% \begin{corollary}
% The policy employed in Algorithm 1 falls within the category of candidate policies as defined by (\ref{eq:policyClass}), i.e., $\kappa_i\leq \kappa^i_{c}$ for all $i\in |\mathcal{M}|$. Now, noting that $\gamma^i_{c}<1$, consequently, upon comparing (\ref{eq:moghayese}) from Lemma \ref{thm:res_sequential_stability_resp} with (\ref{eq:stategrowthpp}), it becomes evident that Algorithm 1 ensures the containment of state norm within bounds, aligning with the conditions specified in Definition \ref{def:underControl}. 
% \end{corollary}

Finally, the following theorem establishes an upper bound on the dwell-time estimation error. This bound is particularly useful for the regret analysis of the proposed algorithm.

\begin{theorem} (Dwell-Time Estimation Error Upper-Bound)
\label{Thm:dwellTimeError}
Suppose that at time \(t\), the system initiates actuation in mode \(i\) and immediately receives the next mode \(j\). If the system applies the learning-based controller during the subsequent epochs, then the estimation error of the minimum mode-dependent dwell time,  \(\tau^{a}_{k,k+1}-\tau^*_{k,k+1}\), admits the following upper bound:
\begin{align}
    \tau_{k,k+1}^a-\tau_{k,k+1}^*\leq \bar{C}_0\chi^{i_{k+1}}_{t_k^a}+\bar{C}_1\chi^{i_k}_{t_k^a}+\bar{C}_2\big(\chi_{t_k^a}^{i_k}\big)^2,
\end{align}
where
\begin{align}
\chi_{t}^{i}= \frac{16\| P(\mathcal{C}_{n_{i}(t)})\|^3\mu_{n_{i}(t)}}{\big(\alpha^{i}_0\big)^2}  \Big\|\begin{pmatrix}
I \\
K(\mathcal{C}_{n_{i}(t)})
\end{pmatrix}^\top {V}^{-1}_{n_{i}(t)} \begin{pmatrix}
I \\
K(\mathcal{C}_{n_{i}(t)}) 
\end{pmatrix} \Big\|\label{eq:chiddef3}
\end{align}
and $\bar{C}_0$, $\bar{C}_1$, and $\bar{C}_2$ are problem-dependent constants.
\end{theorem}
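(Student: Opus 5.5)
The plan is a perturbation argument comparing the two dwell-time formulas, (\ref{eq:tau_a_fast}) for the algorithm and (\ref{eq:tau_a_known}) for the benchmark. Write
\[
\tau^a = -\frac{N^a}{\ln(1-\eta^a)}, \qquad \tau^* = -\frac{N^*}{\ln(1-\eta^*)},
\]
with $N^a=\ln\rho(\mathcal{C}_{n_{i_k}})+\ln\mathcal{X}(\mathcal{C}_{n_{i_{k+1}}})-\ln\bar\alpha$ and $N^*=\ln\rho_*^{i_k}+\ln\mathcal{X}_*^{i_{k+1}}-\ln\bar\alpha$. If the maxima are attained at $1$, the difference is trivially bounded, since $\max\{1,\cdot\}$ is $1$-Lipschitz. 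Otherwise I decompose
\[
\tau^a-\tau^* = \frac{N^a-N^*}{-\ln(1-\eta^a)} + N^*\Big(\frac{1}{-\ln(1-\eta^a)}-\frac{1}{-\ln(1-\eta^*)}\Big).
\]
The first term should give the $\chi^{i_{k+1}}$ contribution and part of the $\chi^{i_k}$ contribution. The second term should give the remaining $\chi^{i_k}$ and $(\chi^{i_k})^2$ contributions.

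The core input is a sandwich bound on the relaxed dual solution, taken from the ARSLO analysis \cite{chekan2024any,cohen2019learning}. On the event of Theorem~\ref{Stability_thm17} and of the confidence sets, I would show
\[
P(\Theta^i_*) - \chi^i_t I \preceq P(\mathcal{C}_{n_i(t)}) \preceq P(\Theta_*^i).
\]
The upper bound holds because the relaxed dual is optimistic: its feasible set contains $P(\Theta_*^i)$ up to the relaxation term. The lower bound comes from comparing the value function of $K(\mathcal{C})$ on the true system with the dual optimum. The gap is controlled by $\mu\|P\|_*\,\|(I;K)^\top V^{-1}(I;K)\|$ times $\|P\|^2/\alpha_0^2$, through the strong-stability constants $\bar\kappa,\bar\gamma$ and $\|P\|\le\nu/\sigma_\omega^2$. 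This produces exactly the form of $\chi^i_t$ in (\ref{eq:chiddef3}). By Weyl's inequality, $\overline\lambda$ and $\underline\lambda$ of $P(\mathcal{C})$ then lie within $\chi$ of their true counterparts. Since $\underline\lambda(P)\ge\alpha_0$, a log-Lipschitz argument gives
\[
|\ln\mathcal{X}(\mathcal{C}_{n_j})-\ln\mathcal{X}_*^{j}| \lesssim \chi^j/\alpha_0 ,
\]
and similarly for $\rho$, yielding $|N^a-N^*|\le c(\chi^{i_k}+\chi^{i_{k+1}})$. Meanwhile $-\ln(1-\eta^a)\ge\eta^a$ is bounded below, since $\eta^a\ge c'>0$ follows from $\mathcal{H}\succeq \alpha_0 I/2$ in the learned regime and $\overline\lambda(P)\le\nu/\sigma_\omega^2$.

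For the second term I compare $\eta^a$ and $\eta^*$. The difference $\overline\lambda(P(\mathcal{C}))-\overline\lambda(P_*)$ is $O(\chi^{i_k})$ by the sandwich. For $\underline\lambda(\mathcal{H})$ versus $\underline\lambda(H_*)$, the matrix $\mathcal{H}$ differs from $Q+K_*^\top RK_*$ in two ways. First, the subtracted term $2\mu\|P\|_*(I;K)V^{-1}(I;K)^\top$ is $O(\chi)$ by definition. Second, the change $K(\mathcal{C})-K(\Theta_*)$ enters quadratically. I would bound $\|K(\mathcal{C})-K_*\|^2 \lesssim$ cost gap $\lesssim\chi$, using the standard fact that the excess LQR cost is $\operatorname{Tr}((K-K_*)^\top(R+B^\top PB)(K-K_*)\Sigma_K)$. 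Then
\[
|\eta^a-\eta^*|\le c_1\chi^{i_k}+c_2(\chi^{i_k})^2 ,
\]
where the square arises from the product of eigenvalue perturbations in the ratio. Since $x\mapsto 1/(-\ln(1-x))$ is Lipschitz on $[c',1-c'']$, and $N^*$ is a problem-dependent constant, the second term is $\le \bar C_1'\chi^{i_k}+\bar C_2(\chi^{i_k})^2$. Collecting constants gives the claim, all on the high-probability event of Theorems~\ref{Stability_thm17} and \ref{thm:minimum_average_dwell_revised}.

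I expect the main obstacle to be the lower half of the sandwich, $P(\mathcal{C})\succeq P_*-\chi I$, together with the accompanying control of $\|K(\mathcal{C})-K_*\|$ by $\chi$. This requires adapting the optimism and cost-gap arguments of \cite{chekan2024any} so that they hold at the matrix level rather than only in trace. My first attempt would be to write the dual constraint at the true $\Theta_*$ plus the confidence-ellipsoid error term, and then propagate it through the Lyapunov recursion (\ref{eq:lyapunovEq}) using $(\bar\kappa,\bar\gamma)$-strong stability, which contributes the extra $\|P\|^2$ factors appearing in $\chi$.
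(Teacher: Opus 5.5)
The overall structure matches the paper's. Your split of $\tau^a-\tau^*$ into $(N^a-N^*)/(-\ln(1-\eta^a))$ plus $N^*$ times a difference of reciprocals is exactly the paper's $D_1/D_2$ decomposition. The sandwich $P(\Theta_*^i)-\chi_t^iI\preceq P(\mathcal{C}_{n_i(t)})\preceq P(\Theta_*^i)$ is Lemma~\ref{lem:tightnessSol}, which the paper simply imports from \cite{chekan2024any}, so the difficulty is not there.

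The gap is in your comparison of $\underline{\lambda}(\mathcal{H}(\mathcal{C}_{n_{i_k}(t_k^a)}))$ with $\underline{\lambda}(H(\Theta_*^{i_k}))$. You assert that the policy error enters quadratically. With $\Delta:=K(\mathcal{C}_{n_{i_k}(t_k^a)})-K(\Theta_*^{i_k})$ and $K_*:=K(\Theta_*^{i_k})$,
\[
(K_*+\Delta)^\top R^{i_k}(K_*+\Delta)-K_*^\top R^{i_k}K_*=\Delta^\top R^{i_k}K_*+K_*^\top R^{i_k}\Delta+\Delta^\top R^{i_k}\Delta ,
\]
which is first order in $\Delta$. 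Moreover, $\underline{\lambda}(Q+K^\top RK)$ is not stationary at $K_*$; already for scalars, $Q+RK^2$ has derivative $2RK_*\neq0$. Only the LQR cost is stationary at $K_*$. That is why the excess-cost identity gives you $\|\Delta\|^2\lesssim\chi^{i_k}$, i.e.\ merely $\|\Delta\|\lesssim\sqrt{\chi^{i_k}}$.

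Your route therefore yields $|\eta^a-\eta^*|\lesssim\sqrt{\chi^{i_k}}+\chi^{i_k}$, hence a $\sqrt{\chi^{i_k}}$ term in $\tau^a-\tau^*$. That term cannot be absorbed into $\bar C_1\chi^{i_k}+\bar C_2(\chi^{i_k})^2$ as $\chi^{i_k}\to0$. This matters downstream: with $\chi\lesssim n_i^{-1/2+\zeta}$ (Lemma~\ref{lem:useflboun}), summing $\sqrt{\chi}$ over visits only gives order $n_i^{3/4+\zeta/2}$ instead of $n_i^{1/2+\zeta}$, which would spoil the regret rate.

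The missing ingredient is a policy-perturbation bound that is \emph{linear} in $\chi$. The paper obtains it from the primal side rather than from cost suboptimality. Strong duality for the exact and relaxed SDP pairs converts the sandwich into
\[
\begin{pmatrix} Q^i & 0\\ 0 & R^i\end{pmatrix}\bullet\big(\Sigma(\Theta_*^i)-\Sigma(\mathcal{C}_{n_i(t)})\big)=\sigma_\omega^2\big(\operatorname{tr}P(\Theta_*^i)-\operatorname{tr}P(\mathcal{C}_{n_i(t)})\big)\le d_x\sigma_\omega^2\chi_t^i .
\]
This is turned into $\|\Sigma(\Theta_*^i)-\Sigma(\mathcal{C}_{n_i(t)})\|_F\lesssim\chi_t^i$. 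Since $K=\Sigma_{ux}\Sigma_{xx}^{-1}$ is Lipschitz in $\Sigma$ once $\Sigma_{xx}(\mathcal{C}_{n_i(t)})\succeq\tilde\sigma_{xx}I$, one gets $\|\Delta\|\le\frac{d_x}{\alpha_0}(1+\tilde\sigma_{xx}^{-1})\chi_t^i$. Plugging this into the identity above gives $O(\chi^{i_k})+O((\chi^{i_k})^2)$. That $\Delta^\top R\Delta$ term is where the quadratic term in the statement comes from, not from products of eigenvalue perturbations in the ratio.

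If you adopt this route, note one caveat. Passing from the weighted trace to a Frobenius-norm bound requires $\Sigma(\Theta_*^i)-\Sigma(\mathcal{C}_{n_i(t)})\succeq0$. That does not follow from the dual sandwich alone and needs its own justification.
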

For the case where the next mode \(j\) operates with the fallback controller, the corresponding dwell-time estimation error is characterized by Corollary~\ref{cor:LSBound}.

\subsection{Regret Bound Analysis} \label{sec:regban}
% In this subsection, we carry out the regret bound analysis for the proposed Algorithm \ref{alg:OSL3}. We leave the regret bound analysis of warm-up phase (Algorithm \ref{alg:IExp}) a side, as by adapting the analysis provided by \cite{cohen2019learning}, it is straight forward to see that for each mode the algorithm has regret bound of $\sum_{i=1}^{|\mathcal{M}|}\mathcal{O}(\sqrt{T_0^i})$ where $\mathcal{O}$ absorbs system ambient dimension dependent and logarithmic $T_0^i$ terms.

To establish an upper bound on the regret of Algorithm~\ref{alg:OSL3}, we first define a sequence of nested good events under which the confidence sets, controller synthesis procedure, and state-dependent quantities satisfy the required bounds. Specifically, consider the nested sequence of good events $\mathcal{E}^{\zeta}_{n_s-1}\subseteq \mathcal{E}^{\zeta}_{n_s-2}\subseteq \ldots\subseteq \mathcal{E}^{\zeta}_1\subseteq \mathcal{E}^{\zeta}_0$ 
where, for each $k$ 
\begin{align}
\nonumber \mathcal{E}^{\zeta}_{k}=\Big\{&\forall s = 0, \ldots, k, \quad \Theta_*^{i}\in \mathcal{C}_{n_{i}(t_s^a)},\quad  \mu_{n_i(t_k^a)} \|V^{-1}_{n_i(t_k^a)}\|
    \leq
    \bar{D}_i(n_i(t_k^a)+\bar{c}_i)^{-\frac{1}{2}+\zeta} \log \frac{t_k^a}{\delta},\,\forall i\in \mathcal{M},\\
    &\textit{and}\quad \|z_s\|^2\leq c^2_z\log \frac{t_k^a}{\delta})\Big\}.
\end{align}
The event $\mathcal{E}^{\zeta}_{k}$ collects the confidence ellipsoids evaluated at the switching times \(t_k^a\), which are used to construct the feedback gains and the minimum mode-dependent dwell times. It also includes the upper bound on $\mu_{n_i(t_k^a)} \|V^{-1}_{n_i(t_k^a)}\|$, established in Lemma~\ref{lem:useflboun}. Finally, it contains a uniform upper bound on the co-state norm, given by Lemma \ref{lem:defCbarz}.

We first establish that the above sequence of good events holds with high probability.

\begin{lemma} \label{lem:epseventprob}
Fix $\delta\in (0,\, \frac{1}{3|\mathcal{M}|})$, then $\mathbb{P}(\mathcal{E}_{n_s-1}) \ge 1 - 3|\mathcal{M}|\delta.$
\end{lemma}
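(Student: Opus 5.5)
The plan is a union bound over the three constituents of $\mathcal{E}^{\zeta}_{n_s-1}$. The event is a conjunction, so $\mathbb{P}(\mathcal{E}^c_{n_s-1})\le \mathbb{P}(\mathcal{A}^c)+\mathbb{P}(\mathcal{B}^c)+\mathbb{P}(\mathcal{Z}^c)$, where:
\begin{itemize}
\item $\mathcal{A}$ is the event that $\Theta_*^i\in\mathcal{C}_{n_i(t_s^a)}$ for all $s$ and all $i$;
\item $\mathcal{B}$ is the event that the bound $\mu_{n_i(t_k^a)}\|V^{-1}_{n_i(t_k^a)}\|\le \bar D_i (n_i(t_k^a)+\bar c_i)^{-1/2+\zeta}\log(t_k^a/\delta)$ holds for every $i$;
\item $\mathcal{Z}$ is the event that $\|z_s\|^2\le c_z^2\log(t_k^a/\delta)$ for all $s\le k$.
\end{itemize}
I will bound each of the three complements by $|\mathcal{M}|\delta$ and add them to get $3|\mathcal{M}|\delta$. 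The hypothesis $\delta<1/(3|\mathcal{M}|)$ only makes this bound nontrivial.

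\textbf{Confidence sets.} Fix a mode $i$. Its least-squares estimator uses only the samples $(z_s,x_{s+1})$ collected while $\sigma(s)=i$. The noise $\omega_{s+1}$ restricted to those times is still a martingale difference with respect to $\mathcal{F}_s$, because whether a time step belongs to mode $i$ is $\mathcal{F}_s$-measurable: switching times are decided from past data and revealed mode indices. So the self-normalized bound behind (\ref{eq:confSet1_tighterghfff2})--(\ref{radius_centralEl_realTime200}) applies, and it is uniform over the sample count $n_i$, since it is an anytime stopping-time argument. It gives $\Theta_*^i\in\mathcal{C}^i_{n}$ for all $n$ simultaneously with probability at least $1-\delta$. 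In particular the sets at the switching times $t_s^a$ are covered. A union bound over $i\in\mathcal{M}$ gives $\mathbb{P}(\mathcal{A}^c)\le|\mathcal{M}|\delta$.

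\textbf{Bound on $\mu\|V^{-1}\|$.} This is the content of Lemma~\ref{lem:useflboun}, applied per mode. The exploration noise $\eta^i_t$ with covariance $\Gamma^i_t\propto (n_i+\bar c_i)^{-\zeta}$ makes $\underline{\lambda}(V^i_{n})$ grow like $(n+\bar c_i)^{1-\zeta}$, up to a matrix-martingale concentration term. Meanwhile $\mu_n\lesssim r_n+\sqrt{r_n}\vartheta_i\|V_n\|^{1/2}$ grows like $\sqrt{n}\log$. Together these give the stated rate, with failure probability $\delta$ per mode uniformly in $n$, hence $|\mathcal{M}|\delta$ in total. This step is also where the bound on $\|V_n\|$ enters, so I intend to carry it out on the event $\mathcal{Z}$. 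Concretely, I will bound $\mathbb{P}(\mathcal{B}^c\cap\mathcal{Z})$ and absorb the intersection into the union bound.

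\textbf{State bound.} This is Lemma~\ref{lem:defCbarz}. On $\mathcal{A}$, Theorem~\ref{Stability_thm17} makes every certified policy $(\bar\kappa_i,\bar\gamma_i)$-strongly stable, and fallback policies are $(\kappa_0,\gamma_0)$-strongly stable. Lemma~\ref{thm:res_sequential_stability_resp} then controls norm growth across switches with rate $\bar{\alpha}$, which produces the factor $\tilde g/(1-\bar\alpha)$ inside $c_z$. Gaussian and martingale tail bounds on $\omega$ and $\eta$, union-bounded over times up to $t_k^a$, give the $\log(t_k^a/\delta)$ factor.

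\textbf{Main obstacle.} The difficulty is that the three events are mutually dependent. Stability requires $\mathcal{A}$ and the bound in $\mathcal{B}$ to certify the policies, $\mathcal{B}$ requires $\mathcal{Z}$ to bound $\|V_n\|$, and $\mathcal{Z}$ requires stability. I will break this circularity with the nesting $\mathcal{E}_{k}\subseteq\mathcal{E}_{k-1}$, arguing inductively over epochs. On $\mathcal{E}_{k-1}$, the epoch-$k$ policy is stabilizing, so bounds on $z_s$ for $s$ in epoch $k$ follow. The failure events $\mathcal{E}_{k-1}\setminus\mathcal{E}_k$ are then attributed to one of the three underlying uniform-in-time events. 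Because those events are defined uniformly in time and not per epoch, their probabilities are not summed over $k$, and the total stays at $3|\mathcal{M}|\delta$ rather than growing with $n_s$.
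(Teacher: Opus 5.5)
This is correct and is essentially the paper's argument: the paper charges $\delta$ per mode to each of three uniform-in-time primitive events (the confidence event $\mathcal{A}^i_t$, the eigenvalue lower bound $\mathcal{D}^i_t$, and the noise bound $\mathcal{G}^i_t$), then shows that the $\mu\|V^{-1}\|$ and $\|z_s\|$ bounds hold on $\bigcap_{i}(\mathcal{A}^i_t\cap\mathcal{D}^i_t\cap\mathcal{G}^i_t)$. That is exactly the attribution your last paragraph reaches, and your epoch-by-epoch induction makes explicit the bootstrapping that the paper only cites. Two small corrections: the literal bound $\mathbb{P}(\mathcal{B}^c)+\mathbb{P}(\mathcal{Z}^c)\le 2|\mathcal{M}|\delta$ in your first paragraph does not hold on its own, because those complements contain every stability failure, so the budget must be charged to the primitive events as you end up doing; and the pathwise bound on $\|z_s\|$ comes from the deterministic contraction $\|M_{t_k^a}\|\le\sqrt{\bar{\alpha}}$ in the proof of Lemma~\ref{lem:defCbarz}, not from the in-expectation Lemma~\ref{thm:res_sequential_stability_resp}.
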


Next, we divide the switching epochs into two classes depending on whether the
learned controller satisfies the certification condition. Define the set of
certified epochs as
\begin{align}
\mathcal{I}_{\mathrm{cert}}
=
\left\{
k:
(n_{i_k}(t_k^a)+\bar{c}_{i_k})^{\frac12-\zeta}
\geq
\log\frac{2d_xt_k^a}{\delta}
\right\},
\end{align}
and the set of fallback epochs as $\mathcal{I}_{\mathrm{fallback}}
=
\mathcal{I}\setminus\mathcal{I}_{\mathrm{cert}}$, where $\mathcal{I}=\{0,\ldots,n_s-1\}$ denotes the set of all switching epochs.
Accordingly, the per-epoch regret is decomposed as
\begin{align*}
R_{k,k+1}(\zeta)
=
\begin{cases}
R_{k,k+1}^{\mathrm{cert}},
&
k\in\mathcal{I}_{\mathrm{cert}},\\[1mm]
R_{k,k+1}^{\mathrm{fallback}},
&
k\in\mathcal{I}_{\mathrm{fallback}}.
\end{cases}
\end{align*}

The objective is to bound the accumulated regret $\mathbb{E}[R_{\zeta}(\mathcal{I})]
=
\mathbb{E}
\left[
\sum_{k=0}^{n_s-1}R_{k,k+1}(\zeta)
\right]$. To this end, we first analyze the truncated regret $\mathbb{E}[\tilde{R}_{\zeta}(\mathcal{I})]
=
\mathbb{E}
\left[
\sum_{k=0}^{n_s-1}
R_{k,k+1}(\zeta)
1_{\mathcal{E}^{\zeta}_{k}}
\right]$,
which captures the regret accumulated when the sequence of good events holds. The resulting bound is then converted into a bound on the original expected
regret by accounting for the probability of the event. Here, by a
slight abuse of notation, the subscript $\zeta$ emphasizes the dependence of
the regret on the perturbation-noise exponent $\zeta$ introduced in
(\ref{eq:Gamarslop}). 

The fallback epochs occur when the certification condition (\ref{eq:theresholdpol})
is not satisfied, in which case the algorithm operates the system using the
fallback controller. According to Lemma~\ref{lem:fallback_steps}, the number of
fallback epochs is bounded by $\mathcal{O}(
|\mathcal{M}|
\log^{^{\frac{2}{1-2\zeta}}}
\frac{n_s}{\delta}
)$. In contrast, the regret incurred during certified epochs requires a more
involved decomposition, which is presented in detail in Appendix~\ref{sec:app-reg}.
In particular, this decomposition includes the term
\begin{align*}
R^{(2)}_{k,k+1}
=
\sum_{t=t_k^a+\tau_{k,k+1}^*}^{t_{k+1}^a-1}
J_*^{i_k}1_{\mathcal{E}^{\zeta}_k},
\end{align*}
which captures the regret caused by the uncertainty in estimating the minimum
dwell time. Specifically, this term accounts for the additional cost incurred
when the system remains in a mode longer than the optimal dwell time due to
learning errors. The regret is defined relative to the feedback controllers
obtained from the solutions of the DAREs in each epoch. We refer the reader to
Appendix~\ref{sec:app-reg} for a detailed derivation of the regret decomposition
and its subsequent analysis.

The regret analysis reveals that the type of switching plays an important role. In particular, a switch from mode \(i_k\) to mode \(i_{k+1}\) such that \(P(\Theta_*^{i_k}) \prec P(\Theta_*^{i_{k+1}})\) is less desirable, as it introduces unfavorable terms into the regret bound. Motivated by this observation, we introduce the following definition.

\begin{definition}{(\textbf{Benign vs.\ Malignant Switches})}
\label{def:benign_malignant}
A switch from mode \(i_k\) to mode \(i_{k+1}\) is called \emph{benign} if $P(\Theta_*^{i_k}) \succeq P(\Theta_*^{i_{k+1}})$,
and \emph{malignant} otherwise.
\end{definition}

The following theorem summarizes the resulting regret bound for Algorithm~\ref{alg:OSL3}.

\begin{theorem}
\label{thm:RegretBound}
Fix $\delta\in (0,\, \frac{1}{3|\mathcal{M}|})$. With probability at least \(1-3|\mathcal{M}|\delta\), Algorithm~\ref{alg:OSL3} achieves the following expected regret bound:
\begin{align}
    \mathbb{E}\big[\mathcal{R}_{\zeta}(\mathcal{I})\big]
    \lesssim
    \max \bigg\{
    &
    \mathcal{O}\big(
        |\mathcal{M}|^{\zeta} n_s^{1-\zeta}
    \big),
    \nonumber\\
    &
    \mathcal{O}\bigg(
        n_m\log\frac{n_s}{\delta}
        +
        |\mathcal{M}|
        \left(
        \log\frac{n_s}{\delta}
        \right)^{\frac{3-2\zeta}{1-2\zeta}}
        +
        |\mathcal{M}|^{1-\gamma}
        n_s^{\gamma}
        \log^2\frac{n_s}{\delta}
    \bigg)
    \bigg\},
\end{align}
where \(n_m\) denotes the number of malignant switches in the revealed sequence \(\mathcal{I}\) (see Definition~\ref{def:benign_malignant}), and \(\gamma = \tfrac{1}{2} + \zeta\) for any \(\zeta \in (0,\tfrac{1}{2})\). The optimal expected regret $\mathbb{E}\big[\mathcal{R}_{*} (\mathcal{I})\big]$ is achieved when \(\zeta = \tfrac{1}{4}\), yielding a bound of order
\begin{align*}
    \mathbb{E}\big[\mathcal{R}_{*}(\mathcal{I})\big]
    \lesssim
    \mathcal{O}\left(
        n_m\log\frac{n_s}{\delta}
        +
        |\mathcal{M}|
        \log^5\frac{n_s}{\delta}
        +
        |\mathcal{M}|^{1/4}
        n_s^{3/4}
        \log^2\frac{n_s}{\delta}
    \right).
\end{align*}
\end{theorem}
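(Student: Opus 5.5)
The plan is to work on the good event $\mathcal{E}^{\zeta}_{n_s-1}$, which by Lemma~\ref{lem:epseventprob} holds with probability at least $1-3|\mathcal{M}|\delta$. On this event we bound the truncated regret $\mathbb{E}[\tilde{R}_{\zeta}(\mathcal{I})]$ and split it as $\sum_{k\in\mathcal{I}_{\mathrm{fallback}}}R^{\mathrm{fallback}}_{k,k+1}+\sum_{k\in\mathcal{I}_{\mathrm{cert}}}R^{\mathrm{cert}}_{k,k+1}$.

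\textbf{Fallback epochs.} By Lemma~\ref{lem:fallback_steps}, the number of fallback steps per mode is $\mathcal{O}((\log\frac{n_s}{\delta})^{2/(1-2\zeta)})$. On these steps:
\begin{itemize}
\item the per-step cost is controlled by the uniform bound $\|z_s\|^2\le c_z^2\log\frac{t}{\delta}$ contained in the good event;
\item the corresponding epoch lengths are bounded through the (F,F), (F,L) and (L,F) dwell-time formulas, which involve fixed quantities $P_{K_0^i}$ and hence contribute only logarithmic factors.
\end{itemize}
Together these give the $|\mathcal{M}|(\log\frac{n_s}{\delta})^{\frac{3-2\zeta}{1-2\zeta}}$ term. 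The extra exponent $1=\frac{1-2\zeta}{1-2\zeta}$ beyond $\frac{2}{1-2\zeta}$ comes from the $\log$ in the state bound.

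\textbf{Certified epochs.} Following the ARSLO decomposition of~\cite{chekan2024any}, each certified epoch's regret splits as
\[
R^{\mathrm{cert}}_{k,k+1}=R^{(1)}_{k,k+1}+R^{(2)}_{k,k+1}+R^{(3)}_{k,k+1}.
\]
\begin{itemize}
\item $R^{(1)}$ is the optimism/SDP gap over the first $\tau^*_{k,k+1}$ steps. We write it as a telescoping martingale term in $x_t^\top P(\mathcal{C})x_t$ plus a term $\mu_{n_i}\|P\|_*\, z_t^\top V^{-1}z_t$. The latter is summed per mode by the elliptical potential lemma.
\item $R^{(3)}$ is the exploration cost. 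It equals $\sum_t\operatorname{Tr}(R\Gamma_t)\propto\sum_i\sum_{n\le n_i}(n+\bar c_i)^{-\zeta}$.
\item $R^{(2)}$ is the dwell-time excess $(\tau^a-\tau^*)J_*^{i_k}$.
\end{itemize}
The lower bound $V_{n_i}\succeq \Omega(n_i^{1-\zeta})I$ from the perturbation, together with Lemma~\ref{lem:useflboun}, gives per-step $\mu\|V^{-1}\|\lesssim n_i^{-1/2+\zeta}\log$.
\begin{itemize}
\item Summing this over $n_i$ and applying Jensen over modes ($\sum_i n_i\le n_s\bar\tau$) gives the $|\mathcal{M}|^{1-\gamma}n_s^{\gamma}\log^2$ term.
\item Summing the exploration cost $R^{(3)}$ gives the $|\mathcal{M}|^{\zeta}n_s^{1-\zeta}$ term.
\end{itemize}
The boundary/telescoping terms at each switch, $\mathbb{E}[x_{t^a_{k+1}}^\top(P^{i_{k+1}}-P^{i_k})x_{t^a_{k+1}}]$, are nonpositive when the switch is benign, i.e.\ $P(\Theta_*^{i_k})\succeq P(\Theta_*^{i_{k+1}})$. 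Otherwise each is bounded by $\mathcal{O}(\log\frac{n_s}{\delta})$ using the state bound and Lemma~\ref{thm:res_sequential_stability_resp}, which produces the $n_m\log\frac{n_s}{\delta}$ term.

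\textbf{Main obstacle: $R^{(2)}$.} Theorem~\ref{Thm:dwellTimeError} gives
\[
\tau^a-\tau^*\le \bar C_0\chi^{i_{k+1}}+\bar C_1\chi^{i_k}+\bar C_2(\chi^{i_k})^2,
\]
with $\chi^i_t\lesssim \mu_{n_i}\|V^{-1}_{n_i}\|\lesssim (n_i+\bar c_i)^{-1/2+\zeta}\log$. The difficulty is that $\chi^{i_{k+1}}$ refers to the \emph{next} mode's visit count, so it is not directly summable along the epochs of mode $i_{k+1}$. I would handle this by charging each epoch's $\chi^{i_{k+1}}$ term to the next epoch, which is in mode $i_{k+1}$. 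Since $n_{i_{k+1}}$ only grows by the time that epoch starts, the sum becomes $\sum_i\sum_{\text{epochs in }i}(n_i+\bar c_i)^{-1/2+\zeta}$. Because epochs have length at least one, this is at most $\sum_i n_i^{1/2+\zeta}$, and Jensen again gives $|\mathcal{M}|^{1-\gamma}n_s^{\gamma}$ up to logs. The squared term is of lower order.

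\textbf{Assembly.} Adding the fallback and certified contributions and taking the maximum of the polynomial rates yields the stated bound. Choosing $\zeta=\tfrac14$ balances $1-\zeta=\gamma=\tfrac34$, and the fallback exponent $\frac{3-2\zeta}{1-2\zeta}$ becomes $5$. Finally, the contribution off the good event is handled by the stated probability $1-3|\mathcal{M}|\delta$.
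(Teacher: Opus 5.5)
Your outline follows the paper's own proof almost step for step: the restriction to the good event of Lemma~\ref{lem:epseventprob}, the ARSLO-style decomposition of certified epochs, per-epoch bounds on $\mu\|V^{-1}\|$ from the eigenvalue growth of $V$, H\"older over modes, the index shift that charges $\chi^{i_{k+1}}_{t_k^a}$ to the next epoch, and $\zeta=\tfrac14$.

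\textbf{A step that is false as written.} The boundary terms of your telescoping involve the \emph{relaxed} dual solutions, $x_{t^a_{k+1}}^\top\big(P(\mathcal{C}_{n_{i_{k+1}}(t_{k+1}^a)})-P(\mathcal{C}_{n_{i_k}(t_k^a)})\big)x_{t^a_{k+1}}$. Benignity, $P(\Theta_*^{i_k})\succeq P(\Theta_*^{i_{k+1}})$, is a statement about the \emph{true} Riccati solutions. It does not make this difference negative semidefinite: if mode $i_k$ is still poorly learned, $P(\mathcal{C}_{n_{i_k}(t_k^a)})$ can sit well below $P(\Theta_*^{i_k})$. You need both sides of the sandwich in Lemma~\ref{lem:tightnessSol}, which give
\[
P(\mathcal{C}_{n_{i_{k+1}}(t_{k+1}^a)})-P(\mathcal{C}_{n_{i_k}(t_k^a)})\preceq\big(P(\Theta_*^{i_{k+1}})-P(\Theta_*^{i_k})\big)+\chi^{i_k}_{t_k^a}I .
\]
So every switch, benign or malignant, pays an extra $X_*^2\chi^{i_k}_{t_k^a}$; this is exactly the paper's ``Term 1''. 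Summed over switches it is the same $\chi$-sum you already control in your $R^{(2)}$ paragraph. The rate $|\mathcal{M}|^{1-\gamma}n_s^{\gamma}\log^2\frac{n_s}{\delta}$ therefore absorbs it and the theorem survives, but the argument must include this term.

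\textbf{Where you depart from the paper: fallback epochs.} You bound their cost directly by (number of fallback steps)$\times\,\alpha_1c_z^2\log\frac{t}{\delta}$. The paper instead rewrites the fallback stage cost through the Lyapunov equation of $K_0^{i_k}$. It then merges the resulting telescoping term $R^{(21)}$ with the certified $R^{(11)}$ into a single chain across all epochs, paying $x^\top(P_{K_0^{i_k}}-P(\Theta_*^{i_k}))x$ at fallback boundaries. Your route is simpler and yields the same $|\mathcal{M}|(\log\frac{n_s}{\delta})^{\frac{3-2\zeta}{1-2\zeta}}$. However, it breaks the certified telescoping chain at every fallback epoch, and each break leaves an unmatched boundary term of size up to $\|P(\mathcal{C})\|X_*^2$. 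You should note that there are at most polylogarithmically many such breaks, so they fall into the same term.

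\textbf{Two smaller points.}
\begin{itemize}
\item $R^{(1)}$ must run over the whole epoch $[t_k^a,t_{k+1}^a)$, with $J_*^{i_k}$ subtracted at every step, not only over the first $\tau^*_{k,k+1}$ steps. Otherwise the excess steps' cost beyond $(\tau^a-\tau^*)J_*^{i_k}$ is unaccounted for, and the telescoping does not reach $t^a_{k+1}$.
\item Charging $\chi^{i_{k+1}}_{t_k^a}$ to epoch $k+1$ works because $n_{i_{k+1}}$ is \emph{unchanged} during epoch $k$. A growing count would move $(n+\bar c_i)^{-\frac12+\zeta}$ in the wrong direction, so ``only grows'' is not the right justification.
\end{itemize}
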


Let \(n_s = n_b + n_m\) denote the total number of switches, where \(n_b\) and \(n_m\) represent the numbers of benign and malignant switches, respectively. We observe that the regret upper bound depends primarily on the number of modes \(|\mathcal{M}|\) and the number of malignant switches, while its dependence on the total number of switches is only sublinear.

\section{Conclusion}\label{sec:conclusion}

In this paper, we propose an algorithm that guarantees safe switching with minimum cost in a setting where switching sequences are revealed online. Our algorithm is based on the Optimism in the Face of Uncertainty  principle and integrates system identification, control, and dwell-time design into a unified framework. By leveraging confidence sets constructed for parameter estimates, we develop a strategy for accurately estimating the minimum dwell time. Furthermore, we prove that the proposed algorithm achieves an expected regret of \(\mathcal{O}(|\mathcal{M}|^{1/4} n_s^{3/4} + n_m)\) relative to the benchmark setting in which all mode parameters are known, where \(n_s\) denotes the number of switches, \(|\mathcal{M}|\) is the number of modes, and \(n_m\) is the number of malignant switches. A possible extension of this work involves a setting in which the selection of the next mode becomes an integral part of the strategy design. For example, the LQ formulation studied in~\cite{li2023online} has the potential to jointly optimize both the switching time and the selection of the next mode, with the objective of identifying the most effective actuation mode while minimizing regret. 
% Additionally, we demonstrate that the expected time to complete the task of safe switching according to a finite switch sequence with minimum cost is of the order $\mathcal{O}(|\mathcal{M}|\sqrt{ns})$, where $ns$ represents the total number of switches and $|\mathcal{M}|$ is the number of mode candidates. 

\bibliography{main}

@inproceedings{abbasi2011regret,
  title={Regret bounds for the adaptive control of linear quadratic systems},
  author={Abbasi-Yadkori, Yasin and Szepesv{\'a}ri, Csaba},
  booktitle={Proceedings of the 24th Annual Conference on Learning Theory},
  pages={1--26},
  year={2011}
}

@article{cohen2019learning,
  title={Learning Linear-Quadratic Regulators Efficiently with only sqrt(T) Regret},
  author={Cohen, Alon and Koren, Tomer and Mansour, Yishay},
  journal={arXiv preprint arXiv:1902.06223},
  year={2019}
}

@inproceedings{fazel2018global,
  title={Global convergence of policy gradient methods for the linear quadratic regulator},
  author={Fazel, Maryam and Ge, Rong and Kakade, Sham and Mesbahi, Mehran},
  booktitle={International conference on machine learning},
  pages={1467--1476},
  year={2018},
  organization={PMLR}
}

@article{mania2019certainty,
  title={Certainty equivalence is efficient for linear quadratic control},
  author={Mania, Horia and Tu, Stephen and Recht, Benjamin},
  journal={Advances in Neural Information Processing Systems},
  volume={32},
  year={2019}
}

@inproceedings{dai2018moments,
  title={A moments based approach to designing MIMO data driven controllers for switched systems},
  author={Dai, Tianyu and Sznaier, Mario},
  booktitle={2018 IEEE Conference on Decision and Control (CDC)},
  pages={5652--5657},
  year={2018},
  organization={IEEE}
}

@article{kenanian2019data,
  title={Data driven stability analysis of black-box switched linear systems},
  author={Kenanian, Joris and Balkan, Ayca and Jungers, Raphael M and Tabuada, Paulo},
  journal={Automatica},
  volume={109},
  pages={108533},
  year={2019},
  publisher={Elsevier}
}

@article{rotulo2022online,
  title={Online learning of data-driven controllers for unknown switched linear systems},
  author={Rotulo, Monica and De Persis, Claudio and Tesi, Pietro},
  journal={Automatica},
  volume={145},
  pages={110519},
  year={2022},
  publisher={Elsevier}
}

@article{dai2022convex,
  title={A convex optimization approach to synthesizing state feedback data-driven controllers for switched linear systems},
  author={Dai, Tianyu and Sznaier, Mario},
  journal={Automatica},
  volume={139},
  pages={110190},
  year={2022},
  publisher={Elsevier}
}

@article{chekan2024learn,
  title={Learn and Control While Switching: Guaranteed Stability and Sublinear Regret},
  author={Chekan, Jafar Abbaszadeh and Langbort, C{\'e}dric},
  journal={IEEE Transactions on Automatic Control},
  volume={69},
  number={12},
  pages={8433--8448},
  year={2024},
  publisher={IEEE}
}

@article{zhu2015optimal,
  title={Optimal control of hybrid switched systems: A brief survey},
  author={Zhu, Feng and Antsaklis, Panos J},
  journal={Discrete Event Dynamic Systems},
  volume={25},
  number={3},
  pages={345--364},
  year={2015},
  publisher={Springer}
}

@article{zhao2008stability,
  title={On stability, L2-gain and H${\infty}$ control for switched systems},
  author={Zhao, Jun and Hill, David J},
  journal={Automatica},
  volume={44},
  number={5},
  pages={1220--1232},
  year={2008},
  publisher={Elsevier}
}

@article{lin2009stability,
  title={Stability and stabilizability of switched linear systems: a survey of recent results},
  author={Lin, Hai and Antsaklis, Panos J},
  journal={IEEE Transactions on Automatic control},
  volume={54},
  number={2},
  pages={308--322},
  year={2009},
  publisher={IEEE}
}

@article{liberzon1999stability,
  title={Stability of switched systems: a Lie-algebraic condition},
  author={Liberzon, Daniel and Hespanha, Joao P and Morse, A Stephen},
  journal={Systems \& Control Letters},
  volume={37},
  number={3},
  pages={117--122},
  year={1999},
  publisher={Elsevier}
}

@inproceedings{hespanha1999stability,
  title={Stability of switched systems with average dwell-time},
  author={Hespanha, Joao P and Morse, A Stephen},
  booktitle={Proceedings of the 38th IEEE conference on decision and control (Cat. No. 99CH36304)},
  volume={3},
  pages={2655--2660},
  year={1999},
  organization={IEEE}
}

@inproceedings{lale2022reinforcement,
  title={Reinforcement learning with fast stabilization in linear dynamical systems},
  author={Lale, Sahin and Azizzadenesheli, Kamyar and Hassibi, Babak and Anandkumar, Animashree},
  booktitle={International Conference on Artificial Intelligence and Statistics},
  pages={5354--5390},
  year={2022},
  organization={PMLR}
}

@article{abbasi2011improved,
  title={Improved algorithms for linear stochastic bandits},
  author={Abbasi-Yadkori, Yasin and P{\'a}l, D{\'a}vid and Szepesv{\'a}ri, Csaba},
  journal={Advances in neural information processing systems},
  volume={24},
  year={2011}
}

@inproceedings{cohen2018online,
  title={Online linear quadratic control},
  author={Cohen, Alon and Hasidim, Avinatan and Koren, Tomer and Lazic, Nevena and Mansour, Yishay and Talwar, Kunal},
  booktitle={International Conference on Machine Learning},
  pages={1029--1038},
  year={2018},
  organization={PMLR}
}

@article{li2023online,
  title={Online switching control with stability and regret guarantees},
  author={Li, Yingying and Preiss, James A and Li, Na and Lin, Yiheng and Wierman, Adam and Shamma, Jeff},
  journal={Proceedings of Machine Learning Research vol XX},
  volume={1},
  pages={29},
  year={2023}
}

@inproceedings{du2022data,
  title={Data-driven control of markov jump systems: Sample complexity and regret bounds},
  author={Du, Zhe and Sattar, Yahya and Tarzanagh, Davoud Ataee and Balzano, Laura and Ozay, Necmiye and Oymak, Samet},
  booktitle={2022 American Control Conference (ACC)},
  pages={4901--4908},
  year={2022},
  organization={IEEE}
}

@inproceedings{boffi2021regret,
  title={Regret bounds for adaptive nonlinear control},
  author={Boffi, Nicholas M and Tu, Stephen and Slotine, Jean-Jacques E},
  booktitle={Learning for Dynamics and Control},
  pages={471--483},
  year={2021},
  organization={PMLR}
}

@article{kakade2020information,
  title={Information theoretic regret bounds for online nonlinear control},
  author={Kakade, Sham and Krishnamurthy, Akshay and Lowrey, Kendall and Ohnishi, Motoya and Sun, Wen},
  journal={Advances in Neural Information Processing Systems},
  volume={33},
  pages={15312--15325},
  year={2020}
}

@inproceedings{chen2021black,
  title={Black-box control for linear dynamical systems},
  author={Chen, Xinyi and Hazan, Elad},
  booktitle={Conference on Learning Theory},
  pages={1114--1143},
  year={2021},
  organization={PMLR}
}

@article{dean2020sample,
  title={On the sample complexity of the linear quadratic regulator},
  author={Dean, Sarah and Mania, Horia and Matni, Nikolai and Recht, Benjamin and Tu, Stephen},
  journal={Foundations of Computational Mathematics},
  volume={20},
  number={4},
  pages={633--679},
  year={2020},
  publisher={Springer}
}

@book{liberzon2003switching,
  title={Switching in systems and control},
  author={Liberzon, Daniel},
  volume={190},
  year={2003},
  publisher={Springer}
}

@article{zhao2011stability,
  title={Stability and stabilization of switched linear systems with mode-dependent average dwell time},
  author={Zhao, Xudong and Zhang, Lixian and Shi, Peng and Liu, Ming},
  journal={IEEE Transactions on Automatic Control},
  volume={57},
  number={7},
  pages={1809--1815},
  year={2011},
  publisher={IEEE}
}

@inproceedings{sayedana2023relative,
  title={Relative almost sure regret bounds for certainty equivalence control of markov jump systems},
  author={Sayedana, Borna and Afshari, Mohammad and Caines, Peter E and Mahajan, Aditya},
  booktitle={2023 62nd IEEE Conference on Decision and Control (CDC)},
  pages={6629--6634},
  year={2023},
  organization={IEEE}
}

@article{sayedana2024strong,
  title={Strong consistency and rate of convergence of switched least squares system identification for autonomous markov jump linear systems},
  author={Sayedana, Borna and Afshari, Mohammad and Caines, Peter E and Mahajan, Aditya},
  journal={IEEE Transactions on Automatic Control},
  year={2024},
  publisher={IEEE}
}

@article{chekan2024any,
  title={Any-Time Regret-Guaranteed Algorithm for Control of Linear Quadratic Systems},
  author={Chekan, Jafar Abbaszadeh and Langbort, Cedric},
  journal={arXiv preprint arXiv:2406.07746},
  year={2024}
}

@article{gurvits2009np,
  title={On the NP-hardness of checking matrix polytope stability and continuous-time switching stability},
  author={Gurvits, Leonid and Olshevsky, Alexander},
  journal={IEEE transactions on automatic control},
  volume={54},
  number={2},
  pages={337--341},
  year={2009},
  publisher={IEEE}
}

@article{joshi2025ai,
  title={AI as an intervention: improving clinical outcomes relies on a causal approach to AI development and validation},
  author={Joshi, Shalmali and Urteaga, I{\~n}igo and Van Amsterdam, Wouter AC and Hripcsak, George and Elias, Pierre and Recht, Benjamin and Elhadad, No{\'e}mie and Fackler, James and Sendak, Mark P and Wiens, Jenna and others},
  journal={Journal of the American Medical Informatics Association},
  volume={32},
  number={3},
  pages={589--594},
  year={2025},
  publisher={Oxford University Press}
}

@article{liang2025randomization,
  title={Randomization inference when n equals one},
  author={Liang, Tengyuan and Recht, Benjamin},
  journal={Biometrika},
  volume={112},
  number={2},
  pages={asaf013},
  year={2025},
  publisher={Oxford University Press}
}
\bibliographystyle{tmlr}

 \newpage
\appendix

\section{Preliminaries and Supplementary Results}

In this section, we provide preliminary results and supplementary materials to support the theoretical analysis presented in this paper.

\begin{lemma}[Lemma 25 of \cite{cohen2019learning}]
\label{lem:komaki}
Let $X$ and $Z$ be matrices of equal size, and let $Y$ be a $(\kappa,\gamma)$-stable matrix. If $X \preceq Y^\top X Y + Z$, then $X \preceq \frac{\kappa^2}{\gamma}\|Z\| I$.
\end{lemma}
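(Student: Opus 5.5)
The plan is the standard discrete-time Lyapunov comparison argument. Unroll the matrix inequality and control the tail term using the strong-stability factorization $Y = HLH^{-1}$ from Definition~\ref{def:sequentially}. Note that the statement implicitly requires $Z$ to be PSD (or at least that we only use $Z \preceq \|Z\| I$), and that $X$ is symmetric, so that the Loewner order makes sense. I will use only $Z \preceq \|Z\| I$.

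\textbf{Step 1: iterate the inequality.} Conjugating $X \preceq Y^\top X Y + Z$ by $Y^k$ preserves the Loewner order, since $M \preceq N$ implies $A^\top M A \preceq A^\top N A$. Applying this $n$ times gives
\begin{align*}
X \preceq (Y^\top)^n X Y^n + \sum_{k=0}^{n-1} (Y^\top)^k Z Y^k \preceq (Y^\top)^n X Y^n + \|Z\| \sum_{k=0}^{n-1} (Y^\top)^k Y^k .
\end{align*}

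\textbf{Step 2: the first term vanishes.} Since $Y^n = H L^n H^{-1}$, we have $\|Y^n\| \le \|H\|\|H^{-1}\|(1-\gamma)^n \le \kappa (1-\gamma)^n \to 0$. Hence $(Y^\top)^n X Y^n \to 0$ in norm, and taking $n\to\infty$ gives
\begin{align*}
X \preceq \|Z\| \sum_{k\ge 0} (Y^\top)^k Y^k \preceq \|Z\| \sum_{k \ge 0} \|Y^k\|^2 I .
\end{align*}

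\textbf{Step 3: sum the series.} Using $\|Y^k\|^2 \le \kappa^2(1-\gamma)^{2k}$, we get
\begin{align*}
\sum_k \kappa^2 (1-\gamma)^{2k} = \frac{\kappa^2}{1-(1-\gamma)^2} = \frac{\kappa^2}{\gamma(2-\gamma)} \le \frac{\kappa^2}{\gamma},
\end{align*}
where the last inequality uses $\gamma \in (0,1)$. This yields $X \preceq \frac{\kappa^2}{\gamma}\|Z\| I$.

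The only delicate point is Step 2. There one needs that the Loewner-order inequality survives the limit, which holds because the PSD cone is closed, and that the sequence $(Y^\top)^n X Y^n$ indeed tends to $0$, which follows from the geometric decay supplied by strong stability. Everything else is routine.
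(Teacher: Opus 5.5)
Your argument is correct and is essentially the standard proof behind the cited result. The paper gives no proof of its own beyond pointing to Lemma 25 of \cite{cohen2019learning}, which likewise unrolls $X \preceq Y^\top X Y + Z$ into $\sum_{k\ge 0}(Y^\top)^k Z Y^k$ and bounds it using $\|Y^k\| \le \kappa(1-\gamma)^k$ together with $\sum_k (1-\gamma)^{2k} \le 1/\gamma$; your explicit handling of the tail $(Y^\top)^n X Y^n \to 0$ and your remark that only symmetry of $X$ and $Z$ is needed are useful clarifications of a statement the paper leaves informal.
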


\subsection*{Primal-Dual SDP formulation of LQR}
 \label{sec:primaldualSDP}

For a system $\Theta_*$ with quadratic cost matrices $Q$ and $R$, the infinite-horizon LQR problem can be reformulated as the following semi-definite program (SDP) (see \cite{cohen2018online}):

\begin{align}
\nonumber\textrm{minimize}\; \; \; \begin{pmatrix}
Q & 0 \\
0 & R
\end{pmatrix}\bullet \Sigma\\
\nonumber\textrm{S.t.}\; \; \; \Sigma_{xx}={\Theta_*}^\top\Sigma\Theta_*+W\\
\Sigma\succ 0\label{eq:SDPKhali}
\end{align}
Here, $\Sigma$ denotes the steady-state covariance matrix of the state-action pair $(x,u)$, partitioned as
\begin{align*}
\Sigma= \begin{pmatrix}
\Sigma_{xx} & \Sigma_{xu} \\
\Sigma_{ux} & \Sigma_{uu}
\end{pmatrix},
\end{align*}
where $\Sigma_{xx} \in  \mathbb{R}^{n \times n}$, $\Sigma_{uu} \in  \mathbb{R}^{m \times m}$, and $\Sigma_{ux}=\Sigma_{xu} \in  \mathbb{R}^{m \times n}$. 

Let $\Sigma(\Theta_*)$ denote the optimal solution of the SDP. The corresponding optimal feedback gain is given by
\begin{align*}
K_*(\Theta_*)
=
\Sigma_{ux}(\Theta_*)
\Sigma_{xx}^{-1}(\Theta_*).
\end{align*}

Since $W\succ0$, it follows that $\Sigma_{xx}(\Theta_*)\succ0$, and hence $\Sigma_{xx}^{-1}(\Theta_*)$ exists.

Moreover,, by applying the stabilizing policy $K_*(\Theta_*)$, the closed-loop state converges to a stationary distribution with covariance $X=\mathbb{E}[xx^T]$. In this case, the matrix
\begin{align*}
\mathcal{E}(K(\Theta_*))= \begin{pmatrix}
X & XK^\top (\Theta_*)\\
K(\Theta_*)X & K(\Theta_*)XK^\top(\Theta_*)
\end{pmatrix}
\end{align*}
constitutes a feasible solution to the SDP.

The dual of this program is written as follows:

\begin{align}
\begin{array}{rrclcl}
\displaystyle \max & \multicolumn{1}{l}{P\bullet W}\\
\textrm{s.t.} & \begin{pmatrix}
Q-P & 0 \\
0 & R
\end{pmatrix}+\Theta_* P {\Theta_*}^\top=0\\
&P\succeq 0 
\end{array}.\label{SDPDualkhali}
\end{align}
Let $P(\Theta_*)$ denote an optimal solution of the dual program. The corresponding optimal infinite-horizon average expected cost is defined as
$J_*=P(\Theta_*) \bullet W$. Under standard stabilizability and detectability assumptions, strong duality holds between the primal and dual SDPs. Consequently, the optimal values of the two programs coincide:
\begin{align*}
    \begin{pmatrix}
Q & 0 \\
0 & R
\end{pmatrix}\bullet \Sigma(\Theta_*)=P(\Theta_*) \bullet W.
\end{align*}

\newpage
\section{Omitted Proofs}
\label{partD}
In this appendix, we delve deeper into the proposed algorithm and provide a rigorous analysis, along with proofs of the main results and regret bounds.

\subsection{Proofs of Section \ref{eq:knownsetting}}

\begin{proof}[{\bf Proof of Theorem \ref{thm:alfabet_K_S}}]
  Let the system operating in mode $j$, under a stabilizing policy $K^{j}$, admit the candidate Lyapunov function
	\begin{align*}
	\mathcal{V}_{j}(t)=x_t^\top P_{K^{j}}x_t
	\end{align*}
where $P_{K^{j}}$ satisfies the Lyapunov equation
    
   \begin{align}
     P_{K^j}=Q^j+K^{j^\top} R^jK^j+(A^j_*+B^j_*K^j)^\top P_{K^j}(A^j_*+B^j_*K^j). \label{eq:LyapunSta}
 \end{align}  
    
By the Rayleigh--Ritz inequality, we obtain
 
	\begin{align}
\underline{\lambda}\big(P_{K^j}\big)\mathbb{E}[x_t^\top x_t| \mathcal{F}_{t-1}]\leq \mathbb{E}[\mathcal{V}_j(t)| \mathcal{F}_{t-1}]\leq \overline{\lambda}\big(P_{K^j}\big)\mathbb{E}[x_t^\top x_t| \mathcal{F}_{t-1}].\label{Ray-Rit}
	\end{align}

We next compute the one-step evolution:

	\begin{align}
\mathbb{E}[\mathcal{V}_j(t+1)| \mathcal{F}_t]-\mathcal{V}_j(t)=& x^\top_t(A^j_*+B^j_*K^j)^\top P_{K^j} (A^i_*+B^i_*K^j)x_t+\mathbb{E}[\omega^\top_{t+1}P_{K^j}\omega_{t+1}|\mathcal{F}_t]-x^T_tP_{K^j}x_t. \label{eq:LyapunocIneq}
	\end{align}

Using (\ref{eq:LyapunSta}), this simplifies to  
\begin{align}
\mathbb{E}[\mathcal{V}_j(t+1)| \mathcal{F}_t]-\mathcal{V}_j(t)=-x_t^\top H_{K^j}x_t+\mathbb{E}[\omega^\top_{t+1}P_{K^j}\omega_{t+1}|\mathcal{F}_t]
 \label{eq:useful3}
\end{align}
where 
\begin{align*}
   H_{K^j}=Q^j+{K^j}^\top R^jK^j.
\end{align*}

Hence,
	\begin{align*}
&\mathbb{E}[\mathcal{V}_i(t+1)| \mathcal{F}_t]-\mathbb{E}[\mathcal{V}_i(t)| \mathcal{F}_{t}] \leq  -\underline{\lambda}\big(H_{K^j}\big)\mathbb{E}[x_t^\top x_t| \mathcal{F}_{t}]+\overline{\lambda}\big(P_{K^j}\big)\sigma_{\omega}^2,
	\end{align*}
	combining which with (\ref{Ray-Rit}) gives
	\begin{align}
\mathbb{E}[\mathcal{V}_i(t+1)| \mathcal{F}_t]\leq\big(1-\frac{\underline{\lambda}\big(H_{K^j}\big)}{\overline{\lambda}\big(P_{K^j}\big)}\big)\mathbb{E}[\mathcal{V}_i(t)| \mathcal{F}_{t}]+\overline{\lambda}\big(P_{K^j}\big)\sigma_{\omega}^2. \label{eq:1it}
	\end{align}
Similarly, for time step $t+2$,
	\begin{align*}
\mathbb{E}[\mathcal{V}_i(t+2)| \mathcal{F}_{t+1}]\leq\big(1-\frac{\underline{\lambda}\big(H_{K^j}\big)}{\overline{\lambda}\big(P_{K^j}\big)}\big)\mathbb{E}[\mathcal{V}_i(t+1)| \mathcal{F}_{t+1}]+\overline{\lambda}\big(P_{K^j}\big)\sigma_{\omega}^.
	\end{align*}
Using the tower property,
\[
\mathbb{E}\big[\mathbb{E}[\mathcal{V}_j(t+1)\mid \mathcal{F}_{t+1}] \mid \mathcal{F}_{t}\big]
= \mathbb{E}[\mathcal{V}_j(t+1) \mid \mathcal{F}_{t}]
\]

we obtain

	\begin{align*}
\mathbb{E}[\mathcal{V}_i(t+2)| \mathcal{F}_{t}]\leq\big(1-\frac{\underline{\lambda}\big(H_{K^j}\big)}{\overline{\lambda}\big(P_{K^j}\big)}\big)\mathbb{E}[\mathcal{V}_i(t+1)| \mathcal{F}_{t}]+\overline{\lambda}\big(P_{K^j}\big)\sigma_{\omega}^2.
	\end{align*}
Iterating (\ref{eq:1it}) yields

\begin{align}
\nonumber \mathbb{E}[\mathcal{V}_i(t+2)| \mathcal{F}_{t}]\leq &\big(1-\frac{\underline{\lambda}\big(H_{K^j}\big)}{\overline{\lambda}\big(P_{K^j}\big)}\big)^2\mathbb{E}[\mathcal{V}_i(t)| \mathcal{F}_{t}]+\big(1-\frac{\underline{\lambda}\big(H_{K^j}\big)}{\overline{\lambda}\big(P_{K^j}\big)}\big) \overline{\lambda}\big(P_{K^j}\big)\sigma_{\omega}^2+\overline{\lambda}\big(P_{K^j}\big)\sigma_{\omega}^2.
\end{align}

Suppose the system starts operating in mode $i_k$ at time $t_k$, and let $t_{k+1}$ denote an instant the next switch to mode $i_{k+1}$. Then, by induction, we have

\begin{align}
\mathbb{E}[\mathcal{V}_{i_k}(t_{k+1})| \mathcal{F}_{t_k}]\leq &\bigg(1-\frac{\underline{\lambda}\big(H_{K^{i_k}}\big)}{\overline{\lambda}\big(P_{K^{i_k}}\big)}\bigg)^{t_{k+1}-t_k}\mathbb{E}[\mathcal{V}_{i_k}(t_k)| \mathcal{F}_{t_k}]+\overline{\lambda}\big(P_{K^{i_k}}\big)\sigma_{\omega}^2\sum_{j=0}^{t_{k+1}-t_k-1} \bigg(1-\frac{\underline{\lambda}\big(H_{K^{i_k}}\big)}{\overline{\lambda}\big(P_{K^{i_k}}\big)}\bigg)^{j}.\label{eq:jbsly}
\end{align}
where $\mathcal{V}_{i_k}(t_{k+1})$ denotes the candidate Lyapunov function evaluated immediately before the switching instant $t_{k+1}$, while the system is still operating in mode $i_k$.

On the other hand, we have 

  	\begin{align}
\nonumber \mathbb{E}[\mathcal{V}_{i_k}(t_{k+1})| \mathcal{F}_{t_k}] \geq  \underline{\lambda}\big(P_{K^{i_k}}\big)\mathbb{E}[x^{-^\top}_{t_{k+1}} x^-_{t_{k+1}}| \mathcal{F}_{t_k}]&=\underline{\lambda}\big(P_{K^{i_k}}\big)\mathbb{E}[x_{t_{k+1}}^\top x_{t_{k+1}}| \mathcal{F}_{t_k}]\\
&\geq \frac{\underline{\lambda}\big(P_{K^{i_k}}\big)}{\overline{\lambda}\big(P_{K^{i_{k+1}}}\big)}\mathbb{E}[\mathcal{V}_{i_{k+1}}(t_{k+1})| \mathcal{F}_{t_{k}}]. \label{eq:limineq}
	\end{align}
where $x^-_{t_{k+1}}$ denotes the state immediately before the switching instant $t_{k+1}$, which, by continuity of the state trajectory, satisfies $x^-_{t_{k+1}} = x_{t_{k+1}}$. Moreover, $\mathcal{V}_{i_{k+1}}(t_{k+1})$ denotes the candidate Lyapunov function evaluated immediately after the switching instant $t_{k+1}$, when the system starts operating in mode $i_{k+1}$.

Define
\begin{align*}
  \bar{\eta}(K^{i_k}):= \frac{\underline{\lambda}\big(H_{K^{i_{k}}}\big)}{\overline{\lambda}\big(P_{K^{i_{k}}}\big)}.
\end{align*}

Applying (\ref{eq:limineq}) to (\ref{eq:jbsly}) yields
    \begin{align*}
\nonumber \mathbb{E}[\mathcal{V}_{i_{k+1}}(t_{k+1})| \mathcal{F}_{t_k}]\leq & \frac{\overline{\lambda}\big(P_{K^{i_{k+1}}}\big)}{\underline{\lambda}\big(P_{K^{i_{k}}}\big)} \bigg(1-\bar{\eta}(K^{i_k})\bigg)^{t_{k+1}-t_k}\mathbb{E}[\mathcal{V}_{i_k}(t_k)| \mathcal{F}_{t_k}]\\
&+\overline{\lambda}\big(P_{K^{i_k}}\big)\sigma_{\omega}^2  \frac{\overline{\lambda}\big(P_{K^{i_{k+1}}}\big)}{\underline{\lambda}\big(P_{K^{i_{k}}}\big)}\sum_{j=0}^{t_{k+1}-t_k-1} \bigg(1-\bar{\eta}(K^{i_k})\bigg)^{j} .
\end{align*}

Applying (\ref{Ray-Rit}) again and defining
\begin{align*}
     \bar{\rho} (K^{i_k}) :=\frac{\overline{\lambda}\big(P_{K^{i_{k+1}}}\big)}{\underline{\lambda}\big(P_{K^{i_{k}}}\big)}, \quad  \bar{\mathcal{X}}(K^{i_{k+1}}):= \frac{\overline{\lambda}\big(P_{K^{i_k}}\big)}{\underline{\lambda}\big(P_{K^{i_{k+1}}})},
\end{align*}
we obtain
\begin{align}\label{eq:ineqcrudbet}
    \mathbb{E}[x^\top _{t_{k+1}}x_{t_{k+1}}| \mathcal{F}_{t_k}]&\leq \bar{\mathcal{X}}(K^{i_{k+1}}) \bar{\rho}(K^{i_k})\bigg(1-\bar{\eta} \big(K^{i_k})\bigg)^{t_{k+1}-t_k}  \mathbb{E}[x^\top _{t_k}x_{t_k}| \mathcal{F}_{t_k}]\cr
    &\qquad+\frac{\bar{\mathcal{X}}(K^{i_{k+1}}) \bar{\rho}(K^{i_k})}{\bar{\eta} \big(K^{i_k})} \sigma_{\omega}^2. 
\end{align}

Let $t_{k+1} - t_k$ denote the duration spent in mode $i_k$, and let it be treated as a decision variable. We aim to specify this duration such that

\begin{align*}
    \bar{\mathcal{X}}(K^{i_{k+1}}) \bar{\rho}(K^{i_k})\bigg(1-\bar{\eta} \big(K^{i_k})\bigg)^{t_{k+1}-t_k} \leq \bar{\alpha},
\end{align*}
where $\bar{\alpha}$ is a user-defined parameter. Equivalently,
\begin{align*}
     & \ln \bar{\rho}(K_{i_k})+ (t_{k+1}-t_k) \ln \bigg(1-\bar{\eta} \big(K_{i_k}\big)\bigg) +\ln \bar{\mathcal{X}}(K_{i_{k+1}})\leq  \ln \bar{\alpha}.
\end{align*}
Thus, a sufficient lower bound on the dwell time is
\begin{align}
    t_{k+1}-t_k\geq -\frac{\ln \bar{\rho}(K_{i_k})+\ln \mathcal{X}(K_{i_{k+1}})-\ln \bar{\alpha}}{\ln \big(1-\bar{\eta} \big(K_{i_k})\big)}. \label{eq:cond}
\end{align}

\end{proof}

\begin{proof}[{\bf Proof of Lemma  \ref{lem:betaDef}}]
Consider (\ref{eq:LyapunSta}). By Definition \ref{def:sequentially}, the policy $K^i$ is $(\kappa_c^i,\gamma_c^i)$-strongly stabilizing. Therefore, applying Lemma~\ref{lem:komaki} yields
\begin{align*}
    \|P_{K^{j}}\|\leq \alpha^j_1 (1+\kappa_c^{j^2})\frac{\kappa_c^{j^2}}{\gamma_c^j}.
\end{align*}
For simplicity of notation, suppose the switch occurs from mode $i$ to mode $j$. Then,
 \begin{align*}
     \frac{1}{\bar{\eta}\big(K^{i})}&= \frac{\overline{\lambda}\big(P_{K^{i}}\big)}{\underline{\lambda}\big(H_{K^{i}}\big)}\leq \frac{\alpha^i_1}{\alpha^i_0} (1+\kappa_c^{i^2})\frac{\kappa_c^{i^2}}{\gamma_c^i}\\
     \bar{\mathcal{X}}(K^{j})&\leq  \kappa_c^{j^2}\\
     \bar{\rho}(K^{i})&\leq \kappa_c^{i^2}.
 \end{align*}
 Define $\kappa_c^{*} := \max_i \kappa_c^{i}$, $\gamma_c^{*} := \min_i \gamma_c^{i}$, $\alpha_1^{*} := \max_i \alpha_1^{i}$, and $\alpha_0^{*} := \min_i \alpha_0^{i}$. Then, using (\ref{eq:ineqcrudbet}), the term $\bar{\beta}$, we obtain
 \begin{align*}
     \frac{\bar{\mathcal{X}}(K^{i_{k+1}}) \bar{\rho}(K^{i_k})}{\bar{\eta} \big(K^{i_k})}\leq \frac{\alpha^{*}_1}{\alpha^{*}_0} (1+\kappa_c^{*^2})\;\frac{\kappa_c^{*^6}}{\gamma_c^{*}}=:\bar{\beta}
 \end{align*}  
 which completes the proof.
\end{proof}

% \textcolor{red}{Before, describing the second goal, we need to find an upper-bound on the duration of the posted problem, given number of switches $n_s$. The following lemma gives an upper bound on duration of an epoch.} 

% The second goal directly corresponds to the regret of Algorithm \ref{alg:OSL3}. We need the initial estimate $\|\Theta_0^i-\Theta_*^i\|\leq \bar{\epsilon}_i$ such that $\bar{\epsilon}_i\bar{\lambda}^i_{max}<1$ where $\bar{\lambda}^i_{max}=4\bar{\mu}^i_{max}\nu_i/\alpha_0^i\sigma_{\omega}^2$ and

% \begin{align*}
%      \big(1+2\vartheta_i (n\tau_{max}^{dw}+n\tau_{max}^{dw}(1+{\kappa_0^i}^2)^2\bar{X}^2_{max})^{0.5} \big):=\bar{\mu}_{max}^i
% \end{align*}

\subsection{Stability Analysis}
\label{perturbationLemma_Stab}

The proof of Theorem \ref{Stability_thm17} follows similar arguments to those used in the proof of Theorem 3 in \cite{chekan2024any}, with the necessary modifications to account for the adjusted scaling of the input perturbation noise covariance matrix. We first establish the following lemma, which provides a lower bound on the minimum eigenvalue of the covariance matrix.

\begin{lemma}\label{lem:minlwerbndei}
    Fix $\delta\in (0,\,1)$ and let the designed feedback to be perturbed with an additive noise $\eta_t \sim \mathcal{N}\big(0, \Gamma_t\big)$ where 
    \begin{align*}
    \Gamma_t := 2\frac{\bar{p}\bar{\kappa}^2 \sigma_\omega^2}{\big(t + \bar{c}\big)^{\zeta}},
\end{align*}
 and $\zeta \in (0,\, 1/2)$. Then with probability at least $1-\delta$ we have
    \begin{align}
        \lambda_{min}\big(\sum_{k=1}^t z_kz_k^\top\big)\geq \frac{\sigma_{\omega}^2 \bar{p} }{80} (t+\bar{c})^{1-\zeta}- \underbrace{\frac{\sigma_{\omega}^2\bar{c}}{80}}_{=:\bar{C}} \label{eq:lowbndVmin}
    \end{align}
    for $t\geq 400(n+m+\log t/\delta)$.
\end{lemma}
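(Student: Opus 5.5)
We will prove the lower bound through a matrix martingale argument. The idea is to compare $\sum_{k=1}^t z_k z_k^\top$ with its conditional expectations, in the spirit of Lemma 34 of \cite{cohen2019learning} and the corresponding lemma in \cite{chekan2024any}. The only change needed is the time-varying exploration covariance $\Gamma_k$.

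\textbf{Step 1: conditional covariance of $z_k$.} Write $z_k=(x_k^\top,u_k^\top)^\top$ with $u_k=Kx_k+\eta_k$, where
\[
x_k=\Theta_*^\top z_{k-1}+\omega_k .
\]
Conditioned on $\mathcal{F}_{k-1}$, the fresh randomness in $z_k$ is $(\omega_k,\eta_k)$. Its contribution is $z_k=\begin{pmatrix} I & 0\\ K & I\end{pmatrix}\begin{pmatrix}\omega_k\\ \eta_k\end{pmatrix}+(\text{$\mathcal{F}_{k-1}$-measurable})$. Hence
\[
\mathbb{E}[z_kz_k^\top\mid\mathcal{F}_{k-1}]\succeq \begin{pmatrix} I & 0\\ K & I\end{pmatrix}\begin{pmatrix} \sigma_\omega^2 I & 0\\ 0 & \Gamma_k\end{pmatrix}\begin{pmatrix} I & 0\\ K & I\end{pmatrix}^\top .
\]
We lower bound its minimum eigenvalue by an elementary $2\times2$ block argument. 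For a unit vector $v=(v_x,v_u)$, either $\|v_u\|$ is small, and then the $\sigma_\omega^2$ part dominates through $\|v_x+K^\top v_u\|$, or $\|v_u\|$ is large, and then the $\Gamma_k$ part gives $\Gamma_k\|v_u\|^2$. Using $\|K\|\le\bar\kappa$, this yields
\[
\lambda_{\min}\ge c\,\min\{\sigma_\omega^2,\Gamma_k\}/\bar\kappa^2 .
\]
By the choice of $\bar c$ in (\ref{eq:cfARSLOplus}), we have $\Gamma_k\le\sigma_\omega^2$ throughout. This gives a bound of order $\sigma_\omega^2\bar p\,(k+\bar c)^{-\zeta}$; the factor $2\bar\kappa^2$ in $\Gamma_k$ exactly cancels the $\bar\kappa^{-2}$ loss, and absolute constants are absorbed into the $1/80$.

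\textbf{Step 2: concentration.} Since $\eta_k$ and $\omega_k$ are Gaussian, or at least well-behaved, fix a unit $v$ and apply a lower-tail inequality for sums of nonnegative variables, as in Lemma 34 of \cite{cohen2019learning}. The relevant quantities are $(v^\top z_k)^2$, whose conditional second moments are at least $\lambda_{\min}^{(k)}$. Freedman's inequality, or the one-sided bound $\mathbb{P}(\sum Y_k\le \tfrac12\sum\mathbb{E}[Y_k|\mathcal{F}_{k-1}]-\ldots)$ for nonnegative $Y_k$ with Gaussian small-ball control, then gives
\[
v^\top\Big(\sum_{k=1}^t z_kz_k^\top\Big)v\ge \tfrac14\sum_{k=1}^t\lambda_{\min}^{(k)}
\]
with probability $1-e^{-ct}$. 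The condition $t\ge400(n+m+\log(t/\delta))$ is what we need here. It lets us take an $\varepsilon$-net union bound over the sphere in $\mathbb{R}^{n+m}$, using an upper bound on $\|\sum z_kz_k^\top\|$ from the bounded-state event, and still leave failure probability $\delta$.

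\textbf{Step 3: summation.} Finally, $\sum_{k=1}^t (k+\bar c)^{-\zeta}\ge \int_{\bar c}^{t+\bar c}s^{-\zeta}ds\ge (t+\bar c)^{1-\zeta}-\bar c^{1-\zeta}$ up to the factor $1/(1-\zeta)\ge1$. Bounding $\bar c^{1-\zeta}\le\bar c/\bar p$ (absorbed since $\bar p\ge1$) gives (\ref{eq:lowbndVmin}), with the subtracted constant $\sigma_\omega^2\bar c/80$.

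The main obstacle is Step 2. The exploration variance is decaying and $\mathcal{F}_{k-1}$-dependent, so the usual i.i.d.\ lower-tail argument needs to be applied to a heterogeneous martingale. The net argument also requires a high-probability upper bound on $\|z_k\|$, which we would borrow from the bounded-state event used elsewhere in the paper.
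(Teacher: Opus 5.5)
Your proposal follows essentially the same route as the paper's sketch: a conditional-covariance lower bound $\mathbb{E}[z_kz_k^\top\mid\mathcal{F}_{k-1}]\succeq c\,\sigma_\omega^2\bar p\,(k+\bar c)^{-\zeta}I$ made possible by the choice of $\bar c$ (the analogue of Lemma~9 of \cite{chekan2024any}), then a per-direction martingale lower-tail bound (Lemmas~10--11 there), then a covering argument over the unit sphere (Lemma~12 there). One small repair is needed in Step~3: since $s\mapsto(s+\bar c)^{-\zeta}$ is decreasing, $\int_{\bar c}^{t+\bar c}s^{-\zeta}\,ds$ is an \emph{upper} bound on $\sum_{k=1}^t(k+\bar c)^{-\zeta}$, so use instead $\sum_{k=1}^t(k+\bar c)^{-\zeta}\ge t(t+\bar c)^{-\zeta}=(t+\bar c)^{1-\zeta}-\bar c\,(t+\bar c)^{-\zeta}\ge(t+\bar c)^{1-\zeta}-\bar c^{1-\zeta}$, and note that $\bar p\,\bar c^{1-\zeta}\le\bar c$ holds because $\bar c^{\zeta}=2\bar\kappa^2\bar p\,\bar\vartheta_{B_*}\ge\bar p$ by (\ref{eq:cfARSLOplus}), not because $\bar p\ge 1$.
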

\begin{proof}
The detailed proof is given in \cite{chekan2024any} for the special case \(\zeta=\frac{1}{2}\). We sketch the modifications required for a general \(\zeta\in(0,\frac{1}{2})\).

By following the same arguments as in Lemma 9 of \cite{chekan2024any}, and choosing $\bar{c}$ according to (\ref{eq:magncbar}), we obtain
\begin{align*}
    2 \bar{\kappa}^2 \bar{\vartheta}_{B_*}^2 \frac{\bar{p}}{(t+\bar{c})^{\zeta}} \leq 1,
\end{align*}
for all $t\ge 1$. Since $\bar{\kappa}, \bar{\vartheta}_{B_*}\geq 1$, it follows that
\begin{align}
    \frac{(t+\bar{c})^{\zeta}}{ \bar{p}}-1\geq 0. \label{eq:useful092}
\end{align}
Using the same derivation as in Lemma 9 of \cite{chekan2024any}, (\ref{eq:useful092}) implies
\begin{align*}
    \mathbb{E}[z_t z_t^\top|\mathcal{F}_{t-1}]\succeq \frac{\sigma_{\omega}^2\bar{p}}{2(t+\bar{c})^{\zeta}} I \quad \quad   \forall t.
\end{align*}

Next, applying the arguments of Lemmas 10 and 11 in \cite{chekan2024any}, together with (\ref{eq:useful092}), yields
  \begin{align}
        a^\top \big(\sum_{k=1}^{t} z_k z_k^\top\big) a \geq \frac{\sigma_{\omega}^2 \bar{p}}{40}(t+\bar{c})^{1-\zeta}- {\frac{\sigma_{\omega}^2\bar{c}}{40}} \label{eq:kmk092}
    \end{align}
    for every unit vector $a\in\mathbb R^{n+m}$, with probability at least $1-\delta$, provided that $t\geq 200 \log t/\delta$.

Finally, repeating the covering argument used in Lemma 12 of \cite{chekan2024any}, we obtain
\begin{align}
        \lambda_{min}\big(\sum_{k=1}^t z_kz_k^\top\big)\geq \frac{\sigma_{\omega}^2 \bar{p}}{80}(t+\bar{c})^{1-\zeta}- \underbrace{\frac{\sigma_{\omega}^2\bar{c}}{80}}_{\bar{C}}\label{eq:resLem10}
    \end{align}
   uniformly in time for $t\geq 400(n+m+\log t/\delta)$.
\end{proof}

% The proof directly follows by Lemma 18 in \cite{cohen2019learning}. However for the sake of completeness and avoiding possible misunderstandings due to the notation difference we provide it here. The proof also includes the definition of $\lambda_i$'s and upper-bound of $\mu_i$'s mentioned in Theorem \ref{Stability_thm17}.

\begin{proof}[{\bf Proof of Theorem \ref{Stability_thm17}}] 
Referring to equation (81) in \cite{chekan2024any}, we have
\begin{align}
\nonumber P(\mathcal{C}_{n_i(t)})\succeq & \ Q^i+K^\top (\mathcal{C}_{n_i(t)})R^i K(\mathcal{C}_{n_i(t)})+(A_*+B_*^iK(\mathcal{C}_{n_i(t)}))^\top P(\mathcal{C}_{n_i(t)})(A_*+B_*^iK(\mathcal{C}_{n_i(t)}))\\
&-2\mu_{n_i(t)}\|P(\mathcal{C}_{n_i(t)})\|\begin{pmatrix} I \\ K(\mathcal{C}_{n_i(t)}) \end{pmatrix}{V}^{-1}_{n_i(t)}\begin{pmatrix} I \\ K(\mathcal{C}_{n_i(t)}) \end{pmatrix}^\top
\label{eq:Lemma8_res22}
\end{align}
where
\begin{align*}
\mu_{n_i(t)}= r_{n_i(t)}+\sqrt{r_{n_i(t)}}\vartheta_i \|V_{n_i(t)}\|^{1/2}.
\end{align*}

The above inequality follows from the complementary slackness conditions of the relaxed primal--dual SDP pair \eqref{eq:RelaxedSDP}--\eqref{eq:obtPol}, together with the perturbation result of Lemma 7 in \cite{chekan2024any}. For further details, we refer the reader to \cite{cohen2019learning,chekan2024any}.

To establish that all generated policies are stabilizing, we tune the regularization parameter $\lambda_i$, the perturbation noise covariance matrix $\Gamma_t^i$, and the initial estimation accuracy $\epsilon(\bar{\kappa}_i)$ such that
\begin{align}
    \mu_{n_i(t)}\|P(\mathcal{C}_{n_i(t)})\| \,{V}^{-1}_{n_i(t)} \preceq  \frac{\alpha^i_0}{4}I 
    \label{eq:juststab0}
\end{align}
holds. A sufficient condition for (\ref{eq:juststab0}) is
\begin{align}
   \mu_{n_i(t)}\|P(\mathcal{C}_{n_i(t)})\| \,\|{V}^{-1}_{n_i(t)}\| \leq \frac{\alpha^i_0}{4}. \label{eq:juststab02}
\end{align}

Now using the fact that
 $\bar{\kappa}\geq\sqrt{\frac{2\|P(\mathcal{C}_{n_i(t)})\|}{\alpha^i_0}}>1$ the sufficient condition (\ref{eq:juststab02}) can be rewritten as
\begin{align}
    \mu_{n_i(t)} \,\,\|V_{n_i(t)}^{-1}\| \preceq 
    \frac{1}{2 \bar{\kappa}_i^{2}}I. 
    \label{eq:verygood01}
\end{align}

We next specify $\lambda_i$, $\epsilon(\bar{\kappa}_i)$, and $\bar{p}_t^i$ so that (\ref{eq:verygood01}) holds. Using the definition of $\mu_{n_i(t)}$ and applying Lemma \ref{lem:minlwerbndei}, a sufficient condition for \eqref{eq:verygood01} is that the following inequalities hold:
\begin{align}
   \frac{2\vartheta \sqrt{r_{n_i(t)}}\|V_{n_i(t)}\|^{\frac{1}{2}}}{\frac{\sigma_{\omega}^2 \bar{p}^i_t}{80}(n_i(t)+\bar{c}_i)^{1-\zeta}-\bar{C}_i+\lambda_i} \leq\frac{1}{4 \bar{\kappa}_i^{2}} \label{eq:pcsdg0}
\end{align}
and
\begin{align}
     \frac{r_{n_i(t)}}{\frac{\sigma_{
\omega}^2 \bar{p}^i_t}{80}(n_i(t)+\bar{c}_i)^{1-\zeta}-\bar{C}_i+\lambda}\leq \frac{1}{4 \bar{\kappa}_i^{2}.} \label{eq:pcsdg+}
\end{align}
 We choose $\lambda_i$ such that
\begin{align*}
    \lambda_i\geq \bar{C}_i.
\end{align*}
To ensure that both inequalities above are satisfied, it suffices to require
\begin{align*}
    \bar{p}^i_t\geq \max\bigg\{\frac{640 \bar{\kappa}_i^2\vartheta_i\sqrt{r_{n_i(t)}}\|V_{n_i(t)}\|^{1/2}}{\sigma_{\omega}^2(n_i(t)+\bar{c}_i)^{1-\zeta}},\;\frac{320\,r_{n_i(t)}\,\bar{\kappa}_i^2}{\sigma_{\omega}^2(n_i(t)+\bar{c}_i)^{1-\zeta}}\bigg\}.
\end{align*}
It is straightforward to verify that the first term in the maximum dominates the second.

Noting that by Lemma \ref{lem:defCbarz}
\begin{align}
    \sum_{k=1}^{n_i(t)} \|z^i_k z_k^{i^\top}\|^2\leq  c^2_z n_i(t)\log \frac{t}{\delta} \label{eq:Lam_1nskom}
\end{align}
for any $i\in \mathcal{M}$.

Now we establish the following inequalities. We have
\begin{align}
    \log \frac{d_x\det V_{n_i(t)}}{\delta\, \det(\lambda_i I)}
    &\le (d_x+d_u^i)\log\frac{d_x}{\delta}\Big(1 + n_i(t)\log \frac{t}{\delta}\Big) \label{eq:seqp}\\
    &\le 2(d_x+d_u^i)\log  \frac{2d_xt}{\delta}. \label{eq:seqpf1ns}
\end{align}
The inequality (\ref{eq:seqp}) holds under the choice
\begin{align}
    \lambda_i\geq c^2_z(\bar{\kappa}_*).\label{eq:Lam_1ns}
\end{align}

By choosing $\epsilon_i$ appropriately such that
\begin{align}
    \sqrt{\lambda_i} \epsilon_i \leq \sigma_{\omega}\sqrt{2d_x(d_x+d_u^i)} \label{eq:epslamb}
\end{align}
it then follows from (\ref{eq:seqpf1ns}) and (\ref{eq:epslamb}) that
\begin{align*}
    r_{n_i(t)}\leq 16\sigma_{\omega}^2d_x(d_x+d_u^i)\log \frac{2d_xt}{\delta}.
\end{align*}

Let us upper-bound the dominant term

\begin{align*}
   \bar{p}^i\leq  \frac{640 \bar{\kappa}_i^2\vartheta_i\sqrt{r_{n_i(t)}}\|V_{n_i(t)}\|^{1/2}}{\sigma_{\omega}^2(n_i(t)+\bar{c}_i)^{1-\zeta}}\leq \frac{5120 \bar{\kappa}_i^2\vartheta_i\sqrt{\sigma_{\omega}^2d_x (d_x+d_u^i)\log \frac{2n_i\,t}{\delta}}}{\sigma_{\omega}^2(n_i(t)+\bar{c}_i)^{1-\zeta}}\sqrt{\lambda_i}+\frac{5120 \bar{\kappa}_i^2\vartheta_i\sqrt{\sigma_{\omega}^2d_x (d_x+d_u^i)\,n_i(t)}}{\sigma_{\omega}^2(n_i(t)+\bar{c}_i)^{1-\zeta}}c_z\log \frac{t}{\delta}
\end{align*}
if 
\begin{align}
    (n_i(t)+\bar{c}_i)^{\frac{1}{2}-\zeta}\geq \log \frac{2d_xt}{\delta}\label{eq:stabCon}
\end{align}
then the stability condition will be fulfilled if we choose 

\begin{align}
    \bar{p}^i:=\frac{5120 \bar{\kappa}_i^2\vartheta_i\sqrt{\sigma_{\omega}^2d_x (d_x+d_u^i)}}{\sigma_{\omega}^2}(c_z+\sqrt{\lambda_i}) \label{eq:candbarptp}
\end{align}

Next, we specify $\bar{c}_i$ such that
\begin{align}    
B^i_*\frac{2\bar{\kappa}_i^2 \bar{p}^i\sigma_{\omega}^2}{(n_i(t)+\bar{c}_i)^{\zeta}} B_*^{i^\top}\preceq \sigma_{\omega}^2I.\label{eq:noiseNbig}
\end{align}
This condition ensures that the effect of the injected input perturbation noise on the closed-loop system does not exceed the magnitude of the process noise.

% A sufficient condition for (\ref{eq:noiseNbig}) is
%  \begin{align}
%      \frac{2  \, \bar{\kappa}_*^2 \, \bar{\vartheta}_{B^i_*}^2 \bar{p}^i_t}{({n_i(t)+\bar{c}_i})^{\zeta}}\leq 1 \label{eq:goodcbarfind}
%  \end{align}
%  where we define $\bar{\vartheta}_{B_*} := \max\{1,\vartheta_{B_*}\}$ and use $\|B_*\|\le \vartheta_{B_*}$.

% We now derive an upper bound on the left-hand side of (\ref{eq:goodcbarfind}). Using the above estimates, we obtain
% \begin{align}
%  \frac{2  \, \bar{\kappa}_*^2 \, \bar{\vartheta}_{B^i_*}^2 \bar{p}^i_t}{({n_i(t)+\bar{c}_i})^{\zeta}}\leq& \frac{5120 \bar{\vartheta}_{B_*}^{i^2}\bar{\kappa}_i^4\vartheta_i\sqrt{\sigma_{\omega}^2n_i (n_i+m_i)\log \frac{2n_i\,t}{\delta}}}{\sigma_{\omega}^2(n_i(t)+\bar{c}_i)}\sqrt{\lambda_i} \label{eq:tm10}\\
%   &+\frac{5120 \bar{\vartheta}_{B_*}^{i^2}\bar{\kappa}_i^4\vartheta_i\sqrt{\sigma_{\omega}^2n (n_i+m_i)\,n_i(t)}}{\sigma_{\omega}^2(n_i(t)+\bar{c}_i)}c_z(\bar{\kappa}_*)\log \frac{2n_i t}{\delta}.\label{eq:tm20}
% \end{align}
% We enforce the condition:
% \begin{align}
%     \frac{5120 \bar{\vartheta}_{B_*}^{i^2}\bar{\kappa}_i^4\vartheta_i\sqrt{\sigma_{\omega}^2n_i (n_i+m_i)}}{\sigma_{\omega}^2(n_i(t)+\bar{c}_i)^{1/2}}c_z(\bar{\kappa}_*)\log \frac{2n_it}{\delta}\leq \frac{1}{2}.\label{eq:helpingCond}
% \end{align}
This can be achieved by an appropriate choice of $\bar{c}_i$ to be
\begin{align}
    \bar{c}_i= (2\bar{\kappa}_i^2\bar{p}^i \bar{\vartheta}_{B_*^i} )^{\frac{1}{\zeta}}\label{eq:magncbar}
\end{align}

Recalling (\ref{eq:Lam_1ns}), which provides an additional condition for tuning $\lambda_i$, we select
\begin{align}
    \lambda_i = \max \left\{ \frac{\sigma_{\omega}^2 \bar{c}_i}{80},\, c_z^2 \right\} = \frac{\sigma_{\omega}^2 \bar{c}_i}{80},\label{eq:lamchosenns}
\end{align}
where the equality follows from the magnitude of $\bar{c}_i$ in (\ref{eq:magncbar}).
Recalling (\ref{eq:epslamb}), the corresponding choice of $\epsilon_i$ is
\begin{align}
    \epsilon_i\leq \underline{\epsilon}(\bar{\kappa}_*) := \frac{\sigma_\omega \sqrt{2 d_x (d_x+d_u^i)}}{\sqrt{\lambda_i}}.\label{eq:epscrudebnd0102}
\end{align}

If condition (\ref{eq:stabCon}) is satisfied and the parameters are chosen as described above, then the stability condition (\ref{eq:verygood01}) holds. Following the same reasoning as in Lemma 14 of \cite{chekan2024any}, it follows that any generated policy will be $(\bar{\kappa}_i,\bar{\gamma}_i)$-strongly stabilizing, where
\begin{align*}
\bar{\kappa}_i = \sqrt{\frac{2\nu_i}{\alpha^i_0}}, \quad \bar{\gamma}_i = \frac{1}{2 \bar{\kappa}_i^2}.
\end{align*} 

Finally, using a bootstrapping argument similar to that in the proof of Theorem~1 in \cite{chekan2024any}, we can show that the claim of the theorem holds with probability at least \(1-3\delta\). Specifically, this follows from a union bound over the following events:
\begin{align*}
     \mathcal{A}^i_{t}&:=\left\{\forall s = 1, \ldots, t,\quad \Theta_*^{i}\in \mathcal{C}_{n_{i}(s)}\right\}\\
      \mathcal{D}_t^i&:=\left\{\forall s = 1, \ldots, t,\quad \lambda_{min}(V^i_{n_i(s)})\geq \lambda_i+\frac{\sigma_{\omega}^2\bar{p}^i_s}{80}(n_i(s)+\bar{c}_i)^{1-\zeta}-\bar{C}_i \right\}\\
      \mathcal{G}_t^i&:=\{\forall s = 1, \ldots,n_i(t),\quad \max_{1\leq j\leq s} \| B^i_*\eta^i_j+\omega_{j+1}\|\leq \sqrt{20 \sigma_{\omega}^2d_x\log \frac{t}{\delta}}\}.
\end{align*}
Each of these events holds uniformly over time with probability at least \(1-\delta\). Therefore, by the union bound, their intersection holds uniformly over time with probability at least \(1-3\delta\), which establishes the claim.
\end{proof}

It is worth mentioning that the original proof in \cite{chekan2024any} considers the specific choice \(\zeta = 1/2\). However, as will be shown in the regret analysis section, establishing the regret bound in our setting requires choosing \(\zeta < 1/2\). Moreover, by combining the sufficient condition for (\ref{eq:juststab0}) with Lemma \ref{lem:minlwerbndei}, and using the same tuning parameters as in \cite{chekan2024any}, the sufficient condition for stability remains satisfied, though in a slightly conservative way.

\subsection{Minimum Dwell Time}

\begin{proof}[{\bf Proof of Theorem \ref{thm:minimum_average_dwell_revised}}]
The proof follows the same line of arguments as the proof of Theorem~\ref{thm:alfabet_K_S}.
For an epoch starting at time $t_k^a$ in mode $i_k$, consider the candidate Lyapunov function
\begin{align*}
	\mathcal{V}_{i_k}(t)=x_t^\top P(\mathcal{C}_{n_{i_k}(t_k^a)})x_t
\end{align*}
valid for $t_k^a\le t< t_{k+1}^a$, where $P(\mathcal{C}_{n_i(t_k^a)})$ satisfies
\begin{align}
\nonumber P(\mathcal{C}_{n_{i_k}(t_k^a)})\succeq & Q^{i_k}+K^\top (\mathcal{C}_{n_{i_k}(t_k^a)})R^{i_k} K(\mathcal{C}_{n_{i_k}(t_k^a)})+(A_*+B_*^{i_k}K(\mathcal{C}_{n_{i_k}(t_k^a)}))^\top P(\mathcal{C}_{n_{i_k}(t_k^a)})(A_*+B_*^{i_k}K(\mathcal{C}_{n_{i_k}(t_k^a)}))\\
&-2\mu_{n_{i_k}(t_k^a)}\|P(\mathcal{C}_{n_{i_k}(t_k^a)})\|\begin{pmatrix} I \\ K(\mathcal{C}_{n_{i_k}(t_k^a)}) \end{pmatrix}{V}^{-1}_{n_{i_k}(t_k^a)}\begin{pmatrix} I \\ K(\mathcal{C}_{n_{i_k}(t_k^a)}) \end{pmatrix}^\top.
\label{eq:Lemma8_re}
\end{align}
Since the policy remains fixed throughout each epoch, $P(\mathcal{C}_{n_i(t_k^a)})$ is constant over the interval
$[t_k^a,t_{k+1}^a]$. Consequently, the one-step Lyapunov drift analysis,
together with the Rayleigh--Ritz bounds and recursive conditioning
arguments used in the proof of Theorem~\ref{thm:alfabet_K_S},
applies verbatim.
The only modification is that the Lyapunov inequality is now given by
(\ref{eq:Lemma8_re}), which introduces the additional uncertainty term
associated with the confidence set
$\mathcal{C}_{n_i(t_k^a)}$.
Proceeding exactly as in the proof of
Theorem~\ref{thm:alfabet_K_S}, we obtain the corresponding recursive
bound.
Using the fact that the system must remain in each mode for at
least one time step yields the claim of
Theorem~\ref{thm:minimum_average_dwell_revised}. The entire analysis relies on the closed-loop stability of each mode. As established in Theorem~\ref{Stability_thm17}, this property holds for every mode with probability at least \(1-3\delta\). Therefore, by a union bound, the claim of this theorem holds with probability at least \(1-6\delta\).
\end{proof}
The following lemma establishes an upper bound on the length of any epoch. 

  \begin{lemma}\label{thm:minimum_average_dwell} 
The following statement holds:
\begin{align}
\tau^a_{k,k+1}\leq \frac{2\ln \kappa^2_*-\ln \bar{\alpha}}{\ln(1-\frac{1}{\bar{\kappa}^2_*})}=:\bar{\tau}_*
\label{eq:tau_a_slow} 
\end{align}
for any $k$ under implementation of Algorithm \ref{alg:OSL3}, where $\bar{\kappa}_*=\max_{i\in \mathcal{M}} \bar{\kappa}_i$.
 \end{lemma}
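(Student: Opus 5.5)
The plan is to bound each ingredient of the dwell-time formula (\ref{eq:dwellAlg}) uniformly over the four cases (L,L), (L,F), (F,L), (F,F) in (\ref{eq:dwell_cases}). The key observation is that the numerator and denominator have fixed signs. The numerator $-(\ln\tilde\rho+\ln\tilde{\mathcal X}-\ln\bar\alpha)$ is nonpositive, since $\tilde\rho,\tilde{\mathcal X}\ge 1$ and $\bar\alpha<1$. The denominator $\ln(1-\tilde\eta)$ is negative. Hence the ratio is increasing in $\tilde\rho$ and $\tilde{\mathcal X}$, and increasing as $\tilde\eta$ decreases. It therefore suffices to show three things: $\tilde\rho_{i_k}\le\kappa_*^2$, $\tilde{\mathcal X}_{i_{k+1}}\le\kappa_*^2$, and $\tilde\eta_{i_k}\ge 1/\bar\kappa_*^2$. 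With these, $\ln\tilde\rho+\ln\tilde{\mathcal X}-\ln\bar\alpha\le 2\ln\kappa_*^2-\ln\bar\alpha$ and $-\ln(1-\tilde\eta)\ge -\ln(1-1/\bar\kappa_*^2)$, which gives the bound $\bar\tau_*$. I read $\bar\tau_*$ with the sign convention that makes it positive; this is the same as taking absolute values of the two logarithms. The $\max\{1,\cdot\}$ is harmless once $\bar\tau_*\ge 1$.

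\textbf{Learning-based case.} For a certified epoch, Theorem~\ref{Stability_thm17} applies on its high-probability event. From its proof, condition (\ref{eq:juststab0}) holds, so
$2\mu\|P(\mathcal C)\|_*\begin{pmatrix} I\\ K\end{pmatrix}V^{-1}\begin{pmatrix} I\\ K\end{pmatrix}^\top\preceq \tfrac{\alpha_0}{2}\cdot(\text{const})$.
Using this in (\ref{eq:HDef3p}), I would show $\mathcal H(\mathcal C)\succeq \tfrac{\alpha_0^{i}}{2}I$. Along the way I may need to replace $\|\cdot\|_*$ by $d_x\|\cdot\|$ and absorb the factor $d_x$, as in \cite{chekan2024any}. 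I would also use the bound $\|P(\mathcal C)\|\le \nu_i$ from \cite{chekan2024any} (their Lemma 14 style argument), together with $\|P(\mathcal C)\|\le \tfrac{\alpha_0\bar\kappa_i^2\sigma_\omega^2}{2}$ implied by the definition of $\bar\kappa_i$. Combining the two gives $\tilde\eta\ge \tfrac{\alpha_0/2}{\|P\|}\ge 1/\bar\kappa_i^2\ge 1/\bar\kappa_*^2$, after normalizing by $\sigma_\omega^2$ if needed. For the condition numbers, the relaxed dual constraint gives $P(\mathcal C)\succeq Q\succeq\alpha_0 I$, so $\rho\le \|P\|/\alpha_0\le \bar\kappa_i^2/2\le\kappa_*^2$.

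\textbf{Fallback case.} Here $K_0^i$ is $(\kappa_0,\gamma_0)$-strongly stabilizing, so Lemma~\ref{lem:komaki} applied to (\ref{eq:lyapunovEq}) bounds $\|P_{K_0}\|$ by $\alpha_1(1+\kappa_0^2)\kappa_0^2/\gamma_0$. Also $P_{K_0}\succeq H_{K_0}\succeq\alpha_0 I$. This gives $\tilde\rho,\tilde{\mathcal X}$ bounded by a $\kappa_0$-dependent constant and $\tilde\eta$ bounded below similarly. I would then have to check that these constants are dominated by $\kappa_*^2$ and $1/\bar\kappa_*^2$ respectively, where $\kappa_*=\max\{\bar\kappa_*,\kappa_0^*\}$. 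This may require absorbing the factors $\alpha_1/\alpha_0$ and $1/\gamma_0$ into the definition of $\kappa_*$, or arguing the bound in the form the paper intends.

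\textbf{Main obstacle.} The main obstacle is the lower bound on $\tilde\eta$ in the learning case. It requires controlling the negative correction term in $\mathcal H$ through the nuclear-norm factor and the quantity $\mu\|V^{-1}\|$. It also requires matching the resulting constant exactly to $1/\bar\kappa_*^2$. The uniform treatment of the fallback constants is the secondary difficulty. All statements hold on the good event of Theorem~\ref{Stability_thm17}.
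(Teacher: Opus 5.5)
For certified epochs your argument is the paper's. Its proof substitutes (\ref{eq:adivetabnd}), i.e.\ $1/\eta\le\bar\kappa_*^2$, and (\ref{eq:khoobeaali}), i.e.\ $\rho\,\mathcal{X}\le\bar\kappa_*^4$, into (\ref{eq:tau_a_fast}). That is exactly your monotonicity step, and you are right that $\bar\tau_*$ needs a minus sign to be positive.

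Two of your justifications should be replaced by the ones the paper relies on.

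\emph{Lower bound on $P(\mathcal{C})$.} The relaxed dual constraint bounds $P(\mathcal{C})$ from above, so $P(\mathcal{C})\succeq Q$ does not follow from it. The lower bound comes from (\ref{eq:Lemma8_re}), which gives $P(\mathcal{C})\succeq\mathcal{H}(\mathcal{C})$. If the correction term carries $\|P\|$ as in (\ref{eq:Lemma8_re}), which is all the Lyapunov argument needs, then (\ref{eq:juststab0}) gives $\mathcal{H}(\mathcal{C})\succeq\tfrac{\alpha_0}{2}(I+K^\top K)\succeq\tfrac{\alpha_0}{2}I$ directly. There is then no $d_x$ to absorb, and your "main obstacle" disappears.

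\emph{Upper bound on $P(\mathcal{C})$.} The correct bound is $\|P(\mathcal{C})\|\le\|P(\Theta_*)\|\le\operatorname{tr}P(\Theta_*)=J_*/\sigma_\omega^2\le\nu/\sigma_\omega^2$, using the optimism in Lemma~\ref{lem:tightnessSol}. It is not $\|P(\mathcal{C})\|\le\nu$.

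Together these give exactly $\eta\ge\alpha_0\sigma_\omega^2/(2\nu)=1/\bar\kappa^2$ and $\rho,\mathcal{X}\le\bar\kappa^2\le\kappa_*^2$, with no normalization. This requires both $\mathcal{C}_{n_{i_k}}$ and $\mathcal{C}_{n_{i_{k+1}}}$ to be certified. Also, $P(\Theta_*^i)\succeq Q^i$ gives $\bar\kappa_i^2\ge 2\operatorname{tr}P(\Theta_*^i)/\alpha_0^i\ge 2d_x\ge 2$. Hence $\bar\tau_*\ge 2$ and the $\max\{1,\cdot\}$ costs nothing.

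Do not try to close the fallback cases. The paper's proof does not treat them, since it cites only the learning-based bounds, and with the stated constants the claim fails there. The constant $\bar\kappa_*$ is calibrated to $\nu\ge J_*$, the optimal cost. By contrast, $\bar\eta(K_0)=\underline{\lambda}(H_{K_0})/\overline{\lambda}(P_{K_0})$ reflects how poor the fallback gain is. Lemma~\ref{lem:komaki} only yields $\bar\eta(K_0)\ge\alpha_0\gamma_0/(\alpha_1(1+\kappa_0^2)\kappa_0^2)$, which is unrelated to $1/\bar\kappa_*^2$.

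Concretely, take a scalar mode with $A_*=0.99$, $B_*=Q=R=\sigma_\omega=1$, $K_0=0$ (so $\kappa_0=1$) and $\nu=J_*$.
\begin{itemize}
\item $P_{K_0}\approx 50.3$ and $\bar\eta(K_0)\approx 0.02$.
\item $\kappa_*^2=\bar\kappa^2\approx 3.2$.
\item For $\bar\alpha=1/2$, an (F,F) epoch lasts about $35$ steps, whereas $\bar\tau_*\approx 8$.
\end{itemize}
Likewise, in (L,F) the numerator contains the condition number of $P_{K_0^{i_{k+1}}}$. That is controlled only by $\tfrac{\alpha_1}{\alpha_0}(1+\kappa_0^2)\kappa_0^2/\gamma_0$, not by $\kappa_*^2$.

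So the lemma is established, by you and by the paper, for (L,L) epochs. For the other cases, $\bar\tau_*$ must be replaced by its maximum with an analogous constant built from the Lemma~\ref{lem:komaki} bounds on $P_{K_0}$. This only changes constants in the later use $t^a_{n_s-1}\le n_s\bar\tau_*$.
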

 \begin{proof}
 The proof is straightforward. Applying the bounds in (\ref{eq:adivetabnd}) and (\ref{eq:khoobeaali}) yields the desired upper bound.
 \end{proof}

The following lemma specifies how close successive policies are to one another and how far they are from $P(\Theta_*^i)$.

\begin{lemma}
\label{lem:tightnessSol}
    For any mode \(i_k\) and any \(t\in [t_k^a,\, t_{k+1}^a]\), the following holds with probability at least \(1-\delta\):
    \begin{align}
       P(\mathcal{C}_{n_{i_k}(t_k^a)})\preceq P(\Theta_*^{i_k})\preceq &P(\mathcal{C}_{n_{i_k}(t)})+ \chi_{t}^{i_k}I\label{eq:closness ofSolu}
    \end{align}
    where
\begin{align}
 \nonumber \chi_{t}^{i_k}= \frac{16\| P(\mathcal{C}_{n_{i_k}(t)})\|^2\mu_{n_{i_k}(t)}}{\alpha_0^2} \|P(\mathcal{C}_{n_{i_k}(t)})\| \|\begin{pmatrix}
I \\
K(\mathcal{C}_{n_{i_k}(t)})
\end{pmatrix}^\top {V}^{-1}_{n_{i_k}(t)} \begin{pmatrix}
I \\
K(\mathcal{C}_{n_{i_k}(t)}) 
\end{pmatrix} \|.
\end{align}
\end{lemma}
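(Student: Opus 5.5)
The lower bound $P(\mathcal{C}_{n_{i_k}(t_k^a)})\preceq P(\Theta_*^{i_k})$ should follow from optimism, and the upper bound from a perturbation argument. For optimism, work on the event $\Theta_*^{i_k}\in\mathcal{C}_{n_{i_k}(s)}$ for all relevant $s$, which holds with probability at least $1-\delta$ by the confidence ellipsoid construction. On this event, the relaxed primal SDP (\ref{eq:RelaxedSDP}) is a relaxation of the exact SDP (\ref{eq:SDPKhali}) for $\Theta_*^{i_k}$. Indeed, for any $\Sigma$ feasible for the exact program, the bound $\|(\Theta_*-\hat\Theta)^\top\Sigma(\Theta_*-\hat\Theta)\|$ and the cross terms $\lesssim \mu_{n}(\Sigma\bullet V^{-1}_n)$ show that $\Sigma$ satisfies the relaxed constraint. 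This is the standard argument of \cite{cohen2019learning,chekan2024any}. Hence the relaxed optimal value is at most $J_*^{i_k}=P(\Theta_*^{i_k})\bullet W$.

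This value comparison is not yet a matrix ordering, so I would lift it to the dual. The plan is to show that $P(\Theta_*^{i_k})$ is feasible for the relaxed dual (\ref{eq:RedSDP_DUAL}), or equivalently that the dual feasible set contains only matrices dominated by the true Riccati solution. Write the dual constraint applied to the vector $(x;Kx)$ for an arbitrary $K$. Using $\Theta_*\in\mathcal{C}$, the term $\hat\Theta P\hat\Theta^\top$ can be replaced by $\Theta_* P\Theta_*^\top$ up to an error absorbed by $\mu\|P\|_*V^{-1}$. Any feasible $P$ then satisfies $P\preceq Q+K^\top RK+(A_*+B_*K)^\top P(A_*+B_*K)$ for the optimal $K_*$. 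Iterating this closed-loop inequality gives $P\preceq P(\Theta_*^{i_k})$. Applying it to the optimizer $P(\mathcal{C}_{n_{i_k}(t_k^a)})$ yields the left inequality.

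For the right inequality, fix $t\in[t_k^a,t_{k+1}^a]$ and abbreviate $P_t=P(\mathcal{C}_{n_{i_k}(t)})$ and $K_t=K(\mathcal{C}_{n_{i_k}(t)})$. Start from the Lyapunov-type inequality (\ref{eq:Lemma8_res22}):
\[
P_t\succeq Q+K_t^\top RK_t+(A_*+B_*K_t)^\top P_t(A_*+B_*K_t)-E_t,
\]
where $E_t=2\mu\|P_t\|\binom{I}{K_t}V^{-1}\binom{I}{K_t}^\top$.

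Let $P_{K_t}$ be the exact value matrix of $K_t$, which solves (\ref{eq:lyapunovEq}). Subtracting the two relations gives $D:=P_{K_t}-P_t\preceq (A_*+B_*K_t)^\top D(A_*+B_*K_t)+E_t$. Theorem \ref{Stability_thm17} gives that $A_*+B_*K_t$ is $(\bar\kappa,\bar\gamma)$-strongly stable, so Lemma \ref{lem:komaki} yields $D\preceq \frac{\bar\kappa^2}{\bar\gamma}\|E_t\|I$. Since $\bar\kappa^2\asymp\|P_t\|/\alpha_0$ and $\bar\gamma=1/(2\bar\kappa^2)$, the constant is $\frac{\bar\kappa^2}{\bar\gamma}=2\bar\kappa^4\lesssim 8\|P_t\|^2/\alpha_0^2$. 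Combined with $\|E_t\|=2\mu\|P_t\|\,\|\cdot\|$, this gives exactly the form of $\chi_t^{i_k}$ with the constant $16$.

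Finally, optimality of the DARE solution gives $P(\Theta_*^{i_k})\preceq P_{K_t}$, since any stabilizing gain has a value matrix at least the Riccati solution. Therefore $P(\Theta_*^{i_k})\preceq P_t+\chi_t^{i_k}I$.

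The main obstacle is the matrix ordering in the optimism step. Strong duality only compares traces, so the real work is showing that $P(\Theta_*)$ dominates every relaxed-dual-feasible $P$ in the Loewner order. I would first try the iteration argument above, using $\|P\|_*$ to control the cross terms through Cauchy–Schwarz in the $V$-norm exactly as in Lemma 7 of \cite{chekan2024any}.
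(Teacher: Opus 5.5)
Your argument is correct, and it is essentially what the paper's one-line appeal to Lemma~15 of \cite{chekan2024any} amounts to. The lower bound comes from optimism: on the confidence event, every relaxed-dual-feasible $P$ is feasible for the exact dual, and iterating with $A_*+B_*K_*$ shows it is dominated by $P(\Theta_*^{i_k})$. The upper bound comes from applying Lemma~\ref{lem:komaki} to $P_{K_t}-P_t$ and using $P(\Theta_*^{i_k})\preceq P_{K_t}$.

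Two small corrections. First, showing that $P(\Theta_*^{i_k})$ is relaxed-dual feasible is not equivalent to your actual plan: it would force the relaxed value to equal $J_*$, and it generally fails. Second, to land exactly on the stated $\chi_t^{i_k}$ (i.e.\ $\kappa_t^2/\gamma_t=8\|P_t\|^2/\alpha_0^2$), take $\kappa_t^2=2\|P_t\|/\alpha_0$ and $\gamma_t=1/(2\kappa_t^2)$, obtained directly from (\ref{eq:Lemma8_res22}) under (\ref{eq:juststab0}). The constant $\bar\kappa_i^2=2\nu_i/(\alpha_0\sigma_\omega^2)$ from Theorem~\ref{Stability_thm17} only upper-bounds $2\|P_t\|/\alpha_0$ and would give a weaker bound.
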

\begin{proof}
   The proof follows directly from the proof of Lemma 15 in \cite{chekan2024any}.
\end{proof}

The following lemma establishes a useful upper bound that will be instrumental in bounding $\chi_t^{i_k}$ and, consequently, in the regret analysis.

    \begin{lemma}\label{lem:useflboun}
With parameters $\bar{p}_t^i$, $\lambda_i$ and $\underline{\epsilon}(\bar{\kappa}_i)$ be chosen as in Theorem \ref{Stability_thm17}, so that the resulting closed-loop system operating in mode \(i\) is \((\bar{\kappa}_i,\bar{\gamma}_i)\)-strongly stabilizing. Then the following bound holds:
\begin{align*}
    \mu_{n_i(t)} \|V^{-1}_{n_i(t)}\|
    \leq
    \bar{D}_i(n_i(t)+\bar{c}_i)^{-\frac{1}{2}+\zeta} \log \frac{t}{\delta}
\end{align*}
for some problem-dependent constant \(\bar{D}\).
\end{lemma}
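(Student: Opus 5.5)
We bound the two factors of $\mu_{n_i(t)}\|V^{-1}_{n_i(t)}\|$ separately and then combine them. Recall that $\mu_{n_i(t)}=r_{n_i(t)}+\sqrt{r_{n_i(t)}}\,\vartheta_i\|V_{n_i(t)}\|^{1/2}$. The argument works on the same good events $\mathcal{A}^i_t$, $\mathcal{D}^i_t$, $\mathcal{G}^i_t$ used in the proof of Theorem~\ref{Stability_thm17}.

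\emph{Step 1: the inverse covariance.} On $\mathcal{D}^i_t$, Lemma~\ref{lem:minlwerbndei} together with $\lambda_i\ge \bar C_i$ gives
\[
\lambda_{\min}(V_{n_i(t)})\ge \tfrac{\sigma_\omega^2\bar p^i}{80}(n_i(t)+\bar c_i)^{1-\zeta}.
\]
Hence $\|V^{-1}_{n_i(t)}\|\le \tfrac{80}{\sigma_\omega^2\bar p^i}(n_i(t)+\bar c_i)^{-(1-\zeta)}$. For small $n_i(t)$, below the $400(d_x+d_u^i+\log t/\delta)$ threshold of Lemma~\ref{lem:minlwerbndei}, we use instead $\|V^{-1}\|\le 1/\lambda_i = 80/(\sigma_\omega^2\bar c_i)$. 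Since $n_i(t)$ is bounded by a problem-dependent constant in that range, this fallback is again of order $(n_i(t)+\bar c_i)^{-(1-\zeta)}$ after adjusting the constant.

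\emph{Step 2: the radius and $\|V\|$.} As in the proof of Theorem~\ref{Stability_thm17}, via (\ref{eq:seqpf1ns}) and (\ref{eq:epslamb}), we have $r_{n_i(t)}\le 16\sigma_\omega^2 d_x(d_x+d_u^i)\log\frac{2d_xt}{\delta}$. Lemma~\ref{lem:defCbarz} (the bound $\|z_s\|^2\le c_z^2\log\frac t\delta$) then gives
\[
\|V_{n_i(t)}\|\le \lambda_i+c_z^2\, n_i(t)\log\tfrac t\delta\lesssim (n_i(t)+\bar c_i)\log\tfrac t\delta.
\]
It follows that
\[
\mu_{n_i(t)}\lesssim \log\tfrac{t}{\delta}+\sqrt{\log\tfrac t\delta}\,\vartheta_i\,(n_i(t)+\bar c_i)^{1/2}\sqrt{\log\tfrac t\delta}\lesssim (n_i(t)+\bar c_i)^{1/2}\log\tfrac{t}{\delta}.
\]
Here the first term is absorbed because $\bar c_i\ge1$.

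\emph{Step 3: combine.} Multiplying the two estimates gives
\[
\mu_{n_i(t)}\|V^{-1}_{n_i(t)}\|\lesssim (n_i(t)+\bar c_i)^{1/2-(1-\zeta)}\log\tfrac t\delta=(n_i(t)+\bar c_i)^{-\frac12+\zeta}\log\tfrac t\delta .
\]
We then collect into $\bar D_i$ all constants, namely $80/(\sigma_\omega^2\bar p^i)$, $\sqrt{16\sigma_\omega^2d_x(d_x+d_u^i)}$, $\vartheta_i$, $c_z$, and the factor relating $\log\frac{2d_xt}{\delta}$ to $\log\frac t\delta$.

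\emph{Main obstacle.} Two points need care, and the second is the real one. First, the logarithmic factors must not exceed a single $\log\frac t\delta$. The term $\sqrt{r}\,\|V\|^{1/2}$ naturally produces $\sqrt{\log}\cdot\sqrt{\log}=\log$, so no extra power appears. Second, and more delicate, the bound on $\|V_{n_i(t)}\|$ requires the state/co-state norm bound of Lemma~\ref{lem:defCbarz}, which in turn relies on the stability of the closed loop, i.e.\ on Theorem~\ref{Stability_thm17}. This is the reason the lemma assumes the tuned parameters. We avoid circularity by inducting over epochs on the good event: each epoch's policy is built from data generated by previously certified stable policies or by the fallback controller $K_0^i$, and the constant $c_z$ already accounts for the fallback through $\tilde g$.
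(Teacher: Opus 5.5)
Your proposal is correct and follows the paper's route: the paper's proof only cites the definition of $\mu_{n_i(t)}$ and the eigenvalue lower bound of Lemma~\ref{lem:minlwerbndei}, implicitly reusing the bounds on $r_{n_i(t)}$ and $\|V_{n_i(t)}\|$ from the proof of Theorem~\ref{Stability_thm17}, which is exactly what you spell out in Steps~1--3. The one loose point is your small-sample fallback $\|V^{-1}_{n_i(t)}\|\le 1/\lambda_i$. It is absorbed into $\bar D_i$ only if the threshold of Lemma~\ref{lem:minlwerbndei} is read with the per-mode count $n_i(t)$ inside the logarithm, or if you restrict to certified epochs, where $(n_i(t)+\bar c_i)^{\frac12-\zeta}\ge\log\frac{2d_xt}{\delta}$ forces that range to be bounded; with the global $\log\frac{t}{\delta}$ as you wrote it, the range of $n_i(t)$ below the threshold is not bounded by a constant.
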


\begin{proof}
Given the tuned parameters \(\bar{p}_t^i\), \(\lambda_i\), and \(\underline{\epsilon}(\bar{\kappa}_i)\), the proof follows directly from the definition of \(\mu_{n_i(t)}\) together with the lower bound on the minimum eigenvalue of the covariance matrix established in Lemma~\ref{lem:minlwerbndei}.
\end{proof}

\subsection{Expected sate norm bound}

\begin{proof}[{\bf Proof of Lemma \ref{thm:res_sequential_stability_resp}}]
With the dwell-time design in Theorem~\ref{thm:alfabet_K_S} applied to (\ref{eq:ineqcrudbet}), we obtain the following inequality for an epoch:
\begin{align*}
    &\mathbb{E}[x^\top _{t^a_{k+1}}x_{t^a_{k+1}}| \mathcal{F}_{t^a_k}]\leq \bar{\alpha} \; \mathbb{E}[x^\top _{t^a_k}x_{t^a_k}| \mathcal{F}_{t^a_k}]+  \frac{\mathcal{X} (\mathcal{C}_{n_{i_{k+1}}(t_k^a)}) \rho (\mathcal{C}_{n_{i_k}(t_k^a)})}{\eta \big(\mathcal{C}_{n_{i_k}(t_k^a)}\big)} \sigma_{\omega}^2. 
\end{align*}

Recall \(\mathcal{H}(\mathcal{C}_{n_i(t)})\) defined in (\ref{eq:HDef3p}), then we obtain
\begin{align*}
   \mathcal{H}(\mathcal{C}_{n_{i_k}(t_k^a)})\geq \frac{\alpha_0^{i_k}}{2}.
\end{align*}
Consequently, we have

\begin{align}
  &\frac{1}{\eta \big(\mathcal{C}_{n_{i_k}(t_k^a)}\big)}=\frac{P\big(\mathcal{C}_{n_{i_k}(t_k^a)}\big)}{\mathcal{H} \big(\mathcal{C}_{n_{i_k}(t_k^a)}\big)}\leq \bar{\kappa}_{i_k}^2\leq \bar{\kappa}_*^2 \label{eq:adivetabnd}\\
   &  \mathcal{X} (\mathcal{C}_{n_{i_{k+1}}(t_k^a)}) \rho (\mathcal{C}_{n_{i_k}(t_k^a)})\leq \bar{\kappa}_*^4.
    \label{eq:khoobeaali}
\end{align}
Then, we can write

\begin{align*}
    \frac{\mathcal{X} (\mathcal{C}_{n_{i_{k+1}}(t_k^a)}) \rho (\mathcal{C}_{n_{i_k}(t_k^a)})}{\eta \big(\mathcal{C}_{n_{i_k}(t_k^a)}\big)}  \leq \bar{\kappa}_*^6=:\tilde{\beta}.
\end{align*}
Thus, for any epoch,
\begin{align*}
    &\mathbb{E}[x^\top _{t^a_{k+1}}x_{t^a_{k+1}}| \mathcal{F}_{t^a_k}]\leq \bar{\alpha} \; \mathbb{E}[x^\top _{t^a_k}x_{t^a_k}| \mathcal{F}_{t^a_k}]+ \tilde{\beta} \sigma_{\omega}^2
\end{align*}
with probability at least $1-6\delta$.

By assuming the initial state \(x_0\), applying the tower property of expectation, and iterating the recursion over \(k\), we obtain the following bound at the switching time \(t_k^a\):
\begin{align}
  \mathbb{E}[x^\top_{t_{k}^a}x_{t_{k}^a}]\leq & \bar{\alpha}^{k} \|x_0\|^2 +\tilde{\beta} \sigma_{\omega}^2\sum_{i=0}^{k} \bar{\alpha}^i\leq \bar{\alpha}^{k} \|x_0\|^2 +\frac{\tilde{\beta}}{1-\bar{\alpha}}\sigma_{\omega}^2\leq \|x_0\|^2+\frac{\tilde{\beta}}{1-\bar{\alpha}}\sigma_{\omega}^2
    \label{eq:onemorest}
\end{align}
which holds with probability at least \(1-3|\mathcal{M}|\delta\) by a union bound over all modes.
\end{proof}

\begin{lemma}\label{lem:defCbarz}
The following bound holds with probability at least $1-\delta$:
    \begin{align*}
       \sum_{k=1}^{n_i(t)} \|z^i_t\|^2\leq n_i(t) c_z^2\log \frac{t}{\delta}
    \end{align*}
    where
    \begin{align*}
        c_z^2=\frac{16 \tilde{g}^2(1+\kappa_*^2)}{(1-\bar{\alpha})^2}\bigg(\|x_1\|+\sqrt{20(d_x+d_u^*)\sigma_{\omega}^2}\bigg)^2,
    \end{align*}
$d_u^*:=\max_id_u^i$, $\bar{\kappa}_*=\max_i \bar{\kappa}_i$ and
\begin{align*}
    \tilde{g}
    :=
    \max\left\{
        2\kappa_*^4,
        \frac{\kappa_*^2}{\gamma_0^*}
    \right\}.
\end{align*}
\end{lemma}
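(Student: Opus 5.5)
We prove Lemma~\ref{lem:defCbarz} by first bounding the state uniformly in time, then passing from $x_t$ to $z_t=(x_t^\top,u_t^\top)^\top$, and finally summing over the visits to mode $i$. The central ingredient is the epoch-to-epoch contraction of Lemma~\ref{thm:res_sequential_stability_resp}, but in a pathwise rather than expected form. Within each epoch the applied gain is either learned, hence $(\bar\kappa_{i},\bar\gamma_{i})$-strongly stabilizing by Theorem~\ref{Stability_thm17}, or the fallback $K_0^{i}$, which is $(\kappa_0,\gamma_0)$-strongly stabilizing by Assumption~\ref{Ass_2}. In both cases it is $(\kappa_*,\gamma_0^*)$-strongly stable with $\kappa_*=\max\{\bar\kappa_*,\kappa_0^*\}$.

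\textbf{Step 1: one epoch.} Write $x_{t+1}=(A+BK)x_t+(B\eta_t+\omega_{t+1})$ and work on the event $\mathcal{G}_t^i$ from the proof of Theorem~\ref{Stability_thm17}. On this event $\|B\eta_s+\omega_{s+1}\|\le\sqrt{20\sigma_\omega^2 d_x\log(t/\delta)}$ uniformly, with probability $1-\delta$. Using the factorization $A+BK=HLH^{-1}$ with $\|L\|\le 1-\gamma$, inside one epoch we get
\[
\|x_s\|\le \kappa(1-\gamma)^{s-t_k}\|x_{t_k}\|+\frac{\kappa}{\gamma}\max_j\|B\eta_j+\omega_{j+1}\|.
\]
So the transient is amplified by at most $\kappa_*$ and the accumulated noise by at most $\kappa_*/\gamma_0^*$, both of which are bounded by $\tilde g$.

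\textbf{Step 2: across epochs.} The amplification factor $\kappa$ reappears at every switch, so we cannot simply iterate Step 1. Instead we use the dwell-time rule (\ref{eq:dwellAlg}): it guarantees that the Lyapunov-ratio product over an epoch, $\rho\mathcal{X}(1-\eta)^{\tau}$, is at most $\bar\alpha$. Combined with the fact that the same ratio argument applies deterministically to $x^\top\tilde P x$ with the noise treated as a bounded input, this gives
\[
\|x_{t_{k+1}}\|\le\sqrt{\bar\alpha}\,\|x_{t_k}\|+2\kappa_*^4\max_j\|B\eta_j+\omega_{j+1}\|,
\]
where the factor $\kappa_*^4$ bounds the ratio product $\rho\mathcal{X}$ as in (\ref{eq:khoobeaali}). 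Summing the resulting geometric series gives
\[
\sup_t\|x_t\|\le \frac{2\tilde g}{1-\bar\alpha}\Big(\|x_1\|+\sqrt{20(d_x+d_u^*)\sigma_\omega^2\log(t/\delta)}\Big),
\]
up to constants; using $1-\sqrt{\bar\alpha}\ge(1-\bar\alpha)/2$ accounts for one factor of $2$ in $c_z$.

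\textbf{Step 3: from $x_t$ to $z_t$.} Bound $\|u_t\|\le\kappa_*\|x_t\|+\|\eta_t\|$, which yields $\|z_t\|^2\le 2(1+\kappa_*^2)\|x_t\|^2+2\|\eta_t\|^2$. The $\eta$ term is already absorbed in the event $\mathcal{G}$. Squaring the bound from Step 2, using $(a+b)^2\le 2a^2+2b^2$, and absorbing $\log(t/\delta)\ge1$ into the prefactor gives $\|z_t\|^2\le c_z^2\log(t/\delta)$, with $c_z^2$ as stated. Summing over the $n_i(t)$ time steps spent in mode $i$ then gives the claim.

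\textbf{Main obstacle.} The delicate point is Step 2. Lemma~\ref{thm:res_sequential_stability_resp} is stated only in conditional expectation, whereas a pathwise bound is needed here. The fix is to redo the Lyapunov recursion of Theorem~\ref{thm:alfabet_K_S} deterministically with $\omega$ replaced by the bounded disturbance on $\mathcal{G}$, noting that the cross terms are controlled by the same ratio constants.

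There is also a circularity to handle. The stabilization in Theorem~\ref{Stability_thm17} itself uses $c_z$ through $\bar p^i$ and $\lambda_i$, so the argument must be run as an induction over time: assume $\|z_s\|\le c_z\sqrt{\log(t/\delta)}$ up to step $s$, deduce that the next policy is stabilizing, and conclude that the bound propagates to step $s+1$.
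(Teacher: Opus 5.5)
Your proposal is correct and follows essentially the same route as the paper: a pathwise contraction $\|x_{t_{k+1}^a}\|\le\sqrt{\bar\alpha}\,\|x_{t_k^a}\|+\tilde g\max_s\|B\eta_s+\omega_{s+1}\|$ derived from the dwell-time rule in the $\tilde P^{1/2}$-weighted metric, then a uniform high-probability noise bound, a geometric sum over epochs, and finally $\|z_t\|^2\le 2(1+\kappa_*^2)\|x_t\|^2+2\|\eta_t\|^2$ (the paper writes your ``deterministic Lyapunov recursion'' as the bound $\|L_j\|\le(1-\eta)^{1/2}$ inside the transition-matrix product $M_s$). One small correction: the event $\mathcal{G}_t^i$ controls only $B\eta_s+\omega_{s+1}$, not $\eta_s$ itself, so in Step 3 you need, as the paper does, a separate Gaussian maximal bound $\max_{s\le t}\|\eta_s\|\le\sqrt{10\,d_u^*\sigma_\omega^2\log(t/\delta)}$ (costing one more union bound), rather than treating the $\eta$ term as already absorbed.
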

\begin{proof}
We begin by recalling some preliminary results from \cite{chekan2024any}. We consider a single system $(A_*, B_*)$ with time-varying feedback policies $K_t$. Let $M=A_*+B_*K_t$ where the policies $K_t$ are $(\kappa,\gamma)$-strongly stabilizing. The closed-loop dynamics can then be written as
    \begin{align*}
        x_t=M_1x_0+\sum_{s=1}^{t-1}M_{s+1}(B_*\nu_{s}+\omega_{s+1}) 
    \end{align*}
   where 
   \begin{align*}
       M_t=I,\;\; M_{s}=M_{s+1}(A_*+B_*K_s)=\prod_{j=s}^{t-1}(A_*+B_*K_j), \; \forall 1\leq s\leq t-1. 
   \end{align*}
   Recalling Definition \ref{def:sequentially}, we may decompose $A_*+B_*K_j= H_j L_j H_j^{-1}$, then for all $1\leq s< t$ where $\|L_j\|<1$.
   Now consider an epoch of the system, starting at $s=t_k^a$ and ending at $t-1=t_{k+1}^a$, just before switching to mode $i_{k+1}$. 

For this epoch, we first bound $\|M_s\|$. Using submultiplicativity,
\begin{align*}
\|M_s\|
&\le
\left\|
(H_{t-1}L_{t-1}H_{t-1}^{-1})\cdots(H_s L_s H_s^{-1})
\right\| \\
&\le
\|H_{t-1}\|
\left\|
(L_{t-1}H_{t-1}^{-1}H_{t-2})\cdots(L_{s+1}H_{s+1}^{-1}H_s)
\right\|
\|L_s\|\,\|H_s^{-1}\| \\
&\le
\|H_{t-1}\|
\left(\prod_{j=s}^{t-1}\|L_j\|\right)
\left(\prod_{j=s+1}^{t-1}\|H_j^{-1}H_{j-1}\|\right)
\|H_s^{-1}\|.
\end{align*}

As there is no update in policy during actuation in mode $i_k$, using the telescoping structure of the $H_j$ terms, this further yields
\begin{align}
\|M_s\|
\le
\|H_{t-1}\|
\left(\prod_{j=s}^{t-1}\|L_j\|\right)
\|H_{t-1}^{-1}H_{t-2}\|
\|H_s^{-1}\|.\label{eq:khoobch}
\end{align}

In the statement above, we consider the fact that there is no update in the policy while actuating in the mode $i_k$ and the only switch is switch in the mode to $i_{k+1}$. We first establish a bound on the state norm within a single epoch, where one of the four feedback scenarios may occur: (L,L), (L,F), (F,L), or (F,F). Although we derive the bound only for the (L,L) case, the remaining cases follow by the same argument. Using these epoch-wise state norm bounds, we then derive a global upper bound on the state norm for an arbitrary evolution consisting of successive occurrences of the above feedback scenarios.

Let us first consider the (L,L) case. In the above statement, \begin{align*}
     &H_{t-1}=P^{1/2}(\mathcal{C}_{n_{i_{k+1}}(t_k^a)})\\
      &H_s=H_{t-2}=P^{1/2}(\mathcal{C}_{n_{i_{k}}(t_k^a)}).
 \end{align*}
Referring the reader to Appendix B.2 of \cite{chekan2024any} for the definition of \(L_j\), by definition
\begin{align*}
   L_j= P^{1/2}(\mathcal{C}_{n_{i_k}(t_k^a)})(A_*+B_*^{i_k}K(\mathcal{C}_{n_{i_k}(t_k^a)}))P^{-1/2}(\mathcal{C}_{n_{i_k}(t_k^a)})
\end{align*}
for $j=s,...,t-2$, during which the system actuates in mode $i_{k}$. To obtain an upper bound on $\|L_j\|$, we rearrange (\ref{eq:Lemma8_re}) as follows:
\begin{align}
\nonumber P(\mathcal{C}_{n_{i_k}(t_k^a)})-\mathcal{H}(\mathcal{C}_{n_{i_k}(t_k^a)})\succeq (A_*+B_*^{i_k}K(\mathcal{C}_{n_{i_k}(t_k^a)}))^\top P(\mathcal{C}_{n_{i_k}(t_k^a)})(A_*+B_*^{i_k}K(\mathcal{C}_{n_{i_k}(t_k^a)})).
\end{align}
Here, $\mathcal{H}(\mathcal{C}_{n_{i_k}(t_k^a)})$ is defined in (\ref{eq:HDef3p}). Equivalently, we can write
\begin{align}
\nonumber &P^{-1/2}(\mathcal{C}_{n_{i_k}(t_k^a)})(A_*+B_*^{i_k}K(\mathcal{C}_{n_{i_k}(t_k^a)}))^\top P(\mathcal{C}_{n_{i_k}(t_k^a)})(A_*+B_*^{i_k}K(\mathcal{C}_{n_{i_k}(t_k^a)}))P^{-1/2}(\mathcal{C}_{n_{i_k}(t_k^a)})\preceq\\
\nonumber & I-P^{-1/2}(\mathcal{C}_{n_{i_k}(t_k^a)})\mathcal{H}(\mathcal{C}_{n_{i_k}(t_k^a)})P^{-1/2}(\mathcal{C}_{n_{i_k}(t_k^a)}).
\end{align}
Therefore,
\begin{align*}
    \|L_j^\top L_j\|\leq \|I-P^{-1/2}(\mathcal{C}_{n_{i_k}(t_k^a)})\mathcal{H}(\mathcal{C}_{n_{i_k}(t_k^a)})P^{-1/2}(\mathcal{C}_{n_{i_k}(t_k^a)})\| \leq\bigg(1-\eta(\mathcal{C}_{{i_k}(t_k^a)})\bigg)
\end{align*}
for $j=s,...,t-2$, which justifies (\ref{eq:Ljstate}), where $\eta(\mathcal{C}_{{i_k}(t_k^a)})$ is defined in (\ref{wq:etaDef}). Hence, we conclude that
  \begin{align}
      &\|L_j\|\leq\begin{cases}
\bigg(1-\eta(\mathcal{C}_{{i_k}(t_k^a)})\bigg)^{1/2}, & j=s,...,t-2 \\
\bigg(1-\eta(\mathcal{C}_{{i_{k+1}}(t_k^a)})\bigg)^{1/2}, & j=t-1
\end{cases}.  \label{eq:Ljstate}
  \end{align}
  We can now further upper-bound the right-hand side of (\ref{eq:khoobch}), which yields
  \begin{align*}
     \nonumber  \|M_s\|\leq &\overline{\lambda}\big(P^{1/2}(\mathcal{C}_{n_{i_{k+1}}(t_k^a)})\big)\bigg(1-\eta(\mathcal{C}_{{i_k}(t_k^a)})\bigg)^{\frac{1}{2}(t-s-1)}\,\bigg(1-\eta(\mathcal{C}_{{i_{k+1}}(t_k^a)})\bigg)^{\frac{1}{2}} \overline{\lambda}\big(P^{-1/2}(\mathcal{C}_{n_{i_{k+1}}(t_k^a)})\big)\\
    & \times \overline{\lambda}\big(P^{1/2}(\mathcal{C}_{n_{i_{k}}(t_k^a)})\big)\, \overline{\lambda}\big(P^{-1/2}(\mathcal{C}_{n_{i_{k}}(t_k^a)})\big).
  \end{align*} 
Noting that $P(\mathcal{C}_{n_{i_{k}}(t_k^a)})$ is positive definite, we have
\begin{align*}
      \overline{\lambda}\big(P^{1/2}(\mathcal{C}_{n_{i_{k}}(t_k^a)})\big)\overline{\lambda}\big(P^{-1/2}(\mathcal{C}_{n_{i_{k}}(t_k^a)})=&\bigg(\frac{\overline{\lambda}\big(P^{1/2}(\mathcal{C}_{n_{i_{k}}(t_k^a)})\big)}{\underline{\lambda}\big(P^{1/2}(\mathcal{C}_{n_{i_{k}}(t_k^a)})\big)}\bigg)^{1/2}
\end{align*}

We can derive a similar bound for $P(\mathcal{C}_{n_{i_{k+1}}(t_k^a)})$, which yields
  \begin{align}
     \nonumber  \|M_s\|&\leq \bigg(1-\eta(\mathcal{C}_{{i_k}(t_k^a)})\bigg)^{\frac{1}{2}(t-s-1)}\,\bigg(1-\eta(\mathcal{C}_{{i_{k+1}}(t_k^a)})\bigg)^{\frac{1}{2}} \rho^{1/2} (\mathcal{C}_{n_{i_k}(t_k^a)}, \mathcal{C}_{n_{i_{k+1}}(t_k^a)})\mathcal{X}^{1/2} (\mathcal{C}_{n_{i_k}(t_k^a)}, \mathcal{C}_{n_{i_{k+1}}(t_k^a)})\\
      &\leq  \bigg(1-\eta(\mathcal{C}_{{i_k}(t_k^a)})\bigg)^{\frac{1}{2}(t-s-1)}\, \rho^{1/2} (\mathcal{C}_{n_{i_k}(t_k^a)})\mathcal{X}^{1/2} (\mathcal{C}_{n_{i_{k+1}}(t_k^a)}).\label{eq:impsty}
  \end{align}
  where we used definitions in (\ref{eq:rhoDef}).

  The second inequality follows since we are considering the system immediately after a switch, and thus the first full contraction step under mode $i_{k+1}$ has not yet been realized. As a result, we drop the factor $\big(1-\eta(\mathcal{C}_{i_{k+1}}(t_k^a))\big)^{1/2}$ corresponding to that initial step. This is consistent with the dwell-time Lyapunov analysis used in the proof of Theorem \ref{thm:minimum_average_dwell_revised}.

  When $s=t_k^a$ and $t-1=t_{k+1}^a$ correspond to two consecutive switching instants, the minimum mode-dependent dwell-time condition implies
  \begin{align*}
     \|M_{t_k^a}\|\leq \sqrt{\bar{\alpha}}. 
  \end{align*}
For $s>t_k^a$, we obtain
  \begin{align*}
      \|M_s\|\leq \bar{\kappa}_*^2 \bigg(1-\eta(\mathcal{C}_{{i_k}(t_k^a)})\bigg)^{\frac{1}{2}(t-s-1)}.
  \end{align*}
which follows by using
\begin{align*}
    \frac{\alpha_0^{i_k}}{2}I\preceq P(\mathcal{C}_{n_{i_{k}}(t_k^a)}) \preceq\frac{\nu_i}{\sigma_{\omega}^2}I
\end{align*}
together with the definition of $\bar{\kappa}_{i_k}$, and an analogous argument for mode $i_{k+1}$. Denote $\bar{\kappa}_*:=\max_i \bar{\kappa}_i$.

Using the above bound, the state norm at the switching time can be upper-bounded as
    \begin{align}
        \nonumber \|x_{t_{k+1}^a}\|\leq& \|M_{t_k^a}\|\|x_{t_k^a}\|+\sum_{s=t_k^a}^{t_{k+1}^a-1}\|M_{s+1}\|\|B^{i_k}_*\eta_j+\omega_{s}\|\\
       \nonumber \leq &\sqrt{\bar{\alpha}} \|x_{t_k^a}\|+\bar{\kappa}^2_* \sum_{s=t_k^a}^{t_{k+1}^a-1}(1-\eta(\mathcal{C}_{{i_k}(t_k^a)}))^{(t_{k+1}^a-s)/2}  \|B_*\eta_s+\omega_s\| \\
  \nonumber  \leq & \sqrt{\bar{\alpha}}\, \|x_{t_k^a}\|+\frac{\bar{\kappa}_*^2}{\gamma*}  \max_{t^a_k\leq s< t_{k+1}^a-1}\|B^{i_k}_*\eta_{s}+\omega_{s}\| :=\sqrt{\bar{\alpha}}\, \|x_{t_k^a}\|+2\bar{\kappa}_*^4 \max_{t^a_k\leq s< t_{k+1}^a-1}\|B^{i_k}_*\eta_{s}+\omega_{s}\|
    \end{align}
where in the third inequality we applied (\ref{eq:adivetabnd}). 

We next consider the remaining combinations of controller types across two consecutive epochs. For the case (L,F), the same argument as in
(\ref{eq:impsty}) applies, with the term $\mathcal{X}(\mathcal{C}_{n_{i_{k+1}}(t_k^a)})$ replaced by the corresponding
quantity associated with the fallback controller,
\begin{align*}
 \bar{\mathcal{X}}(K_0^{i_{k+1}}):= \frac{\overline{\lambda}\big(P_{K_0^{i_{k+1}}}\big)}{\underline{\lambda}\big(P_{K_0^{i_{k+1}}})}\leq \kappa_0^*
\end{align*}
where $\kappa_0^*:=\max_i \kappa_0^i$. Consequently, we obtain
    \begin{align}
        \nonumber \|x_{t_{k+1}^a}\|\leq&\sqrt{\bar{\alpha}} \|x_{t_k^a}\|+\bar{\kappa}_* \kappa_0^* \sum_{s=t_k^a}^{t_{k+1}^a-1}(1-\eta(\mathcal{C}_{{i_k}(t_k^a)}))^{(t_{k+1}^a-s)/2}  \|B_*\eta_s+\omega_s\| \\
  \nonumber  \leq & \sqrt{\bar{\alpha}}\, \|x_{t_k^a}\|+2\bar{\kappa}_*^3 \kappa_0^*  \max_{t^a_k\leq s< t_{k+1}^a-1}\|B^{i_k}_*\eta_{s}+\omega_{s}\|. 
    \end{align}
    For the case (F,L), following an analogous induction argument, we obtain
   \begin{align}
        \nonumber \|x_{t_{k+1}^a}\|\leq&\sqrt{\bar{\alpha}}\, \|x_{t_k^a}\|+\frac{\bar{\kappa}_*\kappa_0^*}{\gamma_0^*}  \max_{t^a_k\leq s< t_{k+1}^a-1}\|B^{i_k}_*\eta_{s}+\omega_{s}\| 
    \end{align}
where $\gamma_0^*=\min_i \gamma_0^i$. 

Finally, for the case (F,F), both consecutive epochs employ fallback
controllers, and the same reasoning gives
  \begin{align}
        \nonumber \|x_{t_{k+1}^a}\|\leq&\sqrt{\bar{\alpha}}\, \|x_{t_k^a}\|+\frac{(\kappa_0^*)^2}{\gamma_0^*}  \max_{t^a_k\leq s< t_{k+1}^a-1}\|B^{i_k}_*\eta_{s}+\omega_{s}\|. 
    \end{align}

Therefore, regardless of the controller pattern across consecutive epochs, the state satisfies
  \begin{align}
        \nonumber \|x_{t_{k+1}^a}\|\leq&\sqrt{\bar{\alpha}}\, \|x_{t_k^a}\|+\tilde{g}  \max_{t^a_k\leq s< t_{k+1}^a-1}\|B^{i_k}_*\eta_{s}+\omega_{s}\|. 
    \end{align}
where
\begin{align*}
    \tilde{g}=\max \{2\bar{\kappa}_*^4, 2\bar{\kappa}_*^3 \kappa_0^*, \frac{\bar{\kappa}_*\kappa_0^*}{\gamma_0^*}, \frac{(\kappa_0^*)^2}{\gamma_0^*}\}.
\end{align*}
Next, recall that the algorithm selects the input perturbation covariance such
that $\|\Gamma_t^i\|\leq\sigma_\omega^2$. Therefore,
\begin{align*}  
B^{i}_*\frac{2\bar{\kappa}_i^2 \bar{p}^i_t\sigma_{\omega}^2}{(n_i(t)+\bar{c})^{\zeta}} B_*^{i^\top}\preceq \sigma_{\omega}^2I.
\end{align*}
Applying the Hanson--Wright inequality yields
\begin{align*}
   \max_{1\leq s\leq n_i(t)} \| B^i_*\eta^i_s+\omega_{s}\|\leq \sqrt{20 \sigma_{\omega}^2d_x\,\log \frac{n_i(t)}{\delta}} 
\end{align*}
with probability at least $1-\delta$.

By recursive (backward) propagation, regardless of the type pf patters, over the switching times, we obtain
\begin{align*}
    \|x_{t_{k+1}^a}\|\leq(\sqrt{\bar{\alpha}})^{k+1} \|x_1\|+2 \frac{\tilde{g}}{1-\bar{\alpha}}\sqrt{20 \sigma_{\omega}^2d_x \log \frac{t_{k+1}^a}{\delta}}.
\end{align*}
And the following bound holds for any time
\begin{align*}
    \|x_{t}\|\leq \|x_1\|+2 \frac{\tilde{g}}{1-\bar{\alpha}}\sqrt{20 \sigma_{\omega}^2d_x \log \frac{t}{\delta}}
\end{align*}
using which
\begin{align*}
    \|x_{t}\|^2\leq \underbrace{4\frac{\tilde{g}^2}{(1-\bar{\alpha})^2}\bigg(\|x_1\|+\sqrt{20d_x \sigma_{\omega}^2}\bigg)^2}_{=:c_x^2}\log \frac{t}{\delta}=:X_*^2.
\end{align*}

Recalling the structure of $z_t$, we obtain
  \begin{align*}
        \|z_t\|^2\leq & 2(1+\kappa_*^2)\|x_t\|^2+2 \|\eta^i_{t}\|^2.
    \end{align*}
    where $\kappa_*:=\max\{\bar{\kappa}_*,\kappa_0^*\}$
Furthermore,
\begin{align*}
    \max_{1\leq s\leq t} \|\eta^i_s\|\leq \sqrt{10d_u^i \sigma_{\omega}^2 \log \frac{t}{\delta}}.
\end{align*}
Combining the above estimates yields
\begin{align*}
        \|z_{t}\|^2\leq  \underbrace{\frac{16 \tilde{g}^2(1+\kappa_*^2)}{(1-\bar{\alpha})^2}\bigg(\|x_1\|+\sqrt{20(d_x+d_u^*)\sigma_{\omega}^2}\bigg)^2}_{=:c^2_z}\log \frac{t}{\delta}=:Z_*^2.
\end{align*}
where $\max_i d_u^i=:d_u^*$. With $\kappa_*:=\max\{\bar{\kappa}_*,\kappa_0^*\}$, we redefine
\begin{align*}
    \tilde{g}
    :=
    \max\left\{
        2\kappa_*^4,
        \frac{\kappa_*^2}{\gamma_0^*}
    \right\}.
\end{align*}
Although this choice leads to a potentially looser state-norm upper bound, it preserves validity while simplifying the notation.
  \end{proof}

\subsection{Minimum mode-dependent dwell-time estimation error}

In this subsection, we derive an upper bound on the dwell-time estimation error $\tau^{a}_{k,k+1}-\tau^*_{k,k+1}$, which is used in the regret analysis.

\begin{proof}[{\bf Proof of Theorem \ref{Thm:dwellTimeError}}]
From strong duality between relaxed primal and dual SDP formulations, we known that

\begin{align*}
  \begin{pmatrix}
	Q^i & 0 \\
	0 & R^i
	\end{pmatrix}\bullet \Sigma(\mathcal{C}_{n_i(t)})=P(\mathcal{C}_{n_i(t)})\bullet W=\sigma_{\omega}^2\operatorname{tr}(P(\mathcal{C}_{n_i(t)})).
\end{align*}
Similarly we have similar equality between the original primal and dual SDP formulations, (\ref{eq:SDPKhali}) and (\ref{SDPDualkhali}).

In addition by (\ref{eq:closness ofSolu}) we have
\begin{align}
    P(\Theta_*^{i})-P(\mathcal{C}_{n_i(t)})\preceq \chi_{t}^{i} I,\quad P(\mathcal{C}_{n_i(t)})\preceq P(\Theta_*^{i})\label{eq:usefulforbndn}
\end{align}

combing which yields

\begin{align*}
   \begin{pmatrix}
	Q^i & 0 \\
	0 & R^i
	\end{pmatrix}\bullet \big(\Sigma(\Theta_*^{i}))-\Sigma(\mathcal{C}_{n_i(t)})\big) =\sigma_{\omega}^2\bigg(\operatorname{tr}(P(\Theta_*^{i}))-\operatorname{tr}(P(\mathcal{C}_{n_i(t)}))\bigg)\preceq n \sigma_{\omega}^2\chi_t^i
\end{align*}
which results in 
\begin{align}
 \|\Sigma(\Theta_*^{i}))-\Sigma(\mathcal{C}_{n_i(t)})\|_F   \leq \operatorname{tr}\big(\Sigma(\Theta_*^{i}))-\Sigma(\mathcal{C}_{n_i(t)})\big)\leq \frac{n\sigma_{\omega}^2}{\alpha_0^i}\chi_t^i. \label{eq:febsigma}
\end{align}

By definition we have
\begin{align*}
    K(\mathcal{C}_{n_i(t)})=\Sigma_{ux}(\mathcal{C}_{n_i(t)}){\Sigma_{xx}^{-1}(\mathcal{C}_{n_i(t)})}
\end{align*}
and 
\begin{align*}
    K(\Theta_*^i)=\Sigma_{ux}(\Theta_*^i)\Sigma_{xx}^{-1}(\Theta_*^i)
\end{align*}
where $\Sigma(\Theta_*^i)$ is the solution of primal original SDP.

We then can write
\begin{align*}
     \|K(\mathcal{C}_{n_i(t)})- K(\Theta_*^i)\|&=\|\big(\Sigma_{ux}(\Theta_*^i)-\Sigma_{ux}(\mathcal{C}_{n_i(t)})\big)\Sigma^{-1}_{xx}(\Theta_*^i)+\Sigma_{ux}(\mathcal{C}_{n_i(t)})\big(\Sigma_{xx}^{-1}(\Theta_*^i)-\Sigma_{xx}^{-1}(\mathcal{C}_{n_i(t)})\big)\|\\
     &\leq \|\big(\Sigma_{ux}(\Theta_*^i)-\Sigma_{ux}(\mathcal{C}_{n_i(t)})\|_F \|\Sigma^{-1}_{xx}(\Theta_*^i)\|+\|\Sigma_{ux}(\mathcal{C}_{n_i(t)})\|_F\|\Sigma_{xx}^{-1}(\Theta_*^i)-\Sigma_{xx}^{-1}(\mathcal{C}_{n_i(t)})\|.
\end{align*}
For first term applying (\ref{eq:febsigma}) and $\|\Sigma^{-1}_{xx}(\Theta_*^i)\|\leq \sigma_{\omega}^2$, implied by (\ref{eq:SDPKhali}), yields 

\begin{align*}
    \|\big(\Sigma_{ux}(\Theta_*^i)-\Sigma_{ux}(\mathcal{C}_{n_i(t)})\|_F\|\Sigma^{-1}_{xx}(\Theta_*^i)\|\leq \frac{n}{\alpha_0}\chi_t^i.
\end{align*}
Furthermore,
\begin{align}
   \nonumber  \|\Sigma_{xx}^{-1}(\Theta_*^i)-\Sigma_{xx}^{-1}(\mathcal{C}_{n_i(t)})\|&=\|\Sigma_{xx}^{-1}(\Theta_*^i)\bigg(\Sigma_{xx}(\mathcal{C}_{n_i(t)})-\Sigma_{xx}(\Theta_*^i)\bigg)\Sigma_{xx}^{-1}(\mathcal{C}_{n_i(t)})\|\\
    &\leq \|\Sigma_{xx}^{-1}(\Theta_*^i)\| \Sigma_{xx}(\mathcal{C}_{n_i(t)})-\Sigma_{xx}(\Theta_*^i)\|_F\|\Sigma_{xx}^{-1}(\mathcal{C}_{n_i(t)})\|. \label{eq:intermitn}
\end{align}
By the existence of $\Sigma_{xx}^{-1}(\mathcal{C}_{n_i(t)})$ shown in \cite{cohen2019learning}, there exist $\tilde{\sigma}_{xx}$ such that $\Sigma_{xx}(\mathcal{C}_{n_i(t)})\succeq \tilde{\sigma}_{xx} I$ for any $i$ and all $t$ then we can further upper-bound the term (\ref{eq:intermitn}) as follows:

\begin{align*}
     \|\Sigma_{xx}^{-1}(\Theta_*^i)-\Sigma_{xx}^{-1}(\mathcal{C}_{n_i(t)})\|\leq \frac{n}{\alpha_0 \tilde{\sigma}_{xx}}\chi_t^i
\end{align*}
applying which results in 
\begin{align}
    \|K(\mathcal{C}_{n_i(t)})- K(\Theta_*^i)\|\leq \frac{n}{\alpha_0}(1+\frac{1}{\tilde{\sigma}_{xx}})\chi_t^{i}.
\end{align}

To bound the dwell-time estimation error, we consider different scenarios. Recall the definitions
\begin{align*}
     \tau^*_{k,\,k+1}
     &= \max\bigg\{1,
     \underbrace{-\frac{\ln {\rho}_*^{{i_k}}\ln \mathcal{X}_*^{i_{k+1}}-\ln \bar{\alpha}}{\ln \big(1-{\eta }_*^{i_k}\big)}}_{=:\tilde{\tau}_{k,k+1}^*}
     \bigg\}
\end{align*}
and
\begin{align*}
   \tau_{k,k+1}^{a}
   := \max \bigg\{1,
   \underbrace{-\frac{\ln \rho (\mathcal{C}_{n_{i_k}(t_k^a)})+\ln \mathcal{X} (\mathcal{C}_{n_{i_{k+1}}(t_k^a)})-\ln \bar{\alpha}}{\ln \big(1-\eta \big(\mathcal{C}_{n_{i_k}(t_k^a)}\big)\big)}}_{=:\tilde{\tau}^a_{k,k+1}}
   \bigg\}.
\end{align*}

First, consider the case \(\tau_{k,k+1}^* = 1 > \tilde{\tau}_{k,k+1}^*\). Since \(\tau_{k,k+1}^a \ge 1\), we either have \(\tau_{k,k+1}^a = 1\) or \(\tau_{k,k+1}^a = \tilde{\tau}_{k,k+1}^a>1\). In the former case, \(\tau_{k,k+1}^a - \tau_{k,k+1}^* = 0\); and in teh later one
\begin{align}
    \tau_{k,k+1}^a - \tau_{k,k+1}^*
    =
    \tilde{\tau}_{k,k+1}^a - \tau_{k,k+1}^*
    \le
    \tilde{\tau}_{k,k+1}^a - \tilde{\tau}_{k,k+1}^*,
    \label{eq:use1234}
\end{align}
where the inequality follows from \(\tau_{k,k+1}^* > \tilde{\tau}_{k,k+1}^*\). Thus, when \(\tau_{k,k+1}^* = 1\), the estimation error is either zero or bounded by \eqref{eq:use1234}.

Next, consider the case \(\tau_{k,k+1}^* = \tilde{\tau}_{k,k+1}^* > 1\). Since \(\tau_{k,k+1}^a \ge \tau_{k,k+1}^*\), it follows that \(\tau_{k,k+1}^a = \tilde{\tau}_{k,k+1}^a\), and therefore
\begin{align*}
     \tau_{k,k+1}^a - \tau_{k,k+1}^*
     =
     \tilde{\tau}_{k,k+1}^a - \tilde{\tau}_{k,k+1}^*.
\end{align*}

Therefore, in all cases, the dwell-time estimation error is either zero or can be upper-bounded by bounding \(\tilde{\tau}_{k,k+1}^a - \tilde{\tau}_{k,k+1}^*\).

An upper bound on the dwell-time estimation error can be written as follows:

\begin{align*}
    \tilde{\tau}_{k,k+1}^a-\tilde{\tau}_{k,k+1}^*&=\frac{\ln \rho (\mathcal{C}_{n_{i_k}(t_k^a)})+\ln \mathcal{X} (\mathcal{C}_{n_{i_{k+1}}(t_k^a)})-\ln \bar{\alpha}}{-\ln \big(1-\eta \big(\mathcal{C}_{n_{i_k}(t_k^a)}\big)\big)}- \frac{\ln {\rho}_*^{{i_k}}+\ln \mathcal{X}_*^{i_{k+1}}-\ln \bar{\alpha}}{-\ln \big(1-{\eta }_*^{i_k}\big)}\\
    &=\frac{D}{\ln \big(1-\eta \big(\mathcal{C}_{n_{i_k}(t_k^a)}\big)\big)\ln \big(1-{\eta }_*^{i_k}\big)}
\end{align*}
where
\begin{align*}
    D=&-\ln \big(1-{\eta }_*^{i_k}\big) \bigg(\ln \rho (\mathcal{C}_{n_{i_k}(t_k^a)})+\ln \mathcal{X} (\mathcal{C}_{n_{i_{k+1}}(t_k^a)})-\ln \bar{\alpha}\bigg)\\
    &+\ln \big(1-\eta \big(\mathcal{C}_{n_{i_k}(t_k^a)}\big)\big)\bigg(\ln {\rho}_*^{{i_k}}+\ln \mathcal{X}_*^{i_{k+1}}-\ln \bar{\alpha}\bigg)\\
    &+\ln \big(1-{\eta }_*^{i_k}\big)\bigg(\ln {\rho}_*^{{i_k}}+\ln \mathcal{X}_*^{i_{k+1}}-\ln \bar{\alpha}\bigg)\\
    &-\ln \big(1-{\eta }_*^{i_k}\big)\bigg(\ln {\rho}_*^{{i_k}}+\ln \mathcal{X}_*^{i_{k+1}}-\ln \bar{\alpha}\bigg)\\
    =&\underbrace{-\ln \big(1-{\eta }_*^{i_k}\big) \bigg(\ln \rho (\mathcal{C}_{n_{i_k}(t_k^a)})+\ln \mathcal{X} (\mathcal{C}_{n_{i_{k+1}}(t_k^a)})-\ln {\rho}_*^{{i_k}}-\ln \mathcal{X}_*^{i_{k+1}}\bigg)}_{D_1}\\
    &+\underbrace{\bigg(\ln {\rho}_*^{{i_k}}+\ln \mathcal{X}_*^{i_{k+1}}-\ln \bar{\alpha}\bigg)\bigg(\ln \big(1-\eta \big(\mathcal{C}_{n_{i_k}(t_k^a)}\big)\big)-\ln \big(1-{\eta }_*^{i_k}\big)\bigg)}_{D_2}.
\end{align*} 
We first upper-bound \(D_1\). We begin with the bound
\begin{align}
 \nonumber \ln \rho (\mathcal{C}_{n_{i_k}(t_k^a)})-\ln {\rho}_*^{{i_k}}=&\ln \bar{\lambda}\big(P(\mathcal{C}_{n_{i_{k}}(t_k^a)})\big)-\ln\bar{\lambda}\big(P(\Theta_*^{i_{k}})\big)+\\
 \nonumber &\ln \underline{\lambda}\big(P(\Theta_*^{i_{k}})\big)-\ln \underline{\lambda}\big(P(\mathcal{C}_{n_{i_{k}}(t_k^a)})\big)\\
  \leq &\underline{\lambda}\big(P(\Theta_*^{i_{k}})\big)-\ln \underline{\lambda}\big(P(\mathcal{C}_{n_{i_{k}}(t_k^a)})\big)\label{eq:rhobndkh}
\end{align}
where the last inequality follows from $P(\Theta_*^{i_{k}})\succeq P(\mathcal{C}_{n_{i_{k}}(t_k^a)})$.

Similarly,
\begin{align}
  \nonumber\ln \mathcal{X}(\mathcal{C}_{n_{i_{k+1}}(t_k^a)})-\ln {\mathcal{X}}_*^{i_{k+1}}=&\ln \bar{\lambda}\big(P(\mathcal{C}_{n_{i_{k+1}}(t_k^a)})\big)-\ln\bar{\lambda}\big(P(\Theta_*^{i_{k+1}})\big)+\\
 \nonumber  &\ln \underline{\lambda}\big(P(\Theta_*^{i_{k+1}})\big)-\ln \underline{\lambda}\big(P(\mathcal{C}_{n_{i_{k+1}}(t_k^a)})\big)\\
  \leq &\ln \underline{\lambda}\big(P(\Theta_*^{i_{k+1}})\big)-\ln \underline{\lambda}\big(P(\mathcal{C}_{n_{i_{k+1}}(t_k^a)})\big).\label{eq:mathcx}
\end{align}
Applying the following inequality
\begin{align}
    \ln a-\ln b\leq \begin{cases}
a-b, &\quad  a\geq b\geq 1 \\
\frac{a-b}{b},& \quad  \text{otherwise}
\end{cases}.\label{eq:usefulineqlns}
\end{align}

to (\ref{eq:rhobndkh}) and (\ref{eq:mathcx}), and subsequently applying (\ref{eq:usefulforbndn}) to their sum, we obtain the following upper bound on \(D_1\):
\begin{align*}
    D_1\leq -C_1\ln (1-\eta_*^i)\big(\chi_{t_k^a}^{i_k}+\chi_{t_k^a}^{i_{k+1}}\big)
\end{align*}
where \(C_1\) is a finite problem-dependent constant.

The remaining task is to upper-bound \(D_2\). For this purpose, we apply (\ref{eq:usefulineqlns}), which yields
\begin{align*}
     \ln \big(1-\eta \big(\mathcal{C}_{n_{i_k}(t_k^a)}\big)\big)-\ln \big(1-{\eta }_*^{i_k}\big)\leq \frac{|\eta \big(\mathcal{C}_{n_{i_k}(t_k^a)}\big)-{\eta}_*^{i_k}|}{\max\{\eta \big(\mathcal{C}_{n_{i_k}(t_k^a)}\big),{\eta}_*^{i_k}\}}.
\end{align*}

Next,

\begin{align*}
 \eta \big(\mathcal{C}_{n_{i_k}(t_k^a)}\big)-{\eta}_*^{i_k}=&\frac{\underline{\lambda}(\mathcal{H}\big(\mathcal{C}_{n_{i_k}(t_k^a)}\big))}{\bar{\lambda}(P\big(\mathcal{C}_{n_{i_k}(t_k^a)}\big))}-\frac{\underline{\lambda}(H(\Theta_*^{i_k}))}{\bar{\lambda}(P(\Theta_*^{i_k}))}=\frac{F}{\bar{\lambda}(P\big(\mathcal{C}_{n_{i_k}(t_k^a))}\big)) \bar{\lambda}(P(\Theta_*^{i_k}))}
\end{align*}
where
\begin{align*}
    F=&\bar{\lambda}(P(\Theta_*^{i_k}))\underline{\lambda}(\mathcal{H}\big(\mathcal{C}_{n_{i_k}(t_k^a)}\big))-\bar{\lambda}(P\big(\mathcal{C}_{n_{i_k}(t_k^a)}\big)) \underline{\lambda}(H(\Theta_*^{i_k}))\\
    &+\bar{\lambda}(P(\Theta_*^{i_k}))\underline{\lambda}(H(\Theta_*^{i_k}))-\bar{\lambda}(P(\Theta_*^{i_k}))\underline{\lambda}(H(\Theta_*^{i_k}))\\
    &=\underbrace{\underline{\lambda}(H(\Theta_*^{i_k}))\bigg(\bar{\lambda}(P(\Theta_*^{i_k}))-\bar{\lambda}(P\big(\mathcal{C}_{n_{i_k}(t_k^a)}\big)\bigg)}_{F_1}+\underbrace{\bar{\lambda}(P(\Theta_*^{i_k}))\bigg(\underline{\lambda}(\mathcal{H}\big(\mathcal{C}_{n_{i_k}(t_k^a)}\big))-\underline{\lambda}(H(\Theta_*^{i_k}))\bigg)}_{F_2}.
\end{align*}

The term \(F_1\), similarly to (\ref{eq:rhobndkh}), can be bounded as

\begin{align*}
    F_1:=\underline{\lambda}(H(\Theta_*^{i_k}))\bigg(\bar{\lambda}(P(\Theta_*^{i_k}))-\bar{\lambda}(P\big(\mathcal{C}_{n_{i_k}(t_k^a)}\big)\bigg)\leq C_2\underline{\lambda}(H(\Theta_*^{i_k})) \alpha_{t_k^a}^{i_k}.
\end{align*}

For \(F_2\), applying Weyl’s inequality yields
\begin{align*}
    \underline{\lambda}(\mathcal{H}\big(\mathcal{C}_{n_{i_k}(t_k^a)}\big))-\underline{\lambda}(H(\Theta_*^{i_k}))\leq \|\mathcal{H}\big(\mathcal{C}_{n_{i_k}(t_k^a)}\big)-H(\Theta_*^{i_k})\|.
\end{align*}
Recalling the definitions of $H(\Theta_*^{i_k})$ and $\mathcal{H}\big(\mathcal{C}_{n_{i_k}(t_k^a)}\big)$, we obtain

\begin{align*}
    \underline{\lambda}(\mathcal{H}\big(\mathcal{C}_{n_{i_k}(t_k^a)}\big))-\underline{\lambda}(H(\Theta_*^{i_k}))&\leq | \underline{\lambda}(\mathcal{H}\big(\mathcal{C}_{n_{i_k}(t_k^a)}\big))-\underline{\lambda}(H(\Theta_*^{i_k}))|\\
    &\leq \|K^\top(\Theta_*^{i_k})R^{i_k}K(\Theta_*^{i_k})-K^\top\big(\mathcal{C}_{n_{i_k}(t_k^a)}\big)R^{i_k}K\big(\mathcal{C}_{n_{i_k}(t_k^a)}\big)+(1+\kappa_{i_k}^2)\alpha^{i_k}_{t_k^a}\|\\
    &\leq \|K^\top(\Theta_*^{i_k})R^{i_k}K(\Theta_*^{i_k})-K^\top\big(\mathcal{C}_{n_{i_k}(t_k^a)}\big)R^{i_k}K\big(\mathcal{C}_{n_{i_k}(t_k^a)}\big)\|+(1+\kappa_{i_k}^2)\alpha^{i_k}_{t_k^a}.
\end{align*}

Define
\begin{align*}
    \Delta_{i_k}:=K(\Theta_*^{i_k})-K\big(\mathcal{C}_{n_{i_k}(t_k^a)}\big)
\end{align*}
then one can write
\begin{align*}
    K^\top(\Theta_*^{i_k})R^{i_k}K(\Theta_*^{i_k})-K^\top\big(\mathcal{C}_{n_{i_k}(t_k^a)}\big)R^{i_k}K\big(\mathcal{C}_{n_{i_k}(t_k^a)}\big)=\Delta_{i_k}^\top R^{i_k} K\big(\mathcal{C}_{n_{i_k}(t_k^a)}\big)+K^\top\big(\mathcal{C}_{n_{i_k}(t_k^a)}\big)R^{i_k}\Delta_{i_k}+\Delta_{i_k}^\top R^{i_k}\Delta_{i_k}.
\end{align*}

Hence,
\begin{align*}
    \|K^\top(\Theta_*^{i_k})R^{i_k}K(\Theta_*^{i_k})-K^\top\big(\mathcal{C}_{n_{i_k}(t_k^a)}\big)R^{i_k}K\big(\mathcal{C}_{n_{i_k}(t_k^a)}\big)\|&\leq 2\|K\big(\mathcal{C}_{n_{i_k}(t_k^a)}\big)\|\|R^{i_k}\|\|\Delta_{i_k}\|+\|R^{i_k}\|\|\Delta_{i_k}\|^2\\
    &\leq \alpha_1^{i_k}\bigg(2\kappa_{i_k}^2 \big(\frac{n}{\alpha_0}(1+\frac{1}{\tilde{\sigma}_{xx}})\alpha_t^{i_k}\big)+\big(\frac{n}{\alpha_0}(1+\frac{1}{\tilde{\sigma}_{xx}})\alpha_t^{i_k}\big)^2\bigg).
\end{align*}
Therefore,
\begin{align*}
    F_2\leq \alpha_1^{i_k}\bigg(2\kappa_{i_k}^2 \big(\frac{n}{\alpha_0}(1+\frac{1}{\tilde{\sigma}_{xx}})\chi^{i_k}_{t_k^a}\big)+\big(\frac{n}{\alpha_0}(1+\frac{1}{\tilde{\sigma}_{xx}})\chi^{i_k}_{t_k^a}\big)^2\bigg). 
\end{align*}
Combining the bounds, we conclude that
\begin{align*}
    \tau_{k,k+1}^a-\tau_{k,k+1}^*\leq \bar{C}_0\chi_{t_k^a}^{i_{k+1}}+\bar{C}_1\chi^{i_k}_{t_k^a}+\bar{C}_2\big(\chi^{i_k}_{t_k^a}\big)^2
\end{align*}
where $\bar{C}_0$, $\bar{C}_1$, and $\bar{C}_3$ are problem-dependent constants.
\end{proof}

\begin{corollary}\label{cor:LSBound}
    Consider the case where, in mode $i_k$, the system operates under the learning-based controller, and after switching to mode $i_{k+1}$, it applies the fallback controller $K_0^{i_{k+1}}$ and computes the corresponding minimum dwell time. Then, the dwell-time estimation error satisfies
    \begin{align*}
        \tau_{k,k+1}^a-\tau_{k,k+1}^*
        \leq {}& \tilde{C}_0\Big(\ln \bar{\lambda}\big(P_{K_0^{i_{k+1}}}\big)
        -\ln\bar{\lambda}\big(P(\Theta_*^{i_{k+1}})\big)\Big) \\
        &+\tilde{C}_1\chi^{i_k}_{t_k^a}
        +\tilde{C}_2\big(\chi^{i_k}_{t_k^a}\big)^2 .
    \end{align*}
\end{corollary}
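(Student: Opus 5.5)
The plan is to repeat the argument of Theorem~\ref{Thm:dwellTimeError} almost verbatim. The only change is that the quantity $\mathcal{X}(\mathcal{C}_{n_{i_{k+1}}(t_k^a)})$ is replaced by the fallback ratio $\bar{\mathcal{X}}(K_0^{i_{k+1}})=\overline{\lambda}(P_{K_0^{i_{k+1}}})/\underline{\lambda}(P_{K_0^{i_{k+1}}})$, as dictated by the (L,F) case of (\ref{eq:dwell_cases}).

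First, reduce to the untruncated quantities. The same case analysis on the $\max\{1,\cdot\}$ used in that proof shows that $\tau^a_{k,k+1}-\tau^*_{k,k+1}$ is either zero or bounded by $\tilde{\tau}^a_{k,k+1}-\tilde{\tau}^*_{k,k+1}$. Writing this difference over the common denominator $\ln(1-\eta(\mathcal{C}_{n_{i_k}(t_k^a)}))\ln(1-\eta_*^{i_k})$ gives the split $D=D_1+D_2$.

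The term $D_2$ involves only mode $i_k$, which still runs the learning-based controller. It is therefore bounded exactly as before: Lemma~\ref{lem:tightnessSol}, the gain perturbation bound $\|K(\mathcal{C}_{n_i(t)})-K(\Theta_*^i)\|\lesssim\chi_t^i$, and Weyl's inequality on $\mathcal{H}$ versus $H(\Theta_*^{i_k})$ together yield contributions $\tilde{C}_1\chi^{i_k}_{t_k^a}+\tilde{C}_2(\chi^{i_k}_{t_k^a})^2$. The denominator is bounded below uniformly, because $\eta(\mathcal{C})$ and $\eta_*$ are bounded below by $1/\bar{\kappa}_*^2$ via (\ref{eq:adivetabnd}); this lower bound is absorbed into the constants.

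The $\rho$ part of $D_1$ is also handled as before, through (\ref{eq:rhobndkh}) and (\ref{eq:usefulineqlns}), and gives another $\chi^{i_k}_{t_k^a}$ term. The only new piece is
\[
\ln\bar{\mathcal{X}}(K_0^{i_{k+1}})-\ln\mathcal{X}_*^{i_{k+1}}
=\Big(\ln\overline{\lambda}(P_{K_0^{i_{k+1}}})-\ln\overline{\lambda}(P(\Theta_*^{i_{k+1}}))\Big)
+\Big(\ln\underline{\lambda}(P(\Theta_*^{i_{k+1}}))-\ln\underline{\lambda}(P_{K_0^{i_{k+1}}})\Big).
\]
Since $P(\Theta_*^{i_{k+1}})$ is the optimal (DARE) cost matrix, the comparison principle for Lyapunov equations gives $P_{K_0^{i_{k+1}}}\succeq P(\Theta_*^{i_{k+1}})$. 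Hence the second bracket is nonpositive and can be dropped. The first bracket is nonnegative and is kept as is.

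Multiplying this bracket by the bounded factor $-\ln(1-\eta_*^{i_k})$ and dividing by the denominator gives the term $\tilde{C}_0(\ln\overline{\lambda}(P_{K_0^{i_{k+1}}})-\ln\overline{\lambda}(P(\Theta_*^{i_{k+1}})))$.

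The main point to verify carefully is the sign and ordering step: that $P_{K_0}\succeq P(\Theta_*)$ holds in the Loewner order, which is the standard optimality of the Riccati solution among stabilizing gains. A second point is that the factor in $D_2$ multiplying $\ln\rho^{i_k}_*+\ln\mathcal{X}^{i_{k+1}}_*-\ln\bar\alpha$ still uses the true $\mathcal{X}_*^{i_{k+1}}$, so no fallback-dependent error enters $D_2$. Collecting the constants then completes the bound.
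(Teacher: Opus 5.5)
Your proposal is correct and matches the paper's proof. Like the paper, you rerun the argument of Theorem~\ref{Thm:dwellTimeError} with $D_2$ and the $\rho$-part of $D_1$ unchanged, and in the $\mathcal{X}$-part of $D_1$ you replace the learned matrix by $P_{K_0^{i_{k+1}}}$, drop the $\underline{\lambda}$-difference using $P(\Theta_*^{i_{k+1}})\preceq P_{K_0^{i_{k+1}}}$, and keep the $\overline{\lambda}$-difference as the $\tilde{C}_0$ term. Your explicit uniform lower bound on the denominator $\ln(1-\eta(\mathcal{C}_{n_{i_k}(t_k^a)}))\ln(1-\eta_*^{i_k})$ is a detail the paper leaves implicit, but the route is the same.
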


\begin{proof}
    The proof follows the same steps as those of Theorem~\ref{Thm:dwellTimeError}, with the only difference arising in the upper bound of $D_1$ for mode $i_{k+1}$. Specifically, since the fallback controller $K_0^{i_{k+1}}$ is applied in mode $i_{k+1}$, the corresponding matrix $P_{K_0^{i_{k+1}}}$ replaces the matrix in (\ref{eq:mathcx}). Therefore, we have
    \begin{align}
        \ln \mathcal{X}(\mathcal{C}_{n_{i_{k+1}}(t_k^a)})
        -\ln {\mathcal{X}}_*^{i_{k+1}}
        ={}&
        \ln \bar{\lambda}\big(P_{K_0^{i_{k+1}}}\big)
        -\ln\bar{\lambda}\big(P(\Theta_*^{i_{k+1}})\big)
        \nonumber\\
        &+
        \ln \underline{\lambda}\big(P(\Theta_*^{i_{k+1}})\big)
        -\ln \underline{\lambda}\big(P_{K_0^{i_{k+1}}}\big)
        \nonumber\\
       \nonumber  \leq&
        \ln \bar{\lambda}\big(P_{K_0^{i_{k+1}}}\big)
        -\ln\bar{\lambda}\big(P(\Theta_*^{i_{k+1}})\big),
    \end{align}
    where the last inequality follows from the fact that
    $P(\Theta_*^{i_{k+1}})
    \preceq P_{K_0^{i_{k+1}}}$.
\end{proof}

\begin{lemma}[Bound on the total number of fallback steps]
\label{lem:fallback_steps}
Let $n_i^{\mathrm{fb}}(t)$ denote the number of time steps up to time $t$ during
which the fallback controller is applied while the system operates in mode
$i\in\mathcal{M}$. Suppose that for every mode $i\in\mathcal{M}$,
\[
n_i^{\mathrm{fb}}(t)
\leq
\left(
\log \frac{2d_x\bar{\tau}_*n_s}{\delta}
\right)^{\frac{2}{1-2\zeta}}.
\]
Then, the total number of time steps during which the fallback controller is
applied satisfies
\[
\sum_{i\in\mathcal{M}} n_i^{\mathrm{fb}}(t)
\leq
|\mathcal{M}|
\left(
\log \frac{2d_x\bar{\tau}_*n_s}{\delta}
\right)^{\frac{2}{1-2\zeta}}.
\]
\end{lemma}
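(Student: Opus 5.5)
The statement is essentially a summation over modes, so the plan is to decompose the total fallback count by mode and apply the per-mode hypothesis. Every time step at which the fallback controller is active belongs to exactly one epoch, and each epoch is spent in exactly one mode $i_k\in\mathcal{M}$. By the construction of Algorithm~\ref{alg:OSL3}, the controller type (L or F) is fixed for the whole epoch. The set of fallback time steps up to time $t$ is therefore the disjoint union, over $i\in\mathcal{M}$, of the fallback steps taken while operating in mode $i$. Their total count is exactly $\sum_{i\in\mathcal{M}} n_i^{\mathrm{fb}}(t)$.

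Once this decomposition is in place, I will bound each summand by the assumed per-mode bound $\big(\log \frac{2d_x\bar{\tau}_*n_s}{\delta}\big)^{\frac{2}{1-2\zeta}}$. This bound is uniform in $i$, so summing over the $|\mathcal{M}|$ modes yields the factor $|\mathcal{M}|$ and the claimed inequality.

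For completeness, I would also remark why the per-mode hypothesis is natural, since it is what later makes the lemma usable. The fallback branch is taken only when (\ref{eq:switchtresh}) holds, namely $n_i(t_k^a)+\bar c_i\le (\log\frac{2d_xt_k^a}{\delta})^{\frac{2}{1-2\zeta}}$. Lemma~\ref{thm:minimum_average_dwell} gives $\tau^a_{k,k+1}\le\bar\tau_*$, so every switching time satisfies $t_k^a\le \bar\tau_* n_s$. Because $\log$ is monotone, every fallback epoch in mode $i$ therefore starts when $n_i$ is still below $(\log\frac{2d_x\bar\tau_*n_s}{\delta})^{\frac{2}{1-2\zeta}}$.

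The only delicate point, and the main obstacle, is the last fallback epoch in each mode. It may overshoot the threshold by at most one epoch length, which is at most $\bar\tau_*$. This overshoot only adds a lower-order term and is absorbed by the stated hypothesis. Since the lemma assumes the per-mode bound rather than proving it, this issue does not affect the present proof, which reduces to the disjoint-union counting described above.
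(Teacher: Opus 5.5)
Your proof is correct and follows the same route as the paper: the fallback steps partition by active mode, so the total equals $\sum_{i\in\mathcal{M}} n_i^{\mathrm{fb}}(t)$, and summing the uniform per-mode bound over the $|\mathcal{M}|$ modes gives the claim. Your side remark is more careful than the paper, which attributes the per-mode bound directly to (\ref{eq:switchtresh}) even though that criterion only controls $n_i$ at the start of each fallback epoch, so the last such epoch in a mode can push the count past the threshold by up to one epoch length; as you note, this does not affect the lemma, since the per-mode bound is a hypothesis.
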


\begin{proof}
By the criterion for the fallback policy in (\ref{eq:switchtresh}), for each mode we have
\[
n_i^{\mathrm{fb}}(t)
\leq
\left(
\log \frac{2d_x\bar{\tau}_*n_s}{\delta}
\right)^{\frac{2}{1-2\zeta}},
\qquad \forall i\in\mathcal{M}.
\]
Therefore, summing this inequality over all modes gives
\begin{align*}
\sum_{i\in\mathcal{M}} n_i^{\mathrm{fb}}(t)
&\leq
\sum_{i\in\mathcal{M}}
\left(
\log \frac{2d_x\bar{\tau}_*n_s}{\delta}
\right)^{\frac{2}{1-2\zeta}}\\
&=
|\mathcal{M}|
\left(
\log \frac{2d_x\bar{\tau}_*n_s}{\delta}
\right)^{\frac{2}{1-2\zeta}}.
\end{align*}
\end{proof}

\subsection{Regret bound analysis} \label{sec:app-reg}

We begin by decomposing the regret into several components and then bounding each term separately. Recall that
\begin{align*}
R(\mathcal{I}) = \mathcal{C}_{\mathcal{I}}^{\bar{M}\bar{S}}- \mathcal{C}_{\mathcal{I}}^{M\bar{S}}.
\end{align*}
By partitioning the horizon into switching epochs (cf. Remark~\ref{rem:sumswi}), the regret can be expressed as the sum of per-epoch regret terms. Moreover, at each switching epoch, either the learned controller is certified to satisfy the safety condition or the fallback controller is applied. Accordingly, define the sets of certified and fallback switching epochs as
\begin{align*}
    \mathcal{I}_{\mathrm{cert}}
    &=
    \left\{
        k:(n_i(t_k^a)+\bar{c}_{i_k^a})^{\frac{1}{2}-\zeta}\geq \log \frac{2nt_k^a}{\delta}
    \right\},\\
    \mathcal{I}_{\mathrm{fallback}}
    &=
    \mathcal{I}\setminus\mathcal{I}_{\mathrm{cert}},
\end{align*}
where \(\mathcal{I}=\{0,\ldots,n_s-1\}\) denotes the set of all switching epochs. The per-epoch regret is defined as
\begin{align*}
    R_{k,k+1}(\zeta)=\begin{cases}
R_{k,k+1}^{\mathrm{cert}}, &\quad \quad \quad \quad  k\in \mathcal{I}_{\mathrm{cert}} \\
R_{k,k+1}^{\mathrm{fallback}}, & \quad \quad \quad \quad k\in \mathcal{I}_{\mathrm{fallback}}
\end{cases}.
\end{align*}

Consequently, the expected regret admits the decomposition
\begin{align*}
    \mathbb{E}\left[R_{\zeta}(\mathcal{I})\right]
    &=
    \mathbb{E}\left[
    \sum_{k\in\mathcal{I}_{\mathrm{cert}}}
    R^{\mathrm{cert}}_{k,k+1}(\zeta)
    +
    \sum_{k\in\mathcal{I}_{\mathrm{fallback}}}
    R^{\mathrm{fallback}}_{k,k+1}(\zeta)
    \right]
\end{align*}
where, by a slight abuse of notation, the subscript $\zeta$ emphasizes the dependence of the regret on the perturbation-noise exponent $\zeta$ introduced in~(\ref{eq:Gamarslop}).

The per-epoch regret over certified switching epochs can be written as
\begin{align*}
     R^{\mathrm{cert}}_{k,k+1}(\zeta)=\sum_{t=t_k^a}^{t_{k+1}^a-1}\bigg(x_t^\top Q^{i_k}x_t+u_t\top R^{i_k}u_t-J_*^{f(t)}\bigg)
\end{align*}
where
\begin{align*}
    J_*^{f(t)}=\begin{cases}
J_*^{i_k}, &\quad \quad \quad \quad  t_k^a\leq t\leq t_k^a+\tau^*_{k,k+1} \\
0, & t_k^a+\tau^*_{k,k+1} <t\leq t_{k+1}^a
\end{cases}.
\end{align*}
The per-epoch regret over fallback switching epochs is given by~(\ref{eq:fallbackpa}).

Accordingly, the per-epoch regret over certified switching epochs can be rewritten as
\begin{align*}
  R^{\mathrm{cert}}_{k,k+1}(\zeta)&=  \sum_{t=t_k^a}^{t_{k+1}^a-1} \bigg(x_t^\top Q^{i_k}x_t+u_t^\top R^{i_k}u_t-J_*^{i_k}\bigg)+\sum_{t=t_k^a+\tau_{k,k+1}^*}^{t_{k+1}^a-1}J_*^{i_k}.
\end{align*}
Next, decompose the control input into its nominal feedback component and perturbation as
\begin{align}
    \bar{u}_t=K(\mathcal{C}_{n_{i_k}(t_k^a)})x_t \quad t^a_k\leq t<t^a_{k+1}. \label{eq:baru}
\end{align}
Substituting \eqref{eq:baru} into the stage cost yields
\begin{align*}
  \sum_{t=t_k^a}^{t_{k+1}^a-1} \bigg(x_t^\top Q^{i_k}x_t+u_t^\top R^{i_k}u_t-J_*^{i_k}\bigg)=\sum_{t=t_k^a}^{t_{k+1}^a-1} \bigg(x_t^\top Q^{i_k}x_t+\bar{u}_t^\top R^{i_k}\bar{u}_t-J_*^{i_k}\bigg)+\sum_{t=t_k^a}^{t_{k+1}^a-1} 2\bar{u}_t^\top R^{i_k}\eta_t +\sum_{t=t_k^a}^{t_{k+1}^a-1} \eta_t^\top R^{i_k}\eta_t.
\end{align*}
Multiplying both sides by the indicator $1_{\mathcal{E}_k^{\zeta}}$ and combining the above expressions, we obtain
\begin{align}
\nonumber R^{\mathrm{cert}}_{k,k+1}(\zeta)1_{\mathcal{E}_k^{\zeta}}= & \overbrace{\sum_{t=t_k^a}^{t_{k+1}^a-1} \bigg(x_t^\top Q^{i_k}x_t+\bar{u}_t^\top R^{i_k}\bar{u}_t-J_*^{i_k}\bigg)1_{\mathcal{E}_k^{\zeta}}}^{R^{(1)}_{k,k+1}}+\overbrace{\sum_{t=t_k^a+\tau_{k,k+1}^*}^{t_{k+1}^a-1}J_*^{i_k}1_{\mathcal{E}_k^{\zeta}}}^{R^{(2)}_{k,k+1}}\\
&+\overbrace{\sum_{t=t_k^a}^{t_{k+1}^a-1} 2\bar{u}_t^\top R^{i_k}\eta_t 1_{\mathcal{E}_k^{\zeta}}}^{R^{(31)}_{k,k+1}}+\underbrace{\sum_{t=t_k^a}^{t_{k+1}^a-1} \eta_t^{i_k^\top} R^{i_k}\eta^{i_k}_t 1_{\mathcal{E}_k^{\zeta}}}_{R^{(4)}_{k,k+1}}. \label{eq:regdeco}
\end{align}
Following the decomposition developed in~\cite{chekan2024any}, the term
$R^{(1)}_{k,k+1}$ can be further decomposed as
\begin{align}
 &R^{(1)}_{k,k+1}\leq  R^{(11)}_{k,k+1}+R^{(12)}_{k,k+1}+R^{(13)}_{k,k+1}+R^{(14)}_{k,k+1}\label{eq:R_1stat}\\
 \nonumber &R^{(11)}_{k,k+1}=\sum_{t=t_k^a}^{t_{k+1}^a-1} \bigg(x_t^\top P(\mathcal{C}_{n_{i_k}(t_k^a)})x_t-x_{t+1}^\top P(\mathcal{C}_{n_{i_k}(t_k^a)})x_{t+1}\bigg)1_{\mathcal{E}_k^{\zeta}} \\
 \nonumber &R^{(12)}_{k,k+1}=\sum_{t=t_k^a}^{t_{k+1}^a-1} \big(\omega_{t+1}^\top P(\mathcal{C}_{n_{i_k}(t_k^a)})x_t \big)1_{\mathcal{E}_k^{\zeta}}\\
 \nonumber & R^{(13)}_{k,k+1}=\sum_{t=t_k^a}^{t_{k+1}^a-1} \bigg(\omega_{t+1}^\top P(\mathcal{C}_{n_{i_k}(t_k^a)})\omega_{t+1}-\sigma_{\omega}^2 \|P(\mathcal{C}_{n_{i_k}(t_k^a)}))\|_*\bigg)1_{\mathcal{E}_k^{\zeta}}\\
 & R^{(14)}_{k,k+1}=\sum_{t=t_k^a}^{t_{k+1}^a-1}2\|P(\mathcal{C}_{n_{i_k}(t_k^a)})\|\mu_{n_{i_k}(t_k^a)}\big(\bar{z}_t^\top {V}^{-1}_{n_{i_k}(t_k^a)}\bar{z}_t\big)1_{\mathcal{E}_k^{\zeta}}.
\label{eq:R14stat}
 \end{align}

We next consider the regret incurred during fallback switching epochs. Multiplying
the per-epoch regret by the indicator of the event
$\mathcal{E}_k^{\zeta}$ yields
 \begin{align}
     R^{\mathrm{fallback}}_{k,k+1}(\zeta)1_{\mathcal{E}_k^{\zeta}}=&\sum_{t=t_k^a}^{t_{k+1}^a} \big(x_t^\top Q^{i_k}x_t+u_t^\top R^{i_k}u_t-J_*^{f(t)}\big)1_{\mathcal{E}_k^{\zeta}} \label{eq:fallbackpa}\\
   \nonumber   =& \sum_{t=t_k^a}^{t_{k+1}^a} \bigg(x_t^\top \big(Q^{i_k}+K_0^{{i_k}^\top}R^{i_k} K_0^{i_k}\big)x_t-J_*^{f(t)}\bigg)1_{\mathcal{E}_k^{\zeta}}+ \underbrace{\sum_{t=t_k^a}^{t_{k+1}^a-1} 2x_t^\top K_0^{{i_k}^\top} R^{i_k}\eta_t 1_{\mathcal{E}_k^{\zeta}}}_{R^{(32)}_{k,k+1}}+\sum_{t=t_k^a}^{t_{k+1}^a-1} \eta_t^{i_k^\top} R^{i_k}\eta^{i_k}_t 1_{\mathcal{E}_k^{\zeta}}.
 \end{align}
Using the Lyapunov equation satisfied by the fallback controller
$K_0^{i_k}$, the nominal stage-cost term can be bounded as
\begin{align*}
    \sum_{t=t_k^a}^{t_{k+1}^a} \bigg(x_t^\top \big(Q^{i_k}+K_0^{{i_k}^\top}R^{i_k} K_0^{i_k}\big)x_t-J_*^{f(t)}\bigg)1_{\mathcal{E}_k^{\zeta}}\leq &\sum_{t=t_k^a}^{t_{k+1}^a} \bigg(x_t^\top \big(Q^{i_k}+K_0^{{i_k}^\top}R^{i_k} K_0^{i_k}\big)x_t-J_*^{i_k}\bigg)1_{\mathcal{E}_k^{\zeta}}\\
    \leq&\sum_{t=t_k^a}^{t_{k+1}^a} x_t^\top \bigg(P_{K_0^{i_k}}-(A_*^{i_k}+B_*^{i_k}K_0^{i_k})^\top P_{K_0^{i_k}}(A_*^{i_k}+B_*^{i_k}K_0^{i_k}) \bigg)x_t1_{\mathcal{E}_k^{\zeta}}\\
     =&  \underbrace{\sum_{t=t_k^a}^{t_{k+1}^a} \bigg(x_t^\top P_{K_0^{i_k}}x_t- x_{t+1}^\top P_{K_0^{i_k}}x_{t+1} \bigg)1_{\mathcal{E}_k^{\zeta}}}_{R_{k,k}^{(21)}}+\underbrace{\sum_{t=t_k^a}^{t_{k+1}^a-1} \big(\omega_{t+1}^\top P_{K_0^{i_k}}x_t \big)1_{\mathcal{E}_k^{\zeta}}}_{R_{k,k+1}^{(22)}}\\
     &+ \underbrace{\sum_{t=t_k^a}^{t_{k+1}^a-1} \omega_{t+1}^\top P_{K_0^{i_k}}\omega_{t+1}1_{\mathcal{E}_k^{\zeta}}}_{R_{k,k+1}^{(23)}}.
\end{align*}
Since the sets
\(\mathcal{I}_{\mathrm{cert}}\) and
\(\mathcal{I}_{\mathrm{fallback}}\)
are disjoint and satisfy
\(\mathcal{I}_{\mathrm{cert}}\cup
\mathcal{I}_{\mathrm{fallback}}
=\mathcal{I}\),
it follows that
\begin{align*}
\mathbb{E}\!\left[
\sum_{k\in\mathcal{I}_{\mathrm{cert}}}
R^{(4)}_{k,k+1}
+
\sum_{k\in\mathcal{I}_{\mathrm{fallback}}}
R^{(4)}_{k,k+1}
\right]
=
\mathbb{E}\!\left[
\sum_{k=0}^{n_s-1}
R^{(4)}_{k,k+1}
\right].
\end{align*}
Combining the preceding decompositions, we obtain the following upper bound:
\begin{align*}
    \mathbb{E}\left[\tilde{R}_{\zeta}(\mathcal{I})\right]
    \leq&
    \mathbb{E}\left[
    \sum_{k\in\mathcal{I}_{\mathrm{cert}}} \big(R^{(11)}_{k,k+1}+R^{(12)}_{k,k+1}+R^{(13)}_{k,k+1}+R^{(14)}_{k,k+1}+R^{(2)}_{k,k+1}+R^{(31)}_{k,k+1}\big) \right]+\\
    &\mathbb{E}\left[\sum_{k\in\mathcal{I}_{\mathrm{fallback}}} \big(R^{(21)}_{k,k+1}+R^{(22)}_{k,k+1}+R^{(23)}_{k,k+1}+R^{(32)}_{k,k+1}\big)\right]+\mathbb{E}\left[\sum_{k=0}^{n_s-1}R^{(4)}_{k,k+1}\right].  
\end{align*}

We now prove Lemma~\ref{lem:epseventprob}, which establishes a lower bound on the probability of the good event $\mathcal{E}_{n_s-1}^{\zeta}$. Subsequently, the upper bound on $\mathbb{E} [\tilde{R}_{\zeta}(\mathcal{I})]$ is converted into an upper bound on $\mathbb{E}[R_{\zeta}(\mathcal{I})]$ by accounting for the probability of the good event.

\begin{proof}[{\bf Proof of Lemma \ref{lem:epseventprob}}]
Define the following epoch-based good events
  \begin{align*}
      \tilde{\mathcal{A}}^i_k&=:\{\forall s = 0, \ldots, k, \quad \Theta_*^{i}\in \mathcal{C}_{n_{i}(t_s^a)}\}\\
      \mathcal{B}_k^i(\zeta)&=: \{\forall s = 0, \ldots, k, \quad \mu_{n_i(t_k^a)} \|V^{-1}_{n_i(t_k^a)}\|
    \leq
   \bar{D}(n_i(t_k^a)+\bar{c})^{-\frac{1}{2}+\zeta}\}\\
       \mathcal{H}_k&=: \{\forall s = 0, \ldots, k, \quad \|z_s\|^2\leq c_z^2\log \frac{t_k^a}{\delta}\}.
  \end{align*}
Then,
\[
\mathcal E_k^\zeta
=
\left(
\bigcap_{i\in\mathcal M}
\left(
\tilde{\mathcal A}_k^i
\cap
\mathcal B_k^i(\zeta)
\right)
\right)
\cap
\mathcal H_k.
\]

 With an initial estimate \(\Theta_0^i\) chosen according to~(\ref{eq:epsstst0}), the event \(\tilde{\mathcal{A}}_k^i\) holds uniformly over time with probability at least \(1-\delta\). This follows from the uniform-in-time self-normalized martingale concentration bound; see Theorem~2 of \cite{abbasi2011improved}.  Consequently, Since the epochs form a subset of the time indices, the same guarantee immediately holds over the sequence of epochs.

Next, on the event
\[
\mathcal A_t^i
\cap
\mathcal D_t^i
\cap
\mathcal G_t^i,
\]
Theorem~\ref{Stability_thm17} establishes that the generated controllers are
uniformly $(\bar\kappa_i,\bar\gamma_i)$-strongly stabilizing.
Furthermore, the same theorem shows that
\[
\mu_{n_i(t)}
\|V^{-1}_{n_i(t)}\|
\le
\bar D
(n_i(t)+\bar c)^{-\frac12+\zeta},
\qquad
\forall t\ge0,
\]
which immediately implies
\[
\mathcal{A}_t^i\cap \mathcal{D}_t^i\cap \mathcal{G}_t^i
\subseteq
\mathcal{B}_k^i(\zeta).
\]

Therefore, \(\mathcal{B}_k^i(\zeta)\) holds uniformly over time with probability at least \(1-3\delta\), and therefore also on the epochs with the same probability. Applying a union bound over all modes yields a probability of at least \(1-3|\mathcal{M}|\delta\).

Finally, the event \(\mathcal{H}_k\) requires that the closed-loop system remains stable throughout each epoch. As established previously, this property, together with the validity of the designed dwell time, holds on the event
\[
\bigcap_{i\in\mathcal{M}}
\left(
\mathcal{A}_t^i
\cap
\mathcal{D}_t^i
\cap
\mathcal{G}_t^i
\right).
\]
Applying a union bound over all modes, we conclude that the event \(\mathcal{E}_{n_s-1}^{\zeta}\) holds with probability at least \(1-3|\mathcal{M}|\delta\).
\end{proof}

We now proceed to bound each term individually. We first show that some of the terms vanish identically.

\begin{lemma}
On the event $\mathcal{E}^{\zeta}_k$, the following equalities hold:

\begin{align*} \mathbb{E}\big[\sum_{k\in\mathcal{I}_{\mathrm{cert}}}R_{k,k+1}^{(12)}\big]= \mathbb{E}\big[\sum_{k\in\mathcal{I}_{\mathrm{cert}}}R_{k,k+1}^{(13)}\big]= \mathbb{E}\big[\sum_{k\in\mathcal{I}_{\mathrm{cert}}}R_{k,k+1}^{(31)}\big]=\mathbb{E}\big[\sum_{k\in\mathcal{I}_{\mathrm{fallback}}}R_{k,k+1}^{(22)}\big]=\mathbb{E}\big[\sum_{k\in\mathcal{I}_{\mathrm{fallback}}}R_{k,k+1}^{(32)}\big]=0.
\end{align*}
\end{lemma}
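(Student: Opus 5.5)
Each of the five sums is a martingale-difference sum. The plan is to show that every summand has zero conditional expectation given the filtration at the preceding time step, and then take total expectations by the tower property. The indicator is the only subtlety, and I handle it first. The event $\mathcal{E}^{\zeta}_k$ is defined through quantities evaluated at the switching times $t_s^a$, $s\le k$: the confidence-set memberships $\Theta_*^{i}\in\mathcal{C}_{n_i(t_s^a)}$, the bounds on $\mu_{n_i(t_k^a)}\|V^{-1}_{n_i(t_k^a)}\|$, and the bounds on $\|z_s\|$. Under the commit-then-update structure of Algorithm~\ref{alg:OSL3}, the data used for these objects are collected before $t_k^a$. So I treat $1_{\mathcal{E}^{\zeta}_k}$ as $\mathcal{F}_{t_k^a}$-measurable, shrinking the event slightly to its $\mathcal{F}_{t_k^a}$-measurable part if needed; Lemma~\ref{lem:epseventprob} still applies to it. The same holds for the committed quantities $K(\mathcal{C}_{n_{i_k}(t_k^a)})$, $P(\mathcal{C}_{n_{i_k}(t_k^a)})$, $K_0^{i_k}$, $P_{K_0^{i_k}}$, and for the switching time $t_{k+1}^a$, because $\tau^a_{k,k+1}$ is computed at $t_k^a$ from data available then.

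With that in place, fix $k$ and $t\in[t_k^a,t_{k+1}^a-1]$. For $R^{(12)}$ and $R^{(22)}$, the summand is $\omega_{t+1}^\top P x_t 1_{\mathcal{E}_k^\zeta}$, where $P$, $x_t$ and the indicator are $\mathcal{F}_t$-measurable. Assumption~\ref{Assumption 1}(1.1) gives $\mathbb{E}[\omega_{t+1}\mid\mathcal{F}_t]=0$, so the conditional expectation vanishes. For $R^{(31)}$ and $R^{(32)}$, the summand is $2\bar u_t^\top R^{i_k}\eta_t 1_{\mathcal{E}_k^\zeta}$ (respectively $2x_t^\top K_0^{{i_k}^\top}R^{i_k}\eta_t 1_{\mathcal{E}_k^\zeta}$). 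Here $\eta_t\sim\mathcal{N}(0,\Gamma_t^{i_k})$ is drawn independently of $\mathcal{F}_t$, and $\Gamma_t^{i_k}$ depends only on $n_{i_k}(t)$; I enlarge the filtration so that $\eta_t$ is $\mathcal{F}_{t+1}$-measurable but independent of $\mathcal{F}_t$. Then $\mathbb{E}[\eta_t\mid\mathcal{F}_t]=0$, and the summand again has zero conditional mean. For $R^{(13)}$, I compute $\mathbb{E}[\omega_{t+1}^\top P\omega_{t+1}\mid\mathcal{F}_t]=\operatorname{tr}(P\,\mathbb{E}[\omega_{t+1}\omega_{t+1}^\top\mid\mathcal{F}_t])=\sigma_\omega^2\operatorname{tr}(P)$ using (1.2). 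Since $P=P(\mathcal{C}_{n_{i_k}(t_k^a)})\succeq0$, we have $\operatorname{tr}(P)=\|P\|_*$, so the centred summand has zero conditional mean.

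To pass from conditional to total expectations, I note that $k\in\mathcal{I}_{\mathrm{cert}}$ (or $\mathcal{I}_{\mathrm{fallback}}$) is decided by the $\mathcal{F}_{t_k^a}$-measurable test~(\ref{eq:theresholdpol}). I therefore rewrite each total as $\sum_{t}\xi_t$, where $\xi_t$ collects the indicator of the epoch type, the epoch membership $1\{t_k^a\le t<t_{k+1}^a\}$, and the summand. All these indicators are $\mathcal{F}_t$-measurable because the switching times are stopping times. Integrability follows from Gaussian moments together with the bounded-state condition $\|z_s\|^2\le c_z^2\log(t_k^a/\delta)$ contained in $\mathcal{E}^\zeta_k$ and the horizon bound $t\le \bar\tau_* n_s$ from Lemma~\ref{thm:minimum_average_dwell}. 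Then $\mathbb{E}[\sum_t\xi_t]=\sum_t\mathbb{E}[\mathbb{E}[\xi_t\mid\mathcal{F}_t]]=0$, which gives all five equalities.

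The main obstacle is the measurability of $1_{\mathcal{E}^\zeta_k}$ with respect to $\mathcal{F}_{t_k^a}$. As written, the event bounds $\|z_s\|$ only for $s\le k$, and the confidence sets are updated only at epoch ends, so it should be predictable. If some component (for example the uniform-in-time event from Theorem~\ref{Stability_thm17}) depends on future noise, I would replace it by the predictable event in which all conditions are checked only up to $t_k^a$. That event still has probability at least $1-3|\mathcal{M}|\delta$ and suffices for the later bounds.
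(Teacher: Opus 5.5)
Your argument is correct. It rests on the same mechanism as the paper's proof: each summand has zero conditional mean, the tower property finishes, and $\operatorname{tr}(P)=\|P\|_*$ handles $R^{(13)}$. The difference is how the conditioning is organized, and your version needs fewer assumptions.

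\begin{itemize}
\item \emph{Paper, $R^{(12)}$ and $R^{(13)}$.} It conditions on $\mathcal{F}_{t_k^a}$. For $R^{(12)}$ it factorizes $\mathbb{E}[\omega_{t+1}x_t^\top\mid\mathcal{F}_{t_k^a}]$ by appealing to independence of $\omega_{t+1}$ from $(x_t,\mathcal{F}_{t_k^a})$. That goes beyond the martingale-difference condition (1.1).
\item \emph{Paper, $R^{(31)}$.} It conditions on $\mathcal{F}_{t-1}$ and treats $x_t$ as $\mathcal{F}_{t-1}$-measurable, which is off by one under the convention of (1.1).
\item \emph{Paper, interchange of sum and expectation.} It moves $\sum_{k\in\mathcal{I}_{\mathrm{cert}}}\sum_t$ outside the expectation as though the certified set and the epoch endpoints were deterministic. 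For $R^{(12)}$ and $R^{(13)}$ it also writes the inner sum only up to $t_k^a+\tau^*_{k,k+1}-1$.
\item \emph{Your route.} One-step conditioning on $\mathcal{F}_t$ uses only (1.1)--(1.2). Rewriting each total as $\sum_t\xi_t$, with $\mathcal{F}_t$-measurable epoch-membership and certification indicators, legitimately handles the random index sets and epoch lengths. You also check the $\mathcal{F}_{t_k^a}$-measurability of $1_{\mathcal{E}^\zeta_k}$, which the paper only asserts.
\end{itemize}

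The cost of your route is modest bookkeeping: an enlarged filtration in which $\eta_t$ is $\mathcal{F}_{t+1}$-measurable and conditionally zero-mean given $\mathcal{F}_t$, and a finite horizon. The horizon follows from the epoch-length bound $\bar\tau_*$, which is valid on the nested good events.

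One point needs tidying. Your integrability sentence leans on Gaussian moments and on the state bound inside $\mathcal{E}^\zeta_k$.
\begin{itemize}
\item $\omega$ is only assumed to have conditional second moments, so Gaussian moments are not available for it.
\item If the event's bound on $\|z_s\|$ covered times inside epoch $k$, the event would no longer be $\mathcal{F}_{t_k^a}$-measurable. That contradicts your first step.
\end{itemize}
Keep the event predictable. Obtain $\mathbb{E}\bigl[\|x_t\|^2 1_{\mathcal{E}^\zeta_k}\bigr]<\infty$ instead from the linear recursion within the epoch: the committed gains are bounded on the nested events, and $\omega$ and $\eta$ have finite second moments. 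Cauchy--Schwarz then handles $\omega_{t+1}^\top P x_t$.
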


\begin{proof}
We show the result term by term.

\paragraph{(1) $\mathbb{E}\big[\sum_{k\in\mathcal{I}_{\mathrm{fallback}}}R_{k,k+1}^{(22)}\big]=\mathbb{E}\big[\sum_{k\in\mathcal{I}_{\mathrm{cert}}}R_{k,k+1}^{(12)}\big]=0$:}

We write
\begin{align*}
   \mathbb{E}\big[\sum_{k\in\mathcal{I}_{\mathrm{cert}}}R_{k,k+1}^{(12)}\big]=\sum_{k\in\mathcal{I}_{\mathrm{cert}}}\sum_{t=t_k^a}^{t_k^a+\tau_{k,k+1}^*-1}  \mathbb{E}\big[\big(\omega_{t+1}^\top P(\mathcal{C}_{n_{i_k}(t_k^a)})x_t \big)1_{\mathcal{E}^{\zeta}_k}\big].
\end{align*}
For a fixed $k$ and $t$, we apply iterated expectation with respect to $\mathcal{F}_{t_k^a}$:
\begin{align*}
\mathbb{E}\!\left[\big(\omega_{t+1}^\top P(\mathcal{C}_{n_{i_k}(t_k^a)}) x_t\big)1_{\mathcal{E}_k}\right]
&= \mathbb{E}\!\left[
\mathbb{E}\!\left[\big(\omega_{t+1}^\top P(\mathcal{C}_{n_{i_k}(t_k^a)}) x_t\big)1_{\mathcal{E}^{\zeta}_k}
\mid \mathcal{F}_{t_k^a}\right]\right].
\end{align*}

Since $P(\mathcal{C}_{n_{i_k}(t_k^a)})$ and $1_{\mathcal{E}^{\zeta}_k}$ are $\mathcal{F}_{t_k^a}$-measurable, we obtain

\begin{align*}
\mathbb{E}\!\left[
\big(\omega_{t+1}^{\top}P(\mathcal{C}_{n_{i_k}(t_k^a)})x_t\big)
1_{\mathcal{E}_k^{\zeta}}
\,\middle|\,
\mathcal{F}_{t_k^a}
\right]
&=
1_{\mathcal{E}_k^{\zeta}}
P(\mathcal{C}_{n_{i_k}(t_k^a)})\bullet
\mathbb{E}\!\left[
\omega_{t+1}x_t^{\top}
\,\middle|\,
\mathcal{F}_{t_k^a}
\right] \\
&=
1_{\mathcal{E}_k^{\zeta}}
P(\mathcal{C}_{n_{i_k}(t_k^a)})\bullet
\mathbb{E}\!\left[
\omega_{t+1}
\,\middle|\,
\mathcal{F}_{t_k^a}
\right]
\mathbb{E}\!\left[
x_t
\,\middle|\,
\mathcal{F}_{t_k^a}
\right]^{\!\top},
\end{align*}
where $t^a_k\leq t<t^a_{k+1}$.
Here the second equality follows
from the independence of $\omega_{t+1}$ and
$(x_t,\mathcal{F}_{t_k^a})$.

Together with the martingale difference property,
\[
\mathbb{E}\!\left[\omega_{t+1}\mid\mathcal{F}_{t_k^a}\right]=0,
\]
we conclude that
\[
\mathbb{E}\!\left[\big(\omega_{t+1}^\top P(\mathcal{C}_{n_{i_k}(t_k^a)}) x_t\big)1_{\mathcal{E}^{\zeta}_k}\right] = 0.
\]
Summing over $k$ and $t$ yields
\[
\mathbb{E}\big[\sum_{k\in\mathcal{I}_{\mathrm{cert}}}R_{k,k+1}^{(12)}\big]=0.
\]
Similarly, we can show
\begin{align*}    \mathbb{E}\big[\sum_{k\in\mathcal{I}_{\mathrm{fallback}}}R_{k,k+1}^{(22)}\big]=0.
\end{align*}

\paragraph{(2) $\mathbb{E}\big[\sum_{k\in\mathcal{I}_{\mathrm{cert}}}R_{k,k+1}^{(13)}\big]=0$ :}

For the second term we have
\begin{align*}
   \mathbb{E}\big[\sum_{k\in\mathcal{I}_{\mathrm{cert}}}R_{k,k+1}^{(13)}\big]=\sum_{k\in\mathcal{I}_{\mathrm{cert}}}\sum_{t=t_k^a}^{t_k^a+\tau_{k,k+1}^*-1}  \mathbb{E}\big[\big(\omega_{t+1}^\top P(\mathcal{C}_{n_{i_k}(t_k^a)})\omega_{t+1}-\sigma_{\omega}^2 \|P(\mathcal{C}_{n_{i_k}(t_k^a)})\|_*\big)1_{\mathcal{E}^{\zeta}_k}\big]
\end{align*}
We can further write
\begin{align*}
    \mathbb{E}\big[\big(\omega_{t+1}^\top P(\mathcal{C}_{n_{i_k}(t_k^a)})\omega_{t+1}-\sigma_{\omega}^2 \|P(\mathcal{C}_{n_{i_k}(t_k^a)})\|_*\big)1_{\mathcal{E}^{\zeta}_k}\big]=\mathbb{E}\big[ \mathbb{E}\big[\big(\omega_{t+1}^\top P(\mathcal{C}_{n_{i_k}(t_k^a)})\omega_{t+1}-\sigma_{\omega}^2 \|P(\mathcal{C}_{n_{i_k}(t_k^a)})\|_*\big)1_{\mathcal{E}^{\zeta}_k}\mid \mathcal{F}_{t_k^a}\big]\big].
\end{align*}
With similar justification as above,
 for all $t \in [t_k^a,\; t_k^a+\tau_{k,k+1}^*)$, we have
\begin{align*}
    \mathbb{E}\big[\big(\omega_{t+1}^\top P(\mathcal{C}_{n_{i_k}(t_k^a)})\omega_{t+1}-\sigma_{\omega}^2 \|P(\mathcal{C}_{n_{i_k}(t_k^a)})\|_*\big)1_{\mathcal{E}^{\zeta}_k}\mid \mathcal{F}_{t_k^a}\big]&=1_{\mathcal{E}^{\zeta}_k} P(\mathcal{C}_{n_{i_k}(t_k^a)})\bullet \mathbb{E}\big[\omega_{t+1} \omega_{t+1}^\top\mid \mathcal{F}_{t_k^a}\big]-\sigma_{\omega}^2 \|P(\mathcal{C}_{n_{i_k}(t_k^a)})\|_*\big)\\
    &=\sigma_{\omega}^2\operatorname{tr}(P(\mathcal{C}_{n_{i_k}(t_k^a)}))-\sigma_{\omega}^2 \|P(\mathcal{C}_{n_{i_k}(t_k^a)})\|_*\big)=0
\end{align*}
where the last equality holds as $\mathbb{E}\big[\omega_{t+1} \omega_{t+1}^\top\mid \mathcal{F}_{t_k^a}\big]=\sigma_{\omega}^2 I$.
Summing over $k$ and $t$ yields
\[
 \mathbb{E}\big[\sum_{k\in\mathcal{I}_{\mathrm{cert}}}R_{k,k+1}^{(13)}\big] = 0.
\]

\paragraph{(3) 
$\mathbb{E}\Big[\sum_{k\in\mathcal{I}_{\mathrm{cert}}} R_{k,k+1}^{(31)}\Big]
=
\mathbb{E}\Big[\sum_{k\in\mathcal{I}_{\mathrm{fallback}}} R_{k,k+1}^{(32)}\Big]
=0$:}

Consider the term
\begin{align*}
    \mathbb{E}\Big[
    \sum_{k\in\mathcal{I}_{\mathrm{cert}}} 
    R_{k,k+1}^{(31)}
    \Big]
    =
    \mathbb{E}\Big[
    \sum_{k\in\mathcal{I}_{\mathrm{cert}}}
    \sum_{t=t_k^a}^{t_{k+1}^a-1}
    2\,\bar{u}_t^\top R^{i_k}\eta_t
    1_{\mathcal E_k^\zeta}
    \Big],
\end{align*}
where $\bar{u}_t$ is defined in (\ref{eq:baru}). For each $k$ and
$t\in[t_k^a,t_{k+1}^a)$, the policy
$K(\mathcal{C}_{n_{i_k}(t_k^a)})$ is
$\mathcal F_{t_k^a}$-measurable. Since
$\mathcal F_{t_k^a}\subseteq\mathcal F_{t-1}$ and
$x_t$ is $\mathcal F_{t-1}$-measurable, it follows that
$\bar u_t$ is $\mathcal F_{t-1}$-measurable. Therefore,

\begin{align*}
\mathbb{E}\!\left[
\bar u_t^\top R^{i_k}\eta_t
1_{\mathcal E_k^\zeta}
\right]&=
\mathbb{E}\!\left[
\mathbb{E}\!\left[
\bar u_t^\top R^{i_k}\eta_t
1_{\mathcal E_k^\zeta}
\,\middle|\,
\mathcal F_{t-1}
\right]
\right]\\
&=
\mathbb{E}\!\left[
\bar u_t^\top R^{i_k}
1_{\mathcal E_k^\zeta}
\mathbb{E}\!\left[
\eta_t
\,\middle|\,
\mathcal F_{t-1}
\right]
\right]
=0,
\end{align*}
where we used that $\eta_t$ is zero-mean and independent of
$\mathcal F_{t-1}$. Summing over $t$ and $k$ yields
\[
\mathbb{E}\Big[
\sum_{k\in\mathcal{I}_{\mathrm{cert}}}
R_{k,k+1}^{(31)}
\Big]=0.
\]
Similarly, we can establish
\[
\mathbb{E}\Big[
\sum_{k\in\mathcal{I}_{\mathrm{fallback}}}
R_{k,k+1}^{(32)}
\Big]=0.
\]
\end{proof}

\begin{lemma} \label{lem:R^{11}}
  The following statement holds:
    \begin{align*}
       \mathbb{E}\left[  \sum_{k\in\mathcal{I}_{\mathrm{cert}}}R_{k,k+1}^{(11)}+\sum_{k\in\mathcal{I}_{\mathrm{fallback}}}R_{k,k+1}^{(21)}\right]\lesssim \mathcal{O}\big(n_m\log \frac{n_s}{\delta}+ |\mathcal{M}|\big(\log \frac{n_s}{\delta}\big)^{\frac{3-2\zeta}{1-2\zeta}}+|\mathcal{M}|^{1-\gamma}n_s^{\gamma} \, \log^2 \frac{n_s}{\delta}\big)
    \end{align*}
\end{lemma}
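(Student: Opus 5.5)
The terms $R^{(11)}_{k,k+1}$ and $R^{(21)}_{k,k+1}$ are telescoping sums of the Lyapunov function along one epoch, so I would first collapse each to its boundary terms. Within epoch $k$ the matrix $P(\mathcal{C}_{n_{i_k}(t_k^a)})$ (or $P_{K_0^{i_k}}$ in a fallback epoch) is fixed, hence
\begin{align*}
R^{(11)}_{k,k+1}=\big(x_{t_k^a}^\top \tilde P_{i_k}x_{t_k^a}-x_{t_{k+1}^a}^\top \tilde P_{i_k}x_{t_{k+1}^a}\big)1_{\mathcal{E}_k^{\zeta}},
\end{align*}
where $\tilde P_{i_k}$ is the matrix actually used in that epoch. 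The same holds for $R^{(21)}$ up to one extra boundary step, which is bounded by $\|P_{K_0^{i_k}}\|c_x^2\log\frac{n_s}{\delta}$ on the good event. Summing over all $k$, using $\mathcal{E}_{k}^\zeta\subseteq\mathcal{E}_{k-1}^\zeta$ and nonnegativity to replace indicators by the earlier one, I get a boundary term at $k=0$ of order $\|x_0\|^2$, plus a sum of cross-epoch mismatch terms
\begin{align*}
\sum_{k} x_{t_{k+1}^a}^\top\big(\tilde P_{i_{k+1}}-\tilde P_{i_k}\big)x_{t_{k+1}^a}\,1_{\mathcal{E}_k^\zeta}.
\end{align*}
I would bound the expectation of each summand by $\|(\tilde P_{i_{k+1}}-\tilde P_{i_k})_+\|\,\mathbb{E}\|x_{t_{k+1}^a}\|^2$, with $\mathbb{E}\|x_{t_{k+1}^a}\|^2\lesssim \|x_0\|^2+\tilde\beta\sigma_\omega^2/(1-\bar\alpha)$ by Lemma~\ref{thm:res_sequential_stability_resp}, or by $c_x^2\log\frac{n_s}{\delta}$ on $\mathcal{E}^\zeta$ via Lemma~\ref{lem:defCbarz}.

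Next I would classify the switches. If either epoch $k$ or $k+1$ is a fallback epoch, I bound the mismatch crudely by $\max\{\|P_{K_0}\|,\nu/\sigma_\omega^2\}\,c_x^2\log\frac{n_s}{\delta}$. By Lemma~\ref{lem:fallback_steps} there are at most $|\mathcal{M}|(\log\frac{n_s}{\delta})^{2/(1-2\zeta)}$ such epochs. This yields the $|\mathcal{M}|(\log\frac{n_s}{\delta})^{\frac{3-2\zeta}{1-2\zeta}}$ term, the exponent being $\frac{2}{1-2\zeta}+1$. For two consecutive certified epochs I insert the true DARE solutions, using Lemma~\ref{lem:tightnessSol}:
\begin{align*}
\tilde P_{i_{k+1}}-\tilde P_{i_k}\preceq P(\Theta_*^{i_{k+1}})-P(\Theta_*^{i_k})+\chi^{i_k}_{t_k^a}I .
\end{align*}
For a benign switch, $P(\Theta_*^{i_{k+1}})-P(\Theta_*^{i_k})\preceq 0$ by Definition~\ref{def:benign_malignant}, so only $\chi^{i_k}_{t_k^a}$ remains. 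For a malignant switch I bound the difference by $\nu/\sigma_\omega^2$, which contributes $n_m\log\frac{n_s}{\delta}$.

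Finally I would sum the $\chi$ terms. By Lemma~\ref{lem:useflboun}, and because $\|P\|$ and $\|(I;K)\|$ are bounded by constants depending on $\bar\kappa_*$, on $\mathcal{E}^\zeta$ we have
\begin{align*}
\chi^{i}_{t}\lesssim (n_i(t)+\bar c_i)^{-\frac12+\zeta}\log\frac{n_s}{\delta}.
\end{align*}
Each epoch in mode $i$ increases $n_i$ by at least $1$ and at most $\bar\tau_*$ (Lemma~\ref{thm:minimum_average_dwell}). Summing over the $m_i$ visits to mode $i$ gives $\sum_{j\le m_i} j^{-\frac12+\zeta}\lesssim m_i^{\frac12+\zeta}=m_i^{\gamma}$. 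By concavity (H\"older), with $\sum_i m_i\le n_s$,
\begin{align*}
\sum_{i} m_i^\gamma\le |\mathcal{M}|^{1-\gamma}n_s^\gamma .
\end{align*}
Multiplying by the extra $\log\frac{n_s}{\delta}$ from the state bound gives the $|\mathcal{M}|^{1-\gamma}n_s^\gamma\log^2\frac{n_s}{\delta}$ term.

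The main obstacle I expect is the measurability and indicator bookkeeping. The quantities $x_{t_{k+1}^a}$ and $\tilde P_{i_{k+1}}$ are paired across epochs that carry different indicators, so I would rely on the nested structure of the events together with the $\mathcal{F}_{t_k^a}$-measurability of $\tilde P_{i_{k+1}}$, which is computed at $t_k^a$ as in Algorithm~\ref{alg:OSL3}. A related subtlety is that $\tilde P_{i_{k+1}}$ appears both in the dwell-time computation at $t_k^a$ and as the Lyapunov matrix of epoch $k+1$. The latter is recomputed at $t_{k+1}^a$ with more data, so the $\chi$-correction must be applied with the confidence set at $t_{k+1}^a$. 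I would check that the monotonicity in Lemma~\ref{lem:tightnessSol} permits this, and otherwise absorb the difference into the same $\chi$-sum.
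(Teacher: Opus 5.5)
Your proposal is correct and follows essentially the paper's route: telescope each epoch, insert $P(\Theta_*^{i_k})$ via Lemma~\ref{lem:tightnessSol} to split benign from malignant switches, count fallback-involved pairs with Lemma~\ref{lem:fallback_steps}, and sum the $\chi$ terms mode-wise with H\"older; your cruder treatment of the fallback pairs and your handling of the extra boundary step in $R^{(21)}$ are harmless refinements. The indicator worry goes away if you attach the later indicator $1_{\mathcal{E}_{k+1}^{\zeta}}$ to each mismatch, as the paper does, because that event covers both confidence sets involved, and for $\tilde P_{i_{k+1}}$ you only need optimism $\tilde P_{i_{k+1}}\preceq P(\Theta_*^{i_{k+1}})$, not a $\chi$-correction, so the dwell-time copy computed at $t_k^a$ never enters this lemma.
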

where $\gamma=\frac{1}{2}+\zeta$ with $\zeta\in (0,\; \frac{1}{2})$.
\begin{proof}
By definition, within each epoch the policy is fixed. Therefore,
\begin{align*}
    R_{k,k+1}^{(11)}=\sum_{t=t_k^a}^{t_{k+1}^a-1} \bigg(x_t^\top \tilde{P}_{i_k}x_t-x_{t+1}^\top \tilde{P}_{i_k}x_{t+1}\bigg)1_{\mathcal{E}_k^{\zeta}}=\bigg(x_{t_k^a}^\top \tilde{P}_{i_k}-x_{t_{k+1}^a}^\top \tilde{P}_{i_k} x_{t_{k+1}^a}\bigg)1_{\mathcal{E}_k^{\zeta}}.
\end{align*}
where
\begin{align*}
    \tilde{P}_{i_k}=\begin{cases}
P(\mathcal{C}_{n_{i_k}(t_k^a)}) &\quad \quad \quad \quad  k\in  \mathcal{I}_{\mathrm{cert}} \\
P_{K_0^{i_k}} &\quad \quad \quad \quad k\in  \mathcal{I}_{\mathrm{fallback}}
\end{cases}.
\end{align*}

Since $\tilde P_{i_k}$ is uniquely defined for each epoch, we obtain
\begin{align*}
\sum_{k\in\mathcal{I}_{\mathrm{cert}}}R_{k,k+1}^{(11)}+\sum_{k\in\mathcal{I}_{\mathrm{fallback}}}R_{k,k+1}^{(21)}\leq& x_{t_0^a}^\top  \tilde{P}_{i_0} x_{t_0^a}1_{\mathcal{E}_0^{\zeta}}+\sum_{k=1}^{n_s-1}x_{t^a_k}^\top\bigg( \tilde{P}_{i_k}- \tilde{P}_{i_{k-1}}\bigg) x_{t^a_k} 1_{\mathcal{E}_k^{\zeta}}\\
    = & x_{t_0^a}^\top  \tilde{P}_{i_0} x_{t_0^a}1_{\mathcal{E}_0^{\zeta}}+\sum_{k=1}^{n_s-1}x_{t^a_k}^\top\bigg( \tilde{P}_{i_k}-P(\Theta_*^{i_k})+P(\Theta_*^{i_{k-1}})- \tilde{P}_{i_{k-1}}\bigg) x_{t^a_k} 1_{\mathcal{E}_k^{\zeta}}\\
    &+\sum_{k=1}^{n_s-1}x_{t^a_k}^\top\bigg(P(\Theta_*^{i_k})-P(\Theta_*^{i_{k-1}})\bigg) x_{t^a_k} 1_{\mathcal{E}_k^{\zeta}}\\
    \leq & x_{t_0^a}^\top  \tilde{P}_{i_0}x_{t_0^a}1_{\mathcal{E}_0^{\zeta}}+\underbrace{\sum_{k=1}^{n_s-1}x_{t^a_k}^\top\bigg( \tilde{P}_{i_k}-P(\Theta_*^{i_k})+P(\Theta_*^{i_{k-1}})- \tilde{P}_{i_{k-1}}\bigg) x_{t^a_k} 1_{\mathcal{E}_k^{\zeta}}}_{\textit{Term1}}\\
    &+n_m\max_{1\leq k\leq n_s-1}x_{t^a_k}^\top P(\Theta_*^{i_k})x_{t^a_k} 1_{\mathcal{E}_k^{\zeta}}.
\end{align*}
The last term contributes only for malignant switches. Hence, assuming that there are at most
$n_m$ malignant switches in $\mathcal I$, the above bound follows.

We next upper bound \textit{Term 1} for the four possible combinations of controller types across two consecutive epochs: (1) \((L,L)\), (2) \((L,F)\), (3) \((F,L)\), and (4) \((F,F)\), where \(L\) and \(S\) denote the learning-based and fallback controllers, respectively.

\paragraph{(1) Case (L,L):}

In this case, since $P(\mathcal{C}_{n_{i_k}(t_k^a)})-P(\Theta_*^{i_k})\preceq 0$ holds by Lemma~\ref{lem:tightnessSol}, \textit{Term 1} can be upper bounded as

\begin{align*}
   \textit{Term 1}\leq \sum_{k=1}^{n_s-1}x_{t^a_k}^\top\bigg(P(\Theta_*^{i_{k-1}})-P(\mathcal{C}_{n_{i_{k-1}}(t_{k-1}^a)})\bigg) x_{t^a_k} 1_{\mathcal{E}_k^{\zeta}}&=\sum_{k=1}^{n_s-1}x_{t^a_k}^\top\bigg(P(\Theta_*^{i_{k-1}})-P(\mathcal{C}_{n_{i_{k-1}}(t_{k-1}^a)})\bigg) x_{t^a_k} 1_{\mathcal{E}_k^{\zeta}}\\
    &\leq X_*^2 \sum_{k=1}^{n_s-1} \chi_{t_k^a}^{i_k}\, 1_{\mathcal{E}_k^{\zeta}}.
\end{align*}

\paragraph{(2) Case (L,F):}

For this case we can write
\begin{align}
   \nonumber \textit{Term 1} =& \sum_{k=1}^{n_s-1}x_{t^a_k}^\top\bigg(P(\Theta_*^{i_{k-1}})-P(\mathcal{C}_{n_{i_{k-1}}(t_{k-1}^a)})\bigg) x_{t^a_k} 1_{\mathcal{E}_k^{\zeta}}+\sum_{k=1}^{n_s-1}x_{t^a_k}^\top\bigg(P_{K_0^{i_k}}-P(\Theta_*^{i_k})\bigg) x_{t^a_k} 1_{\mathcal{E}_k^{\zeta}}\\
    \leq& X_*^2 \sum_{k=1}^{n_s-1} \chi_{t_k^a}^{i_k}\, 1_{\mathcal{E}_k^{\zeta}}+ \frac{\nu_*}{\sigma^2_{\omega}}  X_*^2|\mathcal{M}| \log \big(\log \frac{2d_x \bar{\tau}_*n_s}{\delta}\big)^{\frac{2}{1-2\zeta}} \label{eq:argmnt_use}
\end{align}
where the last inequality follows from Lemma~\ref{lem:tightnessSol} and the bound on the total number of fallback steps established in Lemma~\ref{lem:fallback_steps}.

\paragraph{(3) Case (F,L):}

For this case we can write

\begin{align*}
    \textit{Term 1} =& \sum_{k=1}^{n_s-1}x_{t^a_k}^\top\bigg(P(\mathcal{C}_{n_{i_{k}}(t_{k-1}^a)})-P(\Theta_*^{i_{k}})\bigg) x_{t^a_k} 1_{\mathcal{E}_k^{\zeta}}+\sum_{k=1}^{n_s-1}x_{t^a_k}^\top\bigg(P(\Theta_*^{i_{k-1}})-P_{K_0^{i_{k-1}}}\bigg) x_{t^a_k} 1_{\mathcal{E}_k^{\zeta}} \leq 0
\end{align*}
as $P(\mathcal{C}_{n_{i_{k}}(t_{k-1}^a)})-P(\Theta_*^{i_{k}})\preceq 0$ and $P(\Theta_*^{i_{k-1}})\preceq P_{K_0^{i_{k-1}}}$ by definition.

\paragraph{(4) Case (F,F):}

For this case we can write

\begin{align*}
    \textit{Term 1} =& \sum_{k=1}^{n_s-1}x_{t^a_k}^\top\bigg(P_{K_0^{i_k}}-P(\Theta_*^{i_{k}})\bigg) x_{t^a_k} 1_{\mathcal{E}_k^{\zeta}}+\sum_{k=1}^{n_s-1}x_{t^a_k}^\top\bigg(P(\Theta_*^{i_{k-1}})-P_{K_0^{i_{k-1}}}\bigg) x_{t^a_k} 1_{\mathcal{E}_k^{\zeta}} \\
    \leq &\sum_{k=1}^{n_s-1}x_{t^a_k}^\top\bigg(P_{K_0^{i_k}}-P(\Theta_*^{i_{k}})\bigg) x_{t^a_k} 1_{\mathcal{E}_k^{\zeta}}\leq  \frac{\nu_*}{\sigma^2_{\omega}}|\mathcal{I}| X_*^2 \log \big(\log \frac{2d_x \bar{\tau}_*n_s}{\delta}\big)^{\frac{2}{1-2\zeta}}
\end{align*}

Combining the bounds for all four scenarios yields

\begin{align*}
    \textit{Term 1} \leq 2\frac{\nu_*}{\sigma^2_{\omega}}|\mathcal{M}| X_*^2 \log \big(\log \frac{2d_x \bar{\tau}_*n_s}{\delta}\big)^{\frac{2}{1-2\zeta}}+ 2 X_*^2 \sum_{k=1}^{n_s-1} \chi_{t_k^a}^{i_k}\, 1_{\mathcal{E}_k^{\zeta}}.
\end{align*}

Next, to upper-bound the second term, we collect the contributions mode-wise. Let $\bar T_i$ denote the set of time indices at which the system starts operating in mode $i$.
\begin{align*}
    \sum_{k=1}^{n_s-1} \chi_{t_k^a}^{i_k}\, 1_{\mathcal{E}_k^{\zeta}}\leq  \sum_{i\in \mathcal{M}} \sum_{\bar{t}_i\in\bar{T}_i}\chi^i_{\bar{t}_i}.
\end{align*}

On the event $\mathcal{E}_k^\zeta$, Lemma~\ref{lem:useflboun} implies that
\begin{align*}
   \chi^{i_k}_{t_k^a}&\leq  \underbrace{\frac{16\bar{D}_{i_k}\nu_{i_k}^3(1+\kappa_{i_k}^2)}{\alpha_0^{{i_k}^2}\sigma_{\omega}^6}}_{=:\beta_{i_k}}(n_{i_k}(t^a_{i})+\bar{c}_{i_k})^{-\frac{1}{2}+\zeta}\log \frac{t_k^a}{\delta}.
\end{align*}

Therefore,
\begin{align}
 \sum_{i\in \mathcal{M}} \sum_{\bar{t}_i\in\bar{T}_i}\chi^i_{\bar{t}_i}\leq   \sum_{i\in \mathcal{M}}\beta_i\sum_{\bar{t}_i\in\bar{T}_i} (n_i(\bar{t}_i)+\bar{c})^{-\frac{1}{2}+\zeta}\leq \beta_* \log \big(\frac{t_{n_s-1}^a}{\delta}\big)\,\sum_{i\in \mathcal{M}} (n_i(t_{n_s-1}^a)+\bar{c}_i)^{\frac{1}{2}+\zeta} \label{eq:komakstg}
\end{align}
where $\beta_*:=\max_i \beta_i$ and the last inequality follows since, for every mode $i$, the last time at which the system switches to mode $i$ is no later than $t_{n_s-1}^a$.

Applying Hölder's inequality yields
\begin{align}
    \sum_{i\in \mathcal{M}}(n_i(t_{n_s-1}^a)+\bar{c}_i)^{\frac{1}{2}+\zeta}\leq  \mathcal{M}^{\frac{1}{2}-\zeta}\big(\sum_{i\in \mathcal{M}} (n_i(t_{n_s-1}^a)+\bar{c}_i)\,\big)^{\frac{1} {2}+\zeta}\label{eq:helposf}
\end{align}

and since
\begin{align}
    \big(\sum_{i\in \mathcal{M}} (n_i(t_{n_s-1}^a)+\bar{c}_i)\,\big)^{\frac{1}{2}+\zeta}\leq (t_{n_s-1}^a+|\mathcal{M}|\bar{c}_*\big)^{\frac{1}{2}+\zeta} \label{eq:komak0191}
\end{align}
where $c_*=\max_{i} c_i$ we obtain
\begin{align}
   \sum_{i\in \mathcal{M}} \sum_{\bar{t}_i\in\bar{T}_i}\chi^i_{\bar{t}_i}\leq \beta_* \sum_{i\in \mathcal{M}} (n_i(t_{n_s-1}^a)+\bar{c}_i)^{\frac{1}{2}+\zeta}\leq \beta_* \log \big(\frac{t_{n_s-1}^a}{\delta}\big) \mathcal{M}^{\frac{1}{2}-\zeta}\big(t_{n_s-1}^a+|\mathcal{M}|\bar{c}_*)^{\frac{1}{2}+\zeta}.\label{eq:chibar}
\end{align}

Hence,
\begin{align*}   \sum_{k\in\mathcal{I}_{\mathrm{cert}}}R_{k,k+1}^{(11)}+\sum_{k\in\mathcal{I}_{\mathrm{fallback}}}R_{k,k+1}^{(21)}\leq& (1+n_m) \frac{\nu_*}{\sigma_{\omega}^2} X_*^22\frac{\nu_*}{\sigma^2_{\omega}}|\mathcal{M}| X_*^2 \log \big(\log \frac{2d_x \bar{\tau}_*n_s}{\delta}\big)^{\frac{2}{1-2\zeta}}\\
&+2X_*^2 \beta_* \log \big(\frac{t_{n_s-1}^a}{\delta}\big) \mathcal{M}^{\frac{1}{2}-\zeta}\big(t_{n_s-1}^a+|\mathcal{M}|\bar{c}_*)^{\frac{1}{2}+\zeta}.
\end{align*}
Finally, using $t_{n_s-1}^a\leq n_s \bar{\tau}_*$ where $\bar{\tau}_*$ is the upper bound on the epoch length from Lemma~\ref{thm:minimum_average_dwell}, we conclude that
\begin{align*}  \mathbb{E}\big[\sum_{k\in\mathcal{I}_{\mathrm{cert}}}R_{k,k+1}^{(11)}+\sum_{k\in\mathcal{I}_{\mathrm{fallback}}}R_{k,k+1}^{(21)}\big]\lesssim \mathcal{O}\big(n_m\log \frac{n_s}{\delta}+ |\mathcal{M}|\big(\log \frac{n_s}{\delta}\big)^{\frac{3-2\zeta}{1-2\zeta}}+|\mathcal{M}|^{1-\gamma}n_s^{\gamma} \, \log^2 \frac{n_s}{\delta}\big)
\end{align*}
where $\gamma=\frac{1}{2}+\zeta$.
\end{proof}

\begin{lemma}
    The following statement holds:
    \begin{align*}
           \mathbb{E}\Big[\sum_{k\in\mathcal{I}_{\mathrm{cert}}}R^{(14)}_{k,k+1}\big]\lesssim\mathcal{O}\big(|\mathcal{M}|^{1-\gamma}n_s^{\gamma}(\log \frac{n_s}{\delta})^2\big). 
    \end{align*}
\end{lemma}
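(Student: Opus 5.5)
We bound $R^{(14)}_{k,k+1}$ epoch by epoch on the good event $\mathcal{E}^{\zeta}_k$, then collect the contributions mode-wise, following the same route used for the $\chi$-sum in the proof of Lemma~\ref{lem:R^{11}}.

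\textbf{Per-epoch bound.} For a certified epoch, recall that $\bar z_t=(x_t^\top,\bar u_t^\top)^\top$ with $\bar u_t=K(\mathcal{C}_{n_{i_k}(t_k^a)})x_t$. This gives
\[
\bar z_t^\top V^{-1}_{n_{i_k}(t_k^a)}\bar z_t\le \|V^{-1}_{n_{i_k}(t_k^a)}\|(1+\bar\kappa_{i_k}^2)\|x_t\|^2 .
\]
On $\mathcal{E}^\zeta_k$ we use three facts:
\begin{itemize}
\item $\|x_t\|^2\le X_*^2\lesssim c_x^2\log(t/\delta)$, by Lemma~\ref{lem:defCbarz};
\item $\|P(\mathcal{C}_{n_{i_k}(t_k^a)})\|\le \nu_{i_k}/\sigma_\omega^2$, by the bound used in Theorem~\ref{Stability_thm17};
\item $\mu_{n_i(t_k^a)}\|V^{-1}_{n_i(t_k^a)}\|\le \bar D_i(n_i(t_k^a)+\bar c_i)^{-\frac12+\zeta}\log\frac{t_k^a}{\delta}$, by Lemma~\ref{lem:useflboun} (encoded in the event).
\end{itemize}
Combining them yields
\[
R^{(14)}_{k,k+1}\lesssim \tau^a_{k,k+1}\,(n_{i_k}(t_k^a)+\bar c_{i_k})^{-\frac12+\zeta}\log^2\frac{t_k^a}{\delta}.
\]

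\textbf{Summing within a mode.} Since $\tau^a_{k,k+1}\le\bar\tau_*$ by Lemma~\ref{thm:minimum_average_dwell}, we now sum over certified epochs. Grouping by mode $i$, the epochs in mode $i$ start at counts $n_i(\bar t_i)$ that increase by at least one per visit. Moreover, each epoch has length $\tau^a\le\bar\tau_*$, so the weighted sum is comparable to an integral:
\[
\sum_{\bar t_i\in\bar T_i}\tau^a\,(n_i(\bar t_i)+\bar c_i)^{-\frac12+\zeta}\lesssim \bar\tau_*\sum_{m\le n_i(T)}(m+\bar c_i)^{-\frac12+\zeta}\lesssim (n_i(T)+\bar c_i)^{\frac12+\zeta}.
\]
Here we use that $(n+\bar c)^{-\frac12+\zeta}$ is decreasing and that, within an epoch, the true count grows from $n_i(\bar t_i)$ to $n_i(\bar t_i)+\tau^a$.

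\textbf{Summing over modes.} Next we apply H\"older's inequality exactly as in (\ref{eq:helposf})--(\ref{eq:komak0191}):
\[
\sum_i (n_i+\bar c_i)^{\gamma}\le |\mathcal{M}|^{1-\gamma}\Big(t^a_{n_s}+|\mathcal{M}|\bar c_*\Big)^{\gamma},
\]
with $\gamma=\frac12+\zeta$. Using $t^a_{n_s}\le n_s\bar\tau_*$ and $\log(t/\delta)\lesssim\log(n_s/\delta)$ then gives $\mathcal{O}(|\mathcal{M}|^{1-\gamma}n_s^\gamma\log^2\frac{n_s}{\delta})$. Taking expectations is immediate, since the bound is deterministic on the event and the indicator kills the complement.

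\textbf{Main obstacle.} The delicate step is the within-mode sum. The exponent is evaluated at the epoch start $n_{i_k}(t_k^a)$ rather than per time step, so each term carries the factor $\tau^a$. We must therefore justify that $\tau^a\,(n+\bar c)^{-\frac12+\zeta}$ is dominated by $\sum_{j=0}^{\tau^a-1}(n+j+\bar c)^{-\frac12+\zeta}$ up to a constant. We would handle this with the bound $(n+\tau^a+\bar c)/(n+\bar c)\le 1+\bar\tau_*$, which gives a constant factor $(1+\bar\tau_*)^{\frac12-\zeta}$; the resulting constant is then absorbed.
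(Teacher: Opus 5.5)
Your proposal is correct and follows essentially the same route as the paper. You bound each certified epoch on $\mathcal{E}^{\zeta}_k$ using $\|P\|\le \nu/\sigma_\omega^2$, $\|\bar z_t\|^2\lesssim X_*^2$ and Lemma~\ref{lem:useflboun}, pull out $\tau^a_{k,k+1}\le\bar\tau_*$, regroup mode-wise as in (\ref{eq:komakstg}), and finish with H\"older and $t^a_{n_s-1}\le n_s\bar\tau_*$. The ``main obstacle'' you flag is not actually needed: once $\bar\tau_*$ is pulled out, the epoch-start counts $n_i(\bar t_i)$ within a mode are distinct integers, which already gives your first within-mode display (and is what the paper implicitly uses), while your ratio argument is a valid alternative that only improves the constant from $\bar\tau_*$ to $(1+\bar\tau_*)^{\frac12-\zeta}$.
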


\begin{proof}
We begin by writing
\begin{align*}
\sum_{k\in\mathcal{I}_{\mathrm{cert}}}R^{(14)}_{k,k+1}&=\sum_{k\in\mathcal{I}_{\mathrm{cert}}}\sum_{t=t_k^a}^{t_{k+1}^a-1}2\|P(\mathcal{C}_{n_{i_k}(t_k^a)})\|\mu_{n_{i_k}(t_k^a)}\big(\bar{z}_t^\top {V}^{-1}_{n_{i_k}(t_k^a)}\bar{z}_t\big)1_{\mathcal{E}^{\zeta}_k}\\
  &\leq 4\kappa_*^2 \frac{\nu_*}{\sigma_{\omega}^2}X_*^2\sum_{k\in\mathcal{I}_{\mathrm{cert}}} \tau_{k,k+1}^a \mu_{n_{i_k}(t_k^a)}\| {V}^{-1}_{n_{i_k}(t_k^a)}\|1_{\mathcal{E}^{\zeta}_k}\\
  &\leq 4\kappa_*^2 \frac{\nu_*}{\sigma_{\omega}^2}X_*^2\, \bar{\tau}_*\sum_{k\in\mathcal{I}_{\mathrm{cert}}} \mu_{n_{i_k}(t_k^a)}\| {V}^{-1}_{n_{i_k}(t_k^a)}\|1_{\mathcal{E}^{\zeta}_k}.
\end{align*}
Let $\bar T_i$ denote the set of switching times at which the system starts operating in mode $i$. Then,
\begin{align*}
\sum_{k\in\mathcal{I}_{\mathrm{cert}}} \mu_{n_{i_k}(t_k^a)}\| {V}^{-1}_{n_{i_k}(t_k^a)}\|1_{\mathcal{E}^{\zeta}_k}
  \leq  \sum_{i\in \mathcal{M}}\sum_{\bar{t}_i\in \bar{T}_i} \mu_{n_{i}(\bar{t}_i)}\|{V}^{-1}_{n_{i}(\bar{t}_i)}\|\leq \bar{D}_*\sum_{i\in \mathcal{M}}\sum_{\bar{t}_i\in \bar{T}_i} (n_i(\bar{t}_i)+\bar{c})^{-\frac{1}{2}+\zeta} \log \frac{t_{n_s-1}^a}{\delta}
\end{align*}
where the second inequality follows from Lemma~\ref{lem:useflboun}.

Proceeding as in the proof of Lemma~\ref{lem:R^{11}}, we convert the summation into a mode-wise summation. Applying~(\ref{eq:komakstg}), we obtain
\begin{align*}
    \sum_{i\in \mathcal{M}}\sum_{\bar{t}_i\in \bar{T}_i} (n_i(\bar{t}_i)+\bar{c})^{-\frac{1}{2}+\zeta} \leq \bar{D}_* (\log \frac{t_{n_s-1}^a}{\delta})\,\sum_{i\in \mathcal{M}} (n_i(t_{n_s-1}^a)+\bar{c}_i)\,\big)^{\frac{1}{2}+\zeta}.
\end{align*}

Next, by~(\ref{eq:komakstg}) and (\ref{eq:helposf}),
\begin{align*}
    \sum_{i\in \mathcal{M}} (n_i(t_{n_s-1}^a)+\bar{c}_i)\,\big)^{\frac{1}{2}+\zeta}\leq |\mathcal{M}|^{\frac{1}{2}-\zeta}(t_{n_s-1}^a+|\mathcal{M}|\bar{c}_*\big)^{\frac{1}{2}+\zeta}.
\end{align*}
Combining the above inequalities yields
\begin{align*}
   \sum_{k\in\mathcal{I}_{\mathrm{cert}}}R^{(14)}_{k,k+1}\leq   4\kappa_*^2 \bar{D}_*\frac{\nu_*}{\sigma_{\omega}^2}X_*^2\bar{\tau}_* \log \frac{t_{n_s-1}^a}{\delta} |\mathcal{M}|^{\frac{1}{2}-\zeta} (t_{n_s-1}^a+|\mathcal{M}|\bar{c}_*)^{\frac{1}{2}+\zeta}. 
\end{align*}
Finally, since $t_{n_s-1}^a\leq \bar{\tau}_* n_s$, taking expectations gives
\begin{align*}
    \mathbb{E}\Big[\sum_{k\in\mathcal{I}_{\mathrm{cert}}}R^{(14)}_{k,k+1}\big]\lesssim\mathcal{O}\big(|\mathcal{M}|^{1-\gamma}n_s^{\gamma}(\log \frac{n_s}{\delta})^2\big). 
\end{align*}
\end{proof}

\begin{lemma}
   The following statement holds
    \begin{align*}
           \mathbb{E}\Big[\sum_{k\in\mathcal{I}_{\mathrm{cert}}} R^{(2)}_{k,k+1}\big]\lesssim \mathcal{O}\big(|\mathcal{M}|^{1-\gamma}n_s^\gamma \log \frac{n_s}{\delta}\big)
    +
    \mathcal{O}\big(|\mathcal{M}|(\log \frac{n_s}{\delta})^{\frac{2}{1-2\zeta}}\big).
    \end{align*}
\end{lemma}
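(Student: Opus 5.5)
The plan is to write $R^{(2)}_{k,k+1}$ as $J_*^{i_k}$ times the dwell-time overshoot, and then control the overshoot with Theorem~\ref{Thm:dwellTimeError} and Corollary~\ref{cor:LSBound}. Since $J_*^{i_k}$ is deterministic and bounded by $\nu_*:=\max_i\nu_i$ (Assumption~\ref{Ass_2}), we have
\begin{align*}
R^{(2)}_{k,k+1}\le \nu_*\,\big(\tau^a_{k,k+1}-\tau^*_{k,k+1}\big)^+\,1_{\mathcal{E}^\zeta_k}.
\end{align*}
The certified epochs then split according to whether the next mode $i_{k+1}$ uses a learned policy (the (L,L) case) or the fallback policy (the (L,F) case).

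In the (L,L) case, Theorem~\ref{Thm:dwellTimeError} gives the overshoot bound $\bar C_0\chi^{i_{k+1}}_{t_k^a}+\bar C_1\chi^{i_k}_{t_k^a}+\bar C_2(\chi^{i_k}_{t_k^a})^2$. On $\mathcal{E}^\zeta_k$, Lemma~\ref{lem:useflboun} yields
\begin{align*}
\chi^{i}_{t_k^a}\le \beta_*\,(n_i(t_k^a)+\bar c_i)^{-\frac12+\zeta}\log\frac{t_k^a}{\delta},
\end{align*}
which is at most a constant because $\zeta<1/2$ and the certification threshold holds. Hence the quadratic term is absorbed into the linear one. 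For the $\chi^{i_k}$ terms, I sum mode-wise over the start times $\bar T_i$ exactly as in (\ref{eq:komakstg}), (\ref{eq:helposf}) and (\ref{eq:komak0191}), then use $t^a_{n_s-1}\le\bar\tau_* n_s$ (Lemma~\ref{thm:minimum_average_dwell}). This gives $\mathcal{O}(|\mathcal{M}|^{1-\gamma}n_s^{\gamma}\log\frac{n_s}{\delta})$.

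The $\chi^{i_{k+1}}_{t_k^a}$ term is the main obstacle. It is evaluated at the count $n_{i_{k+1}}(t_k^a)$, which does not grow along the sequence of epochs indexed by $k$. I would handle it by noting that $n_{i_{k+1}}(t_k^a)=n_{i_{k+1}}(t_{k+1}^a)$, since mode $i_{k+1}$ is not active during epoch $k$. Thus $\chi^{i_{k+1}}_{t_k^a}$ equals, up to the $\log(t/\delta)$ factor, the bound associated with the start of epoch $k+1$ in mode $i_{k+1}$. Reindexing $k\mapsto k+1$ maps this sum injectively into the same mode-wise sum over $\bar T_i$, so it has the same order $\mathcal{O}(|\mathcal{M}|^{1-\gamma}n_s^{\gamma}\log\frac{n_s}{\delta})$.

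In the (L,F) case, Corollary~\ref{cor:LSBound} adds the term $\tilde C_0\big(\ln\bar\lambda(P_{K_0^{i_{k+1}}})-\ln\bar\lambda(P(\Theta_*^{i_{k+1}}))\big)$, which is a bounded constant because both matrices lie between $\alpha_0 I$ and a problem-dependent multiple of $I$. This case occurs only when $i_{k+1}$ is in fallback mode, which by (\ref{eq:switchtresh}) and Lemma~\ref{lem:fallback_steps} happens at most $(\log\frac{2d_x\bar\tau_*n_s}{\delta})^{\frac{2}{1-2\zeta}}$ times per mode. The total contribution is therefore $\mathcal{O}(|\mathcal{M}|(\log\frac{n_s}{\delta})^{\frac{2}{1-2\zeta}})$; its remaining $\chi^{i_k}$ terms are already covered above. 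Summing the two cases and taking expectations, where the bounds are deterministic on the good event, gives the claimed bound.
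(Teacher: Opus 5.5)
Your proposal is correct and takes essentially the same route as the paper: the (L,L)/(L,F) split, Theorem~\ref{Thm:dwellTimeError} and Corollary~\ref{cor:LSBound} to bound the dwell-time overshoot, the mode-wise H\"older summation via (\ref{eq:komakstg})--(\ref{eq:komak0191}) with $t^a_{n_s-1}\le\bar\tau_* n_s$, and a logarithmic per-mode count of epochs followed by a fallback epoch. The index shift you justify through $n_{i_{k+1}}(t_k^a)=n_{i_{k+1}}(t_{k+1}^a)$ is exactly what the paper does without comment when it rewrites $\chi^{i_{k+1}}_{t_k^a}$ as $\chi^{i_{k+1}}_{t_{k+1}^a}$; the paper likewise keeps the last term, $\chi^{i_{n_s}}_{t^a_{n_s-1}}$, as an $O(1)$ boundary term.
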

\begin{proof}

To prove the claim, we consider two possible cases depending on whether the controller applied after switching to mode $i_{k+1}$ is learning-based or the fallback controller, namely, the $(L,L)$ and $(L,F)$ scenarios. Accordingly, we partition the set of switching epochs as
\begin{align*}
    \mathcal{K}_{\mathrm{L}}
    &:=\{k \in \mathcal{I}_{\mathrm{cert}} :\text{the controllers applied in modes }i_{k+1}
    \text{ is learning-based}\},\\
    \mathcal{K}_{\mathrm{F}}
    &:=\{k \in \mathcal{I}_{\mathrm{cert}}:\text{the controller applied in mode }i_{k+1}
    \text{ is the fallback controller}\}.
\end{align*}
Then,
\begin{align*}
   \sum_{k\in\mathcal{I}_{\mathrm{cert}}} R^{(2)}_{k,k+1}
    =
    \sum_{k\in\mathcal{K}_{\mathrm{L}}}R^{(2)}_{k,k+1}
    +
    \sum_{k\in\mathcal{K}_{\mathrm{F}}}R^{(2)}_{k,k+1}.
\end{align*}

If $k\in  \mathcal{K}_{\mathrm{L}}$ then
    \begin{align*}
      R^{(2)}_{k,k+1}=&\sum_{t=t_k^a+\tau_{k,k+1}^*}^{t_{k+1}^a-1}J_*^{i_k}\leq J_*^{i_k} (t_{k+1}^a-1-t_{k}^a-\tau_{k,k+1}^*+1)=J_*^{i_k} (\tau^a_{k,k+1}-\tau_{k,k+1}^*)\\
      \leq &J_*^{i_k}\big(\bar{C}_0\chi_{t_k^a}^{i_{k+1}}+\bar{C}_1\chi^{i_k}_{t_k^a}+\bar{C}_2\big(\chi^{i_k}_{t_k^a}\big)^2\big)
    \end{align*}
    where the last inequality follows from Theorem~\ref{Thm:dwellTimeError}.

Summing over all switching epochs yields
\begin{align}
     \nonumber \sum_{k\in\mathcal{K}_L} R^{(2)}_{k,k+1}&\leq \sum_{k=0}^{n_s-1} J_*^{i_k}\big(\bar{C}_0\chi_{t_k^a}^{i_{k+1}}+\bar{C}_1\chi^{i_k}_{t_k^a}+\bar{C}_2\big(\chi^{i_k}_{t_k^a}\big)^2\big)\\
    \nonumber   &\leq  \bar{J}_*\big(\bar{C}_1\chi^{i_0}_{t_0^a}+\bar{C}_2(\chi^{i_0}_{t_0^a})^2+\bar{C}_0\chi_{t_{n_s-1}^a}^{i_{n_s}}\big) +\bar{J}_*\sum_{k=1}^{n_s-1} \big(\bar{C}_0\chi_{t_k^a}^{i_{k}}+\bar{C}_1\chi^{i_k}_{t_k^a}+\bar{C}_2\big(\chi^{i_k}_{t_k^a}\big)^2\big) \\    &\leq\bar{J}_*\big(\bar{C}_1\chi^{i_0}_{t_0^a}+\bar{C}_2(\chi^{i_0}_{t_0^a})^2+\bar{C}_0\chi_{t_{n_s-1}^a}^{i_{n_s}}\big)+ \bar{J}_* \sum_{i\in \mathcal{M}}\sum_{\bar{t}_i\in \bar{T}_i}\big((\bar{C}_0+\bar{C}_1)\chi_{\bar{t}_i}^{i}++\bar{C}_2\big(\chi^{i}_{\bar{t}_i}\big)^2\big)\label{eq:myref1}
\end{align}
 where $\bar{J}_*=\max_i J_*^i$. 

 By the definition of $\chi_t^i$ and (\ref{eq:juststab02}) in the proof of Theorem~\ref{Stability_thm17}, we have  $\chi^{i_k}_{t_k^a}\leq \alpha_0^*/4$ where $\alpha_0^*=\max_i \alpha_0^i/4$.
 
Therefore,
 \begin{align*}   \bar{J}_*\big(\bar{C}_1\chi^{i_0}_{t_0^a}+\bar{C}_2(\chi^{i_0}_{t_0^a})^2+\bar{C}_0\chi_{t_{n_s-1}^a}^{i_{n_s}}\big)\leq \bar{J}_*\bigg(\bar{C}_0+\bar{C}_1+\frac{\alpha^*_0}{4}\bar{C}_2\bigg) \frac{{\alpha_0^*}}{4}
 \end{align*}
which is a constant independent of $n_s$.

It remains to bound the second term in~(\ref{eq:myref1}). By Lemma~\ref{lem:useflboun}, $\sum_{\bar{t}_i\in \bar{T}_i}\big(\chi^{i}_{\bar{t}_i}\big)^2$ is of order $(n_i(t_{n_s-1}^a)+\bar{c})^{2\zeta}$, whereas $\sum_{\bar{t}_i\in \bar{T}_i} \chi^{i}_{\bar{t}_i}$, is of order $(n_i(t_{n_s-1}^a)+\bar{c})^{1/2+\zeta}$ and therefore dominates the former. Hence, we can write
     \begin{align*}
       \sum_{i\in \mathcal{M}}\sum_{\bar{t}_i\in \bar{T}_i}\big((\bar{C}_0+\bar{C}_1)\chi_{\bar{t}_i}^{i}++\bar{C}_2\big(\chi^{i}_{\bar{t}_i}\big)^2\big)&\leq \sum_{i\in \mathcal{M}}\sum_{\bar{t}_i\in \bar{T}_i}(\bar{C}_0+\bar{C}_1+\bar{C}_2)\chi_{\bar{t}_i}^{i}\\
       &\leq (\bar{C}_0+\bar{C}_1+\bar{C}_2) \beta_*\mathcal{M}^{\frac{1}{2}-\zeta}\big(t_{n_s-1}^a+|\mathcal{M}|\bar{c}_*)^{\frac{1}{2}+\zeta}\log \frac{t_{n_s-1}^a}{\delta}
\end{align*}
  where the last inequality follows from(\ref{eq:chibar}).   

For $k\in\mathcal{K}_{\mathrm{F}}$, Corollary~\ref{cor:LSBound} gives
     \begin{align*}
         R^{(2)}_{k,k+1}\leq & J_*^{i_k}\big(\tilde{C}_0\Big(\ln \bar{\lambda}\big(P_{K_0^{i_{k+1}}}\big)
        -\ln\bar{\lambda}\big(P(\Theta_*^{i_{k+1}})\big)\Big) +\tilde{C}_1\chi^{i_k}_{t_k^a}
        +\tilde{C}_2\big(\chi^{i_k}_{t_k^a}\big)^2 \big).
    \end{align*}
    In this scenario, the number of such switching epochs is logarithmically bounded. This follows directly from Lemma~\ref{lem:fallback_steps}, which shows that, for each mode, the total number of time steps during which the system operates with the fallback controller is logarithmically bounded. Consequently, the number of switches associated with such scenarios is also logarithmically bounded. Therefore,
\begin{align*}
     \sum_{k\in\mathcal{K}_{\mathrm{F}}}R^{(2)}_{k,k+1} \leq \bar{J}_*\bigg(\tilde{C}_0 \ln \big(\frac{(1+\kappa_0^{*^2})\kappa_0^{*^2}\alpha_1^*}{\gamma_0^*}\big)+\tilde{C}_1 \frac{\alpha^*_0}{4}+(\frac{\alpha^*_0}{4})^2\tilde{C}_2\bigg)|\mathcal{M}|\log \big(\log \frac{2d_x \bar{\tau}_*n_s}{\delta}\big)^{\frac{2}{1-2\zeta}}
\end{align*}
where we applied Lemma \ref{lem:komaki} to upper-bound $\bar{\lambda}(P_{K_0^{i_{k+1}}})$.

Combining the bounds obtained for the two cases gives
\begin{align*}
    \mathbb{E}\left[
    \sum_{k=0}^{n_s-1}R^{(2)}_{k,k+1}
    \right]
    \lesssim
    \mathcal{O}\big(|\mathcal{M}|^{1-\gamma}n_s^\gamma \log \frac{n_s}{\delta}\big)
    +
    \mathcal{O}\big(|\mathcal{M}|(\log \frac{n_s}{\delta})^{\frac{2}{1-2\zeta}}\big)
\end{align*}
where we used the bound $t_{n_s-1}^a\leq \bar{\tau}_* n_s$ and took expectations to complete the proof.
\end{proof}

\begin{lemma}
    The following statement holds:

\begin{align*}
  \mathbb{E}\Big[ \sum_{k\in\mathcal{I}_{\mathrm{fallback}}} R_{k,k+1}^{(23)}\big]\lesssim  |\mathcal{M}|\log\big(\frac{n_s}{\delta}\big)^{\frac{2}{1-2\zeta}}.
\end{align*}
\end{lemma}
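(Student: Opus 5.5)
The idea is to condition on the epoch start, so that $R^{(23)}_{k,k+1}$ becomes a deterministic per-step quantity, and then to count fallback steps with Lemma~\ref{lem:fallback_steps}. Recall
\[
R^{(23)}_{k,k+1}=\sum_{t=t_k^a}^{t_{k+1}^a-1}\omega_{t+1}^\top P_{K_0^{i_k}}\omega_{t+1}1_{\mathcal{E}_k^{\zeta}} .
\]
For a fixed fallback epoch $k$, the matrix $P_{K_0^{i_k}}$ is deterministic, since it is the Lyapunov solution of the known fallback gain. The indicator $1_{\mathcal{E}_k^{\zeta}}$ is $\mathcal{F}_{t_k^a}$-measurable, as already used in the proof that $\mathbb{E}[R^{(13)}]=0$. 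I therefore apply the tower property with respect to $\mathcal{F}_t$ for each $t$ in the epoch. Assumption~\ref{Assumption 1}(1.2) then gives
\[
\mathbb{E}\big[\omega_{t+1}^\top P_{K_0^{i_k}}\omega_{t+1}1_{\mathcal{E}_k^{\zeta}}\big]\le \sigma_\omega^2\operatorname{tr}(P_{K_0^{i_k}}).
\]

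Next I bound $\operatorname{tr}(P_{K_0^{i_k}})$ uniformly in the mode. By Assumption~\ref{Ass_2}, $K_0^i$ is $(\kappa_0^i,\gamma_0^i)$-strongly stabilizing, so Lemma~\ref{lem:komaki} applied to the Lyapunov equation (\ref{eq:lyapunovEq}) yields
\[
\|P_{K_0^{i}}\|\le \alpha_1^*(1+\kappa_0^{*2})\kappa_0^{*2}/\gamma_0^*,
\]
exactly as in the proof of Lemma~\ref{lem:betaDef}. Consequently $\operatorname{tr}(P_{K_0^i})\le d_x\,\alpha_1^*(1+\kappa_0^{*2})\kappa_0^{*2}/\gamma_0^*$, which is a problem-dependent constant. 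The expected contribution of each fallback time step is then at most a constant $C_F$.

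It remains to count the total number of time steps that lie in fallback epochs, namely $\sum_{k\in\mathcal{I}_{\mathrm{fallback}}}\tau^a_{k,k+1}$. A fallback epoch in mode $i$ is, by definition, one whose starting count fails the threshold (\ref{eq:switchtresh}). The horizon satisfies $t\le \bar\tau_* n_s$ by Lemma~\ref{thm:minimum_average_dwell}, so on such an epoch the visit count obeys
\[
n_i(t_k^a)\le \Big(\log\frac{2d_x\bar\tau_*n_s}{\delta}\Big)^{\frac{2}{1-2\zeta}} .
\]
Since $n_i$ grows by $\tau^a_{k,k+1}\le\bar\tau_*$ during the epoch, the fallback steps in mode $i$ number at most this quantity plus $\bar\tau_*$. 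Summing over modes, as in Lemma~\ref{lem:fallback_steps}, gives a total of $|\mathcal{M}|\big((\log\frac{2d_x\bar\tau_*n_s}{\delta})^{\frac{2}{1-2\zeta}}+\bar\tau_*\big)$. Multiplying by $C_F$ and absorbing constants, including the $\log(2d_x\bar\tau_*)$ inside the logarithm, yields the claim.

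The main obstacle is the interchange between expectation and a random index set. Both $\mathcal{I}_{\mathrm{fallback}}$ and the epoch lengths are random. I would handle this in two steps. First, write the sum over all $t$ with the indicator that step $t$ belongs to a fallback epoch; this indicator is $\mathcal{F}_{t}$-measurable because the certification test and $\tau^a$ are fixed at $t_k^a$. Second, bound the count deterministically before taking expectations, since the counting bound above holds pathwise. The only other care point is the $\bar\tau_*$ overshoot at the end of the last fallback epoch, which costs only an additive $|\mathcal{M}|\bar\tau_*$.
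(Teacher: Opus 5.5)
Your proof is correct and follows the paper's argument, which multiplies a bounded per-step contribution by the logarithmic count of fallback steps from Lemma~\ref{lem:fallback_steps}. You also make explicit what the paper leaves implicit: the per-step bound $\sigma_\omega^2\operatorname{tr}(P_{K_0^{i}})\le \sigma_\omega^2 d_x\,\alpha_1^*(1+\kappa_0^{*2})\kappa_0^{*2}/\gamma_0^*$ via Lemma~\ref{lem:komaki}, the handling of the random index set, and the additive $|\mathcal{M}|\bar\tau_*$ overshoot from the last fallback epoch in each mode, which the paper's Lemma~\ref{lem:fallback_steps} glosses over. One care point: $t_k^a\le\bar\tau_* n_s$ relies on the epoch-length bound of Lemma~\ref{thm:minimum_average_dwell}, which holds only on the good events, so your ``pathwise'' count should keep the nested indicators $1_{\mathcal{E}_k^{\zeta}}$ (for the conditional-expectation step itself, nonnegativity of the summand lets you drop them).
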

\begin{proof}
The proof follows from the fact that the total number of time steps during which the fallback controller is applied is logarithmically bounded, as established in Lemma~\ref{lem:fallback_steps}.
\end{proof}

\begin{lemma}
The following statement holds:
    \begin{align*}
        \mathbb{E}\!\left[
\sum_{k=0}^{n_s-1}
R^{(4)}_{k,k+1}
\right]\lesssim \mathcal{O}\big(|\mathcal{M}|^{\zeta}\,t^{1-\zeta}\big)
    \end{align*}
\end{lemma}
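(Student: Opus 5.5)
The plan is to bound $R^{(4)}$, the cost of the injected exploration noise, by its conditional expectation and then sum mode by mode. Fix an epoch $k$ and a time $t\in[t_k^a,t_{k+1}^a)$. The covariance $\Gamma_t^{i_k}$ depends only on the visit count $n_{i_k}(t)$, so it is $\mathcal{F}_{t-1}$-measurable, and $\eta_t^{i_k}$ is independent of $\mathcal{F}_{t-1}$. The indicator satisfies $1_{\mathcal{E}_k^\zeta}\le 1$. By the tower property,
\begin{align*}
\mathbb{E}\big[\eta_t^{i_k\top}R^{i_k}\eta_t^{i_k}1_{\mathcal{E}_k^\zeta}\big]\le \mathbb{E}\big[\operatorname{tr}(R^{i_k}\Gamma_t^{i_k})\big]\le \alpha_1^* d_u^*\,\frac{2\bar{\kappa}_*^2\bar{p}^*\sigma_\omega^2}{(n_{i_k}(t)+\bar{c}_{i_k})^{\zeta}},
\end{align*}
where $\bar{p}^*:=\max_i\bar{p}^i$. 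This bound holds whether epoch $k$ is certified or fallback, because Algorithm~\ref{alg:OSL3} injects the same perturbation in both cases. That is why the sum over all $k$ in the statement is harmless.

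Next I reorganize the double sum over epochs and times within epochs into a sum over modes. Every time step $t$ at which mode $i$ is active increases $n_i(t)$ by exactly one. So, up to the final time $T:=t^a_{n_s}$, the mode-$i$ contributions form the sequence $\sum_{n=0}^{n_i(T)-1}(n+\bar{c}_i)^{-\zeta}$. Comparing with an integral bounds this by $\frac{1}{1-\zeta}(n_i(T)+\bar{c}_i)^{1-\zeta}$. Summing over $i\in\mathcal{M}$ and using concavity of $x\mapsto x^{1-\zeta}$ (Hölder/Jensen, as in (\ref{eq:helposf})) gives
\begin{align*}
\sum_{i\in\mathcal{M}}(n_i(T)+\bar{c}_i)^{1-\zeta}\le |\mathcal{M}|^{\zeta}\Big(\sum_{i}n_i(T)+|\mathcal{M}|\bar{c}_*\Big)^{1-\zeta}\le |\mathcal{M}|^{\zeta}\big(t+|\mathcal{M}|\bar{c}_*\big)^{1-\zeta},
\end{align*}
since the counts $n_i(T)$ sum to the total elapsed time $t$. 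Finally I absorb $|\mathcal{M}|\bar{c}_*$ into the constant, using $(a+b)^{1-\zeta}\le a^{1-\zeta}+b^{1-\zeta}$, to obtain $\mathcal{O}(|\mathcal{M}|^{\zeta}t^{1-\zeta})$. If a bound in terms of $n_s$ is wanted, Lemma~\ref{thm:minimum_average_dwell} gives $t\le\bar{\tau}_*n_s$.

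The main subtlety is the random horizon. Both the epoch boundaries and the total time are random, so I cannot simply exchange the expectation with the sum over a fixed index set. I will handle this by writing the sum as $\sum_{t\ge0}1\{t<T\}\,\eta_t^\top R\eta_t$. The indicator $1\{t<T\}$ is $\mathcal{F}_{t-1}$-measurable, because switching times are decided by the algorithm from past data together with the already-revealed next mode. Conditioning term by term then works exactly as above. The pathwise deterministic bound $T\le\bar{\tau}_*n_s$ makes the resulting sum finite and gives the final bound after taking expectations.
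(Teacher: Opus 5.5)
Your argument is essentially the paper's: bound each term by $\alpha_1^*\operatorname{tr}(\Gamma_t^{i_k})$, sum the $(n+\bar c_i)^{-\zeta}$ decay mode by mode to get order $n_i^{1-\zeta}$, apply H\"older across modes, and use $\sum_i n_i\le \bar\tau_* n_s$. Where you improve on the paper is the random horizon: the paper conditions on $n_i(t^a_{n_s-1})$, which depends on the injected noise because the dwell times are data-driven, while your predictable-indicator (Wald-type) argument with $1\{t<T\}$ justifies exchanging expectation and sum correctly; you can also avoid the additive $|\mathcal{M}|\bar c_*^{1-\zeta}$ term you absorb into the constant by using $\bar c_i\ge 1$ and comparing directly with $\sum_{n\le n_i}n^{-\zeta}$, which is what the paper's bound $\operatorname{tr}(\Gamma_k^i)\lesssim k^{-\zeta}$ implicitly does.
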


\begin{proof}
Since $\eta_t^i\sim\mathcal{N}(0,\Gamma_t^i)$, we have
\begin{align*}
\mathbb{E}\Big[\sum_{k=0}^{n_s-1} R^{(4)}_{k,k+1}\Big]
&=
\mathbb{E}\Big[\sum_{k=0}^{n_s-1}
\Big(\sum_{j=t_k^a}^{t_{k+1}^a}
{\eta_j^{i_k}}^\top R^{i_k}\eta_j^{i_k}\Big)
1_{\mathcal{E}^{\zeta}_k}\Big]\\
&\leq \alpha_1^*\mathbb{E}\Big[\sum_{i\in \mathcal{M}}
\sum_{k=1}^{n_i(t_{n_s-1}^a)}
{\eta_j^{i}}^\top\eta_j^{i}
\Big]\\
&= \alpha_1^*\sum_{i\in \mathcal{M}}
\mathbb{E}\!\left[
\mathbb{E}\!\left[
\sum_{k=1}^{n_i(t_{n_s-1}^a)}
{\eta_k^{i}}^\top \eta_k^{i}
\,\middle|\,
n_i(t_{n_s-1}^a)
\right]
\right]
\end{align*}
where the inequality follows from \(R^i\preceq\alpha_1^*I\).

Next, consider the definition of $\Gamma_k^i$, we have
\begin{align*}
\operatorname{tr}(\Gamma_k^i)
\lesssim \frac{1}{k^\zeta}.
\end{align*}
Therefore,
\begin{align*}
\mathbb{E}\!\left[
\sum_{k=1}^{n_i(t_{n_s-1}^a)}
{\eta_k^{i}}^\top \eta_k^{i}
\,\middle|\,
n_i(t_{n_s-1}^a)
\right]=\sum_{k=1}^{n_i(t_{n_s-1}^a)} \operatorname{tr}(\Gamma_k^{i})\lesssim (n_i(t_{n_s-1}^a))^{1-\zeta}.
\end{align*}

Consequently,
\begin{align*}
    \mathbb{E}\Big[\sum_{k=0}^{n_s-1} R^{(4)}_{k,k+1} 1_{\mathcal{E}^{\zeta}_k}\Big] \lesssim \sum_{i\in \mathcal{M}} \mathbb{E}\big[(n_i(t_{n_s-1}^a))^{1-\zeta}\big].
\end{align*}
For $\zeta\in(0,1/2)$, Hölder's inequality gives
\begin{align*}
    \sum_{i\in\mathcal{M}} (n_i(t_{n_s-1}^a))^{1-\zeta}
\leq
|\mathcal{M}|^{\zeta}
\left(\sum_{i\in\mathcal{M}} n_i(t_{n_s-1}^a)\right)^{1-\zeta}\lesssim |\mathcal{M}|^{\zeta} n_s^{1-\zeta} .
\end{align*}
where we used
\begin{align*}
    \sum_{i\in\mathcal{M}} n_i(t_{n_s-1}^a)\leq t_{n_s-1}^a\leq n_s\bar{\tau}_*.
\end{align*}
Combining the above inequalities yields
\begin{align*}
    \mathbb{E}\Big[\sum_{k=0}^{n_s-1} R^{(4)}_{k,k+1} 1_{\mathcal{E}^{\zeta}_k}\Big] \lesssim |\mathcal{M}|^{\zeta} n_s^{1-\zeta}.
\end{align*}
\end{proof}

\begin{proof}[{\bf Proof of Theorem \ref{thm:RegretBound}}]
Combining the bounds established in the preceding results, we obtain
\begin{align*}
    \mathbb{E}\big[\tilde{\mathcal{R}}_{\zeta}(\mathcal{I})\big]
    \lesssim
    \max \bigg\{
    &
    \mathcal{O}\big(
        |\mathcal{M}|^{\zeta} n_s^{1-\zeta}
    \big),
    \nonumber\\
    &
    \mathcal{O}\bigg(
        n_m\log\frac{n_s}{\delta}
        +
        |\mathcal{M}|
        \left(
        \log\frac{n_s}{\delta}
        \right)^{\frac{3-2\zeta}{1-2\zeta}}
        +
        |\mathcal{M}|^{1-\gamma}
        n_s^{\gamma}
        \log^2\frac{n_s}{\delta}
    \bigg)
    \bigg\},
\end{align*}
where $\gamma=\frac12+\zeta$.

To balance the dominant $n_s$-dependent terms in the above regret bound, we
choose $\zeta=\frac14$. Denoting the resulting regret by
$\mathcal{R}_*(\mathcal{I})$, the above bound becomes
\begin{align*}
    \mathbb{E}\big[\tilde{\mathcal{R}}_{*}(\mathcal{I})\big]
    \lesssim
    \mathcal{O}\left(
        n_m\log\frac{n_s}{\delta}
        +
        |\mathcal{M}|
        \log^5\frac{n_s}{\delta}
        +
        |\mathcal{M}|^{1/4}
        n_s^{3/4}
        \log^2\frac{n_s}{\delta}
    \right).
\end{align*}

By Lemma~\ref{lem:epseventprob}, the good event
$\mathcal{E}_{n_s-1}^{\zeta}$ holds with probability at least
$1-6|\mathcal{M}|\delta$. Therefore, the above bound on the truncated regret
implies the same upper bound on the original regret
$\mathbb{E}[\mathcal{R}_{*}(\mathcal{I})]$ with probability at least
$1-6|\mathcal{M}|\delta$, completing the proof.
\end{proof}

\end{document}